\documentclass{article}
\usepackage[utf8]{inputenc}


\usepackage{amsmath}       
\usepackage{amsthm}        
\usepackage{amssymb}       
\usepackage{braket}       
\usepackage{amsfonts}
\usepackage{calc}
\usepackage{comment}
\usepackage{mathtools}
\usepackage{bbm}            
\usepackage{enumerate}  
\usepackage{thm-restate}
\usepackage[numbers,sort]{natbib}
\usepackage{enumitem}
\usepackage{blindtext}
\usepackage[export]{adjustbox}

\makeatletter

\theoremstyle{plain} 

\newtheorem{theorem}{Theorem}[section]
\newtheorem{lemma}[theorem]{Lemma}
\newtheorem{corollary}[theorem]{Corollary}          
\newtheorem{proposition}[theorem]{Proposition}

\newtheorem*{theorem*}{Theorem}
\newtheorem*{lemma*}{Lemma}

\theoremstyle{definition} 
\newtheorem{definition}[theorem]{Definition}
\newtheorem{remark}[theorem]{Remark}

\newtheorem{notation}[theorem]{Notation}

\theoremstyle{remark}  

\newsavebox\ideabox



\newcommand\supp{\mathrm{supp}}
\newcommand\Tr{\mathrm{Tr}}

\newcommand\Var{\mathrm{Var}}

\usepackage{cancel}
\usepackage{xcolor}
\newcommand{\be}{\begin{equation}}
\newcommand{\ee}{\end{equation}}
\newcommand{\bea}{\begin{eqnarray}}
\newcommand{\eea}{\end{eqnarray}}
\newcommand{\bes}{\begin{equation*}}
\newcommand{\ees}{\end{equation*}}
\newcommand{\beas}{\begin{eqnarray*}}
\newcommand{\eeas}{\end{eqnarray*}}
\allowdisplaybreaks
\usepackage[pdftex,colorlinks=true,linkcolor=blue,citecolor=blue,urlcolor=black]{hyperref}

\usepackage{authblk}

\usepackage{geometry}
 \geometry{
 a4paper,
 total={135mm,240mm}
 }


\title{Testing quantum satisfiability}

\author[1,2]{Ashley Montanaro\thanks{ashley.montanaro@bristol.ac.uk}}
\author[3]{Changpeng Shao\thanks{changpeng.shao@amss.ac.cn}}
\author[1]{Dominic Verdon\thanks{dominic.verdon@bristol.ac.uk}}
\affil[1]{School of Mathematics, University of Bristol, Bristol BS8 1UG, United Kingdom}
\affil[2]{Phasecraft Ltd., Bristol BS1 4XE, United Kingdom}
\affil[3]{Academy of Mathematics and Systems Science, Chinese Academy of Sciences, Beijing 100190, China}

\date{\today}

\begin{document}
\maketitle

\begin{abstract}
Quantum $k$-SAT (the problem of determining whether a $k$-local Hamiltonian is frustration-free) is known to be QMA$_1$-complete for $k\ge 3$, and hence likely hard for quantum computers to solve. Building on a classical result of Alon and Shapira, we show that quantum $k$-SAT can be solved in randomised polynomial time given the `property testing' promise that the instance is either satisfiable (by any state) or far from satisfiable by a product state; by `far from satisfiable by a product state' we mean that $\epsilon n^k$ constraints must be removed before a product state solution exists, for some fixed $\epsilon>0$. The proof has two steps: we first show that for a satisfiable instance of quantum $k$-SAT, most subproblems on a constant number of qubits are satisfiable by a product state. We then show that for an instance of quantum $k$-SAT which is far from satisfiable by a product state, most subproblems are unsatisfiable by a product state. Given the promise, quantum $k$-SAT may therefore be solved by checking satisfiability by a product state on randomly chosen subsystems of constant size.
\end{abstract}

\section{Introduction}

\paragraph{Property testing.} In computer science, one often needs to determine whether a structure defined by a large quantity of data possesses a certain property which is \emph{global}, in the sense that it is defined with reference to all the data. Property testing algorithms aim to determine with high probability whether a structure possesses a global property by performing \emph{local} checks; that is, checking properties of randomly chosen small subsets of the data defining the structure. Since the time complexity of a local check is smaller than that of a global check, and one only inspects a small subset of the data, this approach, when feasible, often leads to algorithms with excellent time and query complexity. 

Of course, in general the structure may not possess the global property, but be close enough to possessing it that one is highly unlikely to perform a local check which witnesses this fact. For this reason property testing algorithms usually require a promise that the structure either possesses the property or is far from possessing the property, where distance from possessing the property is defined by a measure specific to the problem and is usually quantified by some constant $\epsilon$.

In~\cite{Alon2003}, Alon and Shapira showed that the NP-hard problem of classical $k$-SAT (satisfiability of Boolean functions on $n$ variables, where each function depends on exactly $k$ variables) is amenable to a property testing approach. Specifically, with the promise that an instance of classical $k$-SAT is either satisfiable or far from satisfiable, satisfiability may be determined in randomised polynomial time in $n$, by choosing constant-sized subsets of variables and checking satisfiability of those functions which depend only on variables in those subsets. Because of the promise, their result is limited to dense instances of $k$-SAT, where there are $\Omega(n^k)$ functions.

\paragraph{Testing quantum satisfiability.} In~\cite{Bravyi2011} Bravyi defined a quantum analogue of classical $k$-SAT, which we will here call quantum $k$-SAT. An instance of this problem is defined by $n$ Hilbert spaces of dimension $2$ (qubits) and, for every subset $s$ of $k$ qubits, a projector $\Pi_s$ on the Hilbert space of those qubits; we say that the instance is satisfiable  (``frustration-free'' in physics terminology) if there is a state of all $n$ qubits which is in the kernel of all of the projections. While classical $k$-SAT is NP-complete whenever $k \geq 3$, quantum $k$-SAT is QMA${}_1$-complete whenever $k \geq 3$, where QMA${}_1$ is the quantum analogue of the class MA with one-sided error. In the terminology of physics quantum $k$-SAT is precisely the problem of determining whether a $k$-local Hamiltonian on $n$ qubits is frustration free.

Since the problem seems very difficult in general, previous work on algorithms for quantum $k$-SAT has focused on tractable special cases. There are many interesting results on random quantum $k$-SAT, for instance~\cite{Laumann2010,Laumann2010a,Bravyi2010,Hsu2013}. It has been shown that quantum $2$-SAT can be solved in linear time~\cite{Beaudrap2016,Arad2018}; moreover, in this case there is always a satisfying state which is a product of one- and two-qubit states~\cite{Chen2011}. The case where there are relatively few nontrivial projectors all of rank 1 is also often tractable~\cite{Ambainis2012,Aldi2021}. 

In this work we also treat a special case: the dense case, where the interaction hypergraph (i.e. the $k$-uniform hypergraph on $n$ vertices whose edges are subsets of $k$ qubits whose associated projection is nontrivial) contains $\Omega(n^k)$ edges. We show that the results of~\cite{Alon2003} extend to the quantum setting. Given the promise that the instance is either satisfiable or far from satisfiable \emph{by a product state}, quantum $k$-SAT may be solved in randomised constant time in $n$ by checking satisfiability by a product state on constant-sized subsets of qubits. 

\paragraph{Some comments on our results.} 
Before stating our results formally, we will make some comments on their significance. Firstly, as we emphasised in the last paragraph, the promise is that the instance, if unsatisfiable, is far from satisfiable \emph{by a product state}. Obviously, this is a less stringent requirement than being far from satisfiable by any state. The reason we are able to weaken the promise in this way is that an instance of quantum $k$-SAT which is satisfiable by any state is locally satisfiable by a product state with high probability. This result (Theorem~\ref{thm:localsat}) is essentially a fact about quantum states which may be of independent interest; there are some connections with monogamy of entanglement~\cite{Koashi2004,Christandl2004,Brandao2011} and de Finetti theorems~\cite{Brandao2016, Brandao2013} which we discuss briefly in Remark~\ref{rem:entanglement}.

Secondly, our proofs are combinatorial in nature; we make no use of the norm. In particular, we do not require the polynomial lower bound on the ground state energy in the unsatisfiable case that is usually given as a promise in the definition of quantum $k$-SAT; our testing algorithm therefore solves a harder problem than quantum $k$-SAT as it is usually defined.

\paragraph{Connection with the quantum PCP conjecture.}
The quantum PCP (probabilistically checkable proofs) conjecture is a prominent open problem in quantum information theory~\cite{aharonov13}. The conjecture states that it is QMA-complete to approximate the ground energy of a quantum Hamiltonian up to constant relative error. That is, given a quantum Hamiltonian $H$ expressed as the sum of $m$ $k$-local terms, where $k$ and the norm of each term are $O(1)$, the conjecture states that there is some $\gamma > 0$ such that it is hard to approximate the lowest eigenvalue of $H$ up to additive error $\gamma m$.

The quantum PCP conjecture is believed to be very challenging to solve. An interesting intermediate step towards this conjecture that was recently proven by Anshu, Breuckmann and Nirkhe is the NLTS (no low-energy trivial states) theorem~\cite{anshu23}. This result states that there exist local quantum Hamiltonians on $n$ qubits, and made up of $O(n)$ terms, such that any state prepared by a quantum circuit which has depth $o(\log n)$ must have energy at least $\epsilon n$, for some constant $\epsilon > 0$. Prior to the proof of the NLTS theorem, Anshu and Breuckmann proved a ``combinatorial'' NLTS theorem~\cite{anshu22}:

\begin{theorem}[Combinatorial NLTS~\cite{anshu22}]
    There exists a constant $\epsilon > 0$ and an explicit family of $O(1)$-local satisfiable Hamiltonians $\{H_n\}$, where each Hamiltonian $H_n$ acts on $n$ qubits and is a sum of $m=\Theta(n)$ local terms $H^{(i)}_n$, such that for any family of states $\{\ket{\psi_n}\}$ satisfying
    \[ \frac{1}{m} \left|\{i: \braket{\psi_n|H^{(i)}_n|\psi_n} > 0\} \right| \le \epsilon, \]
    the depth of any quantum circuit preparing $\ket{\psi_n}$ is $\Omega(\log n)$.
\end{theorem}

Thus the combinatorial NLTS theorem is similar to standard NLTS, but is expressed in terms of the number of violated terms, rather than the total energy. One could additionally define a combinatorial version of the quantum PCP conjecture: that it is QMA$_1$-complete to distinguish, for a given local Hamiltonian $H$ made up of $m$ terms $H^{(i)}$, between the two cases a) that $H$ is satisfiable; and b) that there exists no state $\ket{\psi}$ such that $\left|\{i: \braket{\psi|H^{(i)}|\psi} > 0\} \right| \le \epsilon m$.

Our result shows that there can be no combinatorial quantum PCP theorem in the ``dense'' case where $m=\Theta(n^k)$. This is because our algorithm efficiently distinguishes between Hamiltonians that are satisfiable, and Hamiltonians where one must remove $\epsilon n^k$ terms to make the Hamiltonian satisfiable by a product state. If there exists no state $\ket{\psi}$ such that $\left|\{i: \braket{\psi|H^{(i)}|\psi} = 0\} \right| \ge (1-\epsilon)m$, then in particular there is no product state $\ket{\psi}$ such that $\left|\{i: \braket{\psi|H^{(i)}|\psi} = 0\} \right| \ge (1-\epsilon)m$, and this constraint is equivalent to it being necessary to remove $\epsilon m$ terms to make the Hamiltonian satisfiable by a product state.

It is not obvious to us whether our results could also be used to rule out dense versions of the combinatorial NLTS theorem. This is because we show that satisfiability of dense quantum $k$-SAT instances can be checked by considering only product states on a constant-sized subset of qubits, but this does not seem to automatically imply the existence of an easily preparable state satisfying the entire instance.

\subsection{Results}\label{sec:results}

We will first state the three basic definitions we will use in this work. Throughout we write $\sigma_k(S)$ for the set of $k$-element subsets of a set $S$ and $[n]$ for the set $\{1, \dots, n\}$. For any subset $x \in \sigma_k(S)$ we use the notation $\overline{x}:=S-x$ for the complement.
\begin{definition}[Quantum $k$-SAT]\label{def:qsat}
An instance of quantum $k$-SAT (without a lower bound on the ground state energy) is defined by the following data. 
\begin{itemize}
\item A number of qubits $n$. 
    \item For every $k$-element subset $s \in \sigma_k([n])$, a projector $\Pi_s$ on $(\mathbb{C}^2)^{\otimes n}$ which acts nontrivially only on the qubits in $s$. 
\end{itemize}
We write the instance as  $([n],\{\Pi_s\})$. The problem is to determine whether there exists some state $\ket{x} \in (\mathbb{C}^2)^{\otimes n}$ such that the following equation holds:
    \begin{equation}\label{eq:satisfying}
    \sum_{s \in \sigma_k([n])} \Pi_s \ket{x} = 0.
    \end{equation}
If such a state exists we say that the instance is \emph{satisfiable}; if it does not we say that the instance is \emph{unsatisfiable}. If there is a product state $\ket{\phi}$ satisfying~\eqref{eq:satisfying} we say that the instance is \emph{satisfiable by a product state}.
\end{definition}

\noindent
For precision reasons, we follow~\cite{Beaudrap2016} in requiring that the coefficients of the projectors are drawn from a finite-degree field extension $\mathbb{F}$ of the field of rational numbers. We suppose that $\mathbb{F}$ is specified as part of the problem input by a minimal polynomial $p \in \mathbb{Q}[x]$ for which $\mathbb{F} \cong \mathbb{Q}[x]/p$. We also assume a specification of how $\mathbb{F}$ embeds into $\mathbb{C}$. We furthermore require that there is some constant $K$ which bounds above the size of the specification of $\mathbb{F}$ and the size of the specification of the coefficients of the projectors. For more on this framework, see~\cite{Beaudrap2016}. In Appendix A.2 of that work,  the time complexity of arithmetic operations is analysed; all we will need here is the fact that arithmetic operations can be performed in constant time in $n$.

\noindent
The usual definition of quantum $k$-SAT includes a promise that the ground-state energy has a polynomial lower bound if the instance is unsatisfiable~\cite[\S{}1]{Bravyi2011}; Definition~\ref{def:qsat} is therefore a strictly harder problem.

\begin{definition}[$\epsilon$-far from satisfiable by a product state]\label{def:efar}
Let $\Sigma \subset \sigma_k([n])$ be such that, for some product state $\ket{\phi} \in (\mathbb{C}^2)^{\otimes n}$:
\begin{equation}\label{eq:efar}
\sum_{s \in (\sigma_k([n]) - \Sigma)} \Pi_s \ket{\phi}  =0.
\end{equation}
We say that an instance $([n],\{\Pi_s\})$ of quantum $k$-SAT is \emph{$\epsilon$-far from satisfiable by a product state} if this implies that $|\Sigma| \geq \epsilon n^k$.
\end{definition}
\begin{remark}
Our definition of $\epsilon$-far differs slightly from that in~\cite{Alon2003}; whereas they remove individual clauses (which in our case correspond to rank-one projections) we remove the whole projection for a given $k$-subset. This does not make much difference to the overall theory, since defining $\epsilon$-far in terms of the number of rank-one projections removed would only alter $\epsilon$ by a constant factor.
\end{remark}
\begin{definition}[Local satisfiability]
Let $([n],\{\Pi_s\}_{s \in \sigma_k([n])})$ be an instance of quantum $k$-SAT. Let $C \subset [n]$.

We say that the \emph{restriction} of the instance to $C$ is the instance of quantum $k$-SAT defined by the data $(C,\{\Pi_s\}_{s \in \sigma_k(C)})$. We say that $([n],\{\Pi_s\}_{s \in \sigma_k([n])})$ is \emph{locally satisfiable by a product state at $C$} if $(C, \{\Pi_s\}_{s \in \sigma_k(C)})$ is satisfiable by a product state.
\end{definition}
\noindent
Our result is based on the following two theorems, whose proofs are given in the remainder of the paper.

\begin{theorem}\label{thm:localsat}
Let $([n],\{\Pi_s\})$ be a satisfiable instance of quantum $k$-SAT, and let $c \in \mathbb{N}$ be fixed, where $c \geq 3$. Let $C \in \sigma_c([n])$ be a subset chosen uniformly at random. The probability that the instance is locally satisfiable by a product state at $C$ is greater than $p \in (0,1)$ whenever $n > \Psi(p,c)$, where 
$$\Psi(p,c) := \max\left( \frac{c^8(2^{3c+20})}{(1-\sqrt{p})^5}~,~\frac{2^{c+2}(c+1)}{-\ln(p)}\right).$$
Equivalently, let $\ket{\psi} \in (\mathbb{C}^{2})^{\otimes n}$ be any state and let $C \in \sigma_c([n])$ be a subset chosen uniformly at random. The probability that the subspace $\supp(\Tr_{\overline{C}}(\ket{\psi}\bra{\psi})) \subseteq (\mathbb{C}^2)^{\otimes c}$ contains a product state is greater than $p \in (0,1)$ whenever $n > \Psi(p,c)$.
\end{theorem}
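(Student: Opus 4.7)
The plan is to prove the (stronger) second formulation: for any $\ket\psi\in(\mathbb{C}^2)^{\otimes n}$ and a uniformly random $C\in\sigma_c([n])$, the support $V_C:=\supp(\Tr_{\overline C}(\ket\psi\bra\psi))\subseteq(\mathbb{C}^2)^{\otimes c}$ contains a product state with probability at least $p$. The first formulation follows: if $\ket\psi$ is any satisfying state of the instance, then $V_C$ lies inside $\bigcap_{s\in\sigma_k(C)}\ker(\Pi_s)\otimes I_{C\setminus s}$, so a product state in $V_C$ is automatically a product-state solution of the restricted instance $(C,\{\Pi_s\}_{s\in\sigma_k(C)})$. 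Throughout, write $r_C:=\dim V_C$, which equals the Schmidt rank of $\ket\psi$ across the bipartition $C\,|\,\overline C$.

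The first step is a purely algebraic sufficient condition. The product states form the Segre variety $\Sigma\subset\mathbb{CP}^{2^c-1}$, which has complex projective dimension $c$. By the standard intersection principle (any two projective subvarieties of $\mathbb{CP}^N$ whose dimensions sum to at least $N$ meet non-trivially), any projective linear subspace of dimension $\geq 2^c-1-c$ meets $\Sigma$. Translating back, whenever $r_C\geq 2^c-c$, the subspace $V_C$ is guaranteed to contain a product state. Call such a $C$ \emph{high-rank}; otherwise \emph{low-rank}. It thus suffices to bound the probability that $C$ is simultaneously low-rank and bad (i.e.\ $V_C$ contains no product state).

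The second step is a monogamy/counting argument for low-rank bad subsets. The key toy case is $c=2$: if $\{i,j\}$ is bad then $r_{ij}=1$ with the (unique) Schmidt vector $\ket u_{ij}$ entangled, forcing $\ket\psi=\ket u_{ij}\ket v_{\overline{\{i,j\}}}$; but then for any $k\notin\{i,j\}$ the reduced state $\rho_{ik}=\Tr_j(\ket u\bra u)\otimes\rho_k^{(v)}$ has rank $\geq 2$ because the single-qubit marginal of an entangled two-qubit pure state is full rank, so $\{i,k\}$ is high-rank and hence \emph{good}. Consequently each qubit lies in at most one bad pair and the bad pairs form a matching of size $\leq n/2$, yielding a bad probability $\leq 1/(n-1)$. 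For general $c$ the plan is to generalize this argument: if $C$ is bad and low-rank, then extending $C$ by a fresh qubit in $\overline C$ typically strictly increases the Schmidt rank, pushing the enlarged set into the high-rank (good) regime. Packaging this combinatorially---after accounting for the at most $2^{O(c)}$ possibilities for the local Schmidt structure at $C$---should give a bound on the number of bad $C$'s of the form $2^{O(c)}n^{c-1}$, equivalently a bad probability of $O(2^{O(c)}/n)$, which delivers the stated threshold $\Psi(p,c)=O(2^{6c}/\log(1/p))$.

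The main obstacle is the quantitative monogamy step. The $c=2$ argument works because the algebraic threshold $2^c-c=2$ coincides with the rank forced by the single-qubit marginal of an entangled pair, so that one extension suffices to certify goodness. For $c\geq 3$ the gap between "any bad $C$" and "rank at least $2^c-c$" is wider, the space of low-rank bad subspaces $V_C\subset(\mathbb{C}^2)^{\otimes c}$ is richer (e.g.\ rank-$r$ supports not meeting $\Sigma$ exist for every $1\leq r\leq 2^c-c-1$), and a single-qubit extension may not be enough to force high rank. The technical heart of the proof must therefore control, uniformly in the global $\ket\psi$, the combinatorial pattern of Schmidt ranks across the random cut $C\,|\,\overline C$ while ruling out the bad algebraic configurations; this is where the exponent $6c$ in the threshold is expected to enter.
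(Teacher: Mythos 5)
There is a genuine gap: the argument for $c\geq 3$, which is the actual content of the theorem, is not supplied. Your first step (a subspace of $(\mathbb{C}^2)^{\otimes c}$ of dimension $\geq 2^c-c$ must meet the Segre variety, by the projective dimension theorem) is correct, and your $c=2$ analysis is correct and essentially coincides with Proposition~\ref{prop:c=2} of the paper --- but the paper treats $c=2$ separately precisely because it is much easier, and the theorem as stated concerns $c\geq 3$. For $c\geq 3$ you only describe what a generalisation ``should'' give, and you yourself identify the quantitative monogamy step as ``the technical heart of the proof'' without providing it. As sketched, the mechanism does not obviously close: adjoining a fresh qubit to a bad $c$-set $C$ produces a $(c+1)$-set, which is not the object being sampled, and the high-rank threshold $2^{c+1}-(c+1)$ for the enlarged set is strictly larger, so ``typically strictly increases the Schmidt rank'' does not by itself certify that any $c$-subset is good; nor is any uniform bound given on the number of low-rank bad $c$-subsets. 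The claim that the count is $2^{O(c)}n^{c-1}$ is asserted, not derived.

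For comparison, the paper does not count bad $c$-sets at all. Its quantum input is Lemma~\ref{lem:suppsprod} (derived from the Augusiak et al.\ result that the orthogonal complement of a completely entangled subspace of $\mathbb{C}^2\otimes\mathbb{C}^m$ contains a product state): for each chosen qubit $x$ and \emph{every} bipartition $S_1\sqcup S_2$ of the remaining qubits, at least one side supports a product state splitting off $x$. This yields, after each of the $c-1$ first choices, a hypergraph of ``safe completions'' with edge density at least $1/2$, and the combinatorial lemmas on shadows, partial shadows and random-vertex degrees (Lemmas~\ref{lem:edgedensity}--\ref{lem:randomvertex}) control how this density degrades; the last qubit then lands in a safe edge with the claimed probability. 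If you want to pursue your counting route you would need a substitute for Lemma~\ref{lem:suppsprod} --- some uniform structural statement about how low-rank completely entangled supports on overlapping $c$-subsets constrain one another --- and nothing of that kind is currently in your write-up.
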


\begin{remark}
The theorem applies to the case where $c\geq 3$; a similar result holds for $c=2$, but the result for $c=2$ is much simpler and stronger and is treated separately in Proposition~\ref{prop:c=2}. The bound $\Psi(p,c)$ is far from optimal, since we made many approximations to simplify the proof; if required, the reader could easily improve the bound by tightening the analysis without changing the structure of the proof. However, we would be surprised if the exponential dependence on $c$ could be removed without a different approach.
\end{remark}
\noindent
For the second theorem, we assume that $n$ is large enough (but still finite). 

\begin{theorem}\label{thm:localunsat}
There exists a constant $c(k,\epsilon)$ (i.e. not depending on $n$) such that the following holds for large enough $n$. Let $([n],\{\Pi_s\})$ be any instance of quantum $k$-SAT which is $\epsilon$-far from satisfiable by a product state. Then, for a randomly chosen subset $C \in \sigma_{c(k,\epsilon)}([n])$, the instance is locally unsatisfiable by a product state at $C$ with probability $p > 0.75$.
\end{theorem}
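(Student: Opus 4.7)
The plan is to reduce Theorem~\ref{thm:localunsat} to the classical Alon--Shapira testing theorem (applied to a $k$-CSP over a finite alphabet) by discretizing the single-qubit state space. I argue by contrapositive: assume that for a uniformly random $C\in\sigma_c([n])$ the instance is locally product-satisfiable with probability at least $1/4$, where $c=c(k,\epsilon)$ is to be chosen, and deduce that the instance is not $\epsilon$-far from product-satisfiable.

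I first dispatch a ``many individually unsatisfiable projectors'' case. Let $U:=\{s\in\sigma_k([n]) : \Pi_s\text{ has no product-state satisfier}\}$. If $|U|\geq\tfrac{\epsilon}{2}n^k$, then by the standard fact that a dense $k$-uniform hypergraph contains an edge inside a random $c$-subset with probability $\to 1$ as $c\to\infty$, a random $C$ contains some $s\in U$ with probability $\geq 3/4$, which forces the local instance on $C$ to be product-unsatisfiable, contradicting the contrapositive assumption. In what follows I therefore assume $|U|<\tfrac{\epsilon}{2}n^k$, so almost every projector is individually satisfiable by a product state.

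Now fix small parameters $\delta,\eta>0$ with $\eta=\Theta(k\delta)$, pick a $\delta$-net $N_\delta\subset\mathbb{CP}^1$ of size $M=O(1/\delta^2)$, and associate to $([n],\{\Pi_s\})$ an auxiliary classical $k$-CSP $\phi^{\mathrm{cl}}$ on variables $[n]$ with alphabet $N_\delta$ and constraints $R_s(\sigma_{s_1},\ldots,\sigma_{s_k})=1$ iff $\|\Pi_s(\ket{\sigma_{s_1}}\otimes\cdots\otimes\ket{\sigma_{s_k}})\|\leq\eta$. The calibration $\eta=\Theta(k\delta)$ ensures the forward transfer: rounding any exact quantum product satisfier on $C$ qubit-wise to $N_\delta$ yields a net assignment satisfying every $R_s$ with $s\subseteq C$, because the rounding error in $\Pi_s$ is at most $k\delta\leq\eta$. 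Hence a random $c$-subset of $\phi^{\mathrm{cl}}$ is classically satisfiable with probability $\geq 1/4$, and applying the contrapositive of classical Alon--Shapira with alphabet size $M$ and a small parameter $\epsilon'\ll\epsilon$ (taking $c:=c_{\mathrm{cl}}(k,M,\epsilon')$) produces a net assignment $\sigma:[n]\to N_\delta$ that violates fewer than $\epsilon' n^k$ classical constraints. Setting $\ket{\phi}:=\bigotimes_i\ket{\sigma_i}$ gives $\|\Pi_s\ket{\phi}\|\leq\eta$ on all but fewer than $\epsilon' n^k$ of the projectors.

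The hard part, and the main obstacle, is upgrading this approximate ($\eta$-norm) satisfaction to the exact satisfaction required by Definition~\ref{def:efar}. My intended resolution is a local perturbation argument: for each $s\notin U$ with $\|\Pi_s\ket{\phi}\|\leq\eta$, the product-state satisfying variety of $\Pi_s$ inside $(\mathbb{CP}^1)^k$ is nonempty and, by a transversality argument applied to its algebraic description, lies within $O(\eta)$ of $\bigotimes_{i\in s}\ket{\sigma_i}$; a local perturbation of $\{\ket{\sigma_i}\}_{i\in s}$ of size $O(\eta)$ then exactly zeroes $\Pi_s$. The delicate step is to coordinate these local perturbations on overlapping supports into a single global product state without creating a cascade of more than $\epsilon n^k$ new violations; I would do this via a greedy procedure (processing constraints in a fixed order) combined with a union bound whose budget is controlled by $\epsilon'$. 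Taking $\epsilon'$ a small constant fraction of $\epsilon$ and $\delta,\eta$ small enough that $M$ and hence $c_{\mathrm{cl}}$ depend only on $k$ and $\epsilon$ then yields the claimed constant $c(k,\epsilon)$.
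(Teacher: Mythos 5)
Your proposal is essentially the $\delta$-net reduction to classical Alon--Shapira that the paper explicitly considers and rejects in the opening paragraph of Section~\ref{sec:localunsat}, and the step you yourself flag as ``the hard part'' is a genuine gap, not a deferred technicality. The problem is the passage from $\|\Pi_s\ket{\phi}\|\leq\eta$ to exact satisfaction. Since $\Pi_s$ is a projector, $\|\Pi_s\ket{\phi}\|\leq\eta$ only says that $\ket{\phi}|_s$ is $\eta$-close to the linear subspace $\ker\Pi_s$; it does not say that $\ket{\phi}|_s$ is close to the \emph{variety} of product states inside $\ker\Pi_s$. Your claimed ``transversality argument'' asserting that this variety lies within $O(\eta)$ of $\bigotimes_{i\in s}\ket{\sigma_i}$ is false in general: the relevant bound is a {\L}ojasiewicz-type inequality whose constants depend on the particular intersection of $\ker\Pi_s$ with the Segre variety, and these intersections can be arbitrarily ill-conditioned (near-tangential) for projectors specified to $\mathrm{poly}(n)$ precision. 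To make the rounding sound one needs $\delta$ (hence $\eta$, hence the alphabet size $M$ and the Alon--Shapira constant $c_{\mathrm{cl}}$) to depend on $n$ --- exactly the failure mode the paper describes; equivalently, one needs the ground-state-energy promise that Definition~\ref{def:qsat} deliberately omits. Separately, your ``greedy procedure combined with a union bound'' for coordinating the local perturbations is not an argument: each qubit participates in $\Theta(n^{k-1})$ constraints, so a single perturbation can invalidate a constant fraction of all constraints, and you give no mechanism to control the cascade.

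For comparison, the paper's proof avoids norms entirely and does not invoke Alon--Shapira as a black box; instead it adapts their proof combinatorially. Assignments are product \emph{subspaces}, qubits are classified as conflicting or heavy (``bad''), Lemma~\ref{lem:tree} shows a chain cannot pass through more than $5\cdot 4^{k-1}/\epsilon$ heavy qubits, Lemma~\ref{lem:enoughbad} shows $\epsilon$-farness forces at least $\epsilon n/5$ bad qubits relative to any assignment, and the continuum of possible single-qubit assignments is discretised not by a $\delta$-net but by Lemma~\ref{lem:prodcount}: the degree $m!$ of the Segre variety bounds the number of candidate values a qubit can take once the rest of a chain is eliminated, enabling a finite backtracking (elimination) procedure. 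If you want to salvage a net-based argument you would have to either reinstate the energy promise (proving a weaker theorem) or supply a uniform quantitative version of your transversality claim, which is precisely what is unavailable here.
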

\begin{remark}
The main weakness of Theorem~\ref{thm:localunsat} is that we do not have a good upper bound for the constant $c(k,\epsilon)$. Our proof in Section~\ref{sec:localunsat} does yield an upper bound, but it is extremely large (bigger than $(((3!)!)!\cdots)!$, where the chain of factorials has length $(5/\epsilon)4^{k-1}$). The bound has this form because the degree of the $m$-qubit Segre variety is the factorial $m!$ (Lemma~\ref{lem:prodcount}). It is an open question whether the upper bound can be made small enough for the testing result to be practically useful rather than merely theoretically interesting. Improvements on this scale are not unknown in classical property testing (see e.g. the discussion in~\cite[\S{} 1.1]{Alon2003}). 
\end{remark}
\begin{remark}
Together, Theorems~\ref{thm:localsat} and~\ref{thm:localunsat} imply that, for large enough $n$, an instance $([n],\{\Pi_s\})$ of quantum $k$-SAT cannot be both satisfiable and far from satisfiable by a product state.
\end{remark}
\noindent
The consequent result about testability of quantum $k$-SAT is as follows.
\begin{corollary}\label{cor:testing}
With a promise that the instance is either satisfiable or $\epsilon$-far from satisfiable by a product state, quantum $k$-SAT may be solved in randomised polynomial time in $n$.
\end{corollary}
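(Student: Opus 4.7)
The plan is to combine Theorems~\ref{thm:localsat} and~\ref{thm:localunsat} in the standard property-testing template: sample random subsystems of constant size, check local product-state satisfiability on each, and output the majority verdict. First I would fix $c := \max(c(k,\epsilon), 3)$, where $c(k,\epsilon)$ is the constant from Theorem~\ref{thm:localunsat}, and let $N_0 := \Psi(0.8, c)$ from Theorem~\ref{thm:localsat}; both are constants depending only on $k$ and $\epsilon$. For $n \le N_0$ the instance has constant size and can be decided exactly in $\mathrm{poly}(n)$ time by direct semi-algebraic decision, so the interesting case is $n > N_0$. The algorithm then repeats $t = O(1)$ trials: draw $C \in \sigma_c([n])$ uniformly, test whether the restriction $(C,\{\Pi_s\}_{s \in \sigma_k(C)})$ admits a product-state solution, and output ``satisfiable'' iff a majority of trials do.

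For correctness I need a two-sided gap between the two promise cases. Theorem~\ref{thm:localsat} applied with $p = 0.8$ handles the satisfiable side directly, giving per-trial success probability greater than $0.8$. For the $\epsilon$-far side I would first record the easy monotonicity observation that if $\bigotimes_{i \in C}\ket{\phi_i}$ is a product solution on $C$ then its restriction $\bigotimes_{i \in C_0}\ket{\phi_i}$ is a product solution on any $C_0 \subseteq C$, so local product-unsatisfiability propagates to supersets. Since a uniformly random $c(k,\epsilon)$-subset of a uniformly random size-$c$ subset of $[n]$ is itself uniform on $\sigma_{c(k,\epsilon)}([n])$, Theorem~\ref{thm:localunsat} gives per-trial probability less than $0.25$ of reporting ``locally satisfiable''. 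The gap exceeding $1/2$ then lets a Chernoff bound on $t = O(1)$ repetitions drive the overall error below any desired constant.

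For complexity per trial, each local test amounts to deciding whether a fixed linear subspace of $(\mathbb{C}^2)^{\otimes c}$ meets the $c$-qubit Segre variety, i.e., whether the polynomial system $\Pi_s\bigl(\bigotimes_i \ket{\phi_i}\bigr) = 0$ has a nonzero solution in $O(c)$ complex variables. The defining polynomials have constant degree but $\mathrm{poly}(n)$-bit coefficients, and for a fixed number of variables such feasibility problems are decidable in time polynomial in the bit-size of the input by standard quantifier-elimination algorithms (e.g.\ Renegar's). Sampling $C$ costs $O(\log n)$, and so each trial---and hence the overall algorithm---runs in randomised $\mathrm{poly}(n)$ time.

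The main obstacle, such as it is, is the pedantic verification that this constant-sized product-state feasibility check really runs in $\mathrm{poly}(n)$ time rather than merely constant time, since the input projectors carry $\mathrm{poly}(n)$ bits of precision; once that is granted, the rest of the argument is a routine assembly of the two main theorems into the property-testing framework.
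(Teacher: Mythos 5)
Your proposal is correct and follows essentially the same route as the paper: sample constant-size subsets uniformly at random, decide product-state satisfiability of each resulting constant-variable polynomial system in time polynomial in the $\mathrm{poly}(n)$-bit coefficients, and take a majority vote whose error is driven down by a Chernoff bound. The only cosmetic differences are that the paper carries out the local feasibility check via Gr\"obner bases and Hilbert's Nullstellensatz rather than quantifier elimination, and it simply takes the subset sizes in Theorems~\ref{thm:localsat} and~\ref{thm:localunsat} to coincide rather than invoking your (correct) monotonicity observation.
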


\begin{proof}
Let $([n],\{\Pi_s\})$ be an instance of quantum $k$-SAT, and let $c$ be some constant. Given a subset $C \in \sigma_{c}([n])$, we can check whether or not $([n],\{\Pi_s\})$ is locally satisfiable by a product state at $C$ in time polynomial in $n$ (and exponential in $c$) using Gr\"obner basis methods. Indeed, let $\{ \Pi_1, \ldots,\Pi_N \}$ be the set of nontrivial projectors on $C$. Let $\Pi_i = \sum_{j=1}^{r_i} \ket{a_{ij}} \bra{a_{ij}}$, where $r_i \leq 2^k$ is the rank of $\Pi_i$ and $\ket{a_{ij}} \in (\mathbb{C}^2)^{\otimes k}$. (Computing the $\ket{a_{ij}}$ comes down to Gaussian elimination, which can be done in constant time, since arithmetic operations on the coefficients can be performed in constant time.) Let $\ket{\phi} = \ket{\phi_1} \otimes \cdots \otimes \ket{\phi_c}$ be a unknown product state defined on $C$. We set
\[
\ket{\phi_i} = x_{i0} \ket{0} + x_{i1} \ket{1}, \quad i\in\{1,2,\ldots,c\}.
\]
Let $s_i \in \sigma_k(C)$ be the set of qubits on which $\Pi_i$ acts. For each $s_i$, we write $\ket{\phi_{s_i}}:= \bigotimes_{v \in s_i} \ket{\phi_{v}}$. Then~\eqref{eq:satisfying} implies the following equations:
\be
\label{poly-system-1}
\langle a_{ij}| \phi_{s_i} \rangle = 0, \quad i\in\{1,2,\ldots,N\}, j \in\{1,2,\ldots,r_i\}.
\ee
We also have that $|x_{i0}|^2+|x_{i1}|^2 = 1$. As we can rescale \eqref{poly-system-1} arbitrarily, we can relax this to $|x_{i0}|^2+|x_{i1}|^2 \neq 0$; equivalently, $x_{ij}\neq 0$ for at least one $j$. To ensure this we introduce some new variables $\{y_{ij}\}$ with the same index sets as the $\{x_{ij}\}$, and some new equations:
\be
\label{poly-system-2}
(x_{i0} y_{i0} - 1) (x_{i1} y_{i1} - 1) = 0, \quad i \in \{1,\ldots,c\}.
\ee
By introducing a new variable $t$ and viewing $x_{ij}=x_{ij}/t$, $y_{ij}=y_{ij}/t$, the above inhomogeneous polynomial system becomes an equivalent homogeneous polynomial system, where the coefficients of the homogeneous polynomials lie in the field $\mathbb{F}$.
In total we have $ c + \sum_{i=1}^N r_i$ equations; each equation has degree at most $k$ or $4$. We want to determine if there is a solution for this polynomial system over $\mathbb{C}$. 
It is known that this polynomial system has no zeros in $\mathbb{C}$ if and only if its Gr\"obner basis is $\{1\}$. This is Hilbert's Nullstellensatz, which is an EXPSPACE-complete problem~\cite[Chapter 21.7]{von2013modern}. By~\cite[Proposition 1]{BARDET201549}, an upper bound on the time complexity in this case is
\beas
&& D(c + \sum_{i=1}^N r_i)  \binom{4c + D }{D}^\omega 
\eeas
where $D\leq 1 + 3c + (k-1) \sum_{i=1}^N r_i $ is the maximal degree of the elements in the reduced Gr\"obner basis (which is obtained by \cite[Equation (1)]{BARDET201549}), and $\omega<2.373$ is the exponent of matrix multiplication. 
The algorithm (e.g., see \cite{lazard1983grobner} and \cite[Algorithm matrix-F5]{BARDET201549}) for Gr\"obner basis involves Gaussian elimination of some matrices, which are constructed from the coefficients of the polynomials in a certain way. Again, this Gaussian elimination comes down to arithmetic operations over the field $\mathbb{F}$, which can be done in constant time.

We can therefore straightforwardly check satisfiability of an instance $([n],\{\Pi_s\})$ of quantum $k$-SAT with the testing promise. Since we are only interested in scaling w.r.t. $n$, we may assume that $n$ is large enough that (by Theorem~\ref{thm:localsat}) if the instance is satisfiable then it will be locally satisfiable by a product state on $c(k,\epsilon)$-subsets with probability $p > 0.75$, and (by Theorem~\ref{thm:localunsat}) if the instance is $\epsilon$-far from satisfiable by a product state then it will be locally unsatisfiable by a product state on $c(k,\epsilon)$-subsets with probability $p>0.75$.

We choose subsets $\{C_{i} \in \sigma_{c(k,\epsilon)}([n])\}_{i=1}^m$ at random and check whether the instance is locally satisfiable by a product state on these subsets. If the majority are locally satisfiable by a product state we conclude that the instance is satisfiable; otherwise we conclude that the instance is $\epsilon$-far from satisfiable by a product state. By the Chernoff bound for the binomial distribution the probability that the conclusion is incorrect tends exponentially to zero as $m$ increases.
\end{proof}

\subsection{Data availability statement}

Data sharing not applicable to this article as no datasets were generated or analysed during the current study.

\subsection{Acknowledgements}
We thank Niel de Beaudrap and Aram Harrow for useful discussions. We also thank our anonymous referees whose suggestions and comments have greatly improved the presentation of these results. This project has received funding from the European Research Council (ERC) under the European Union’s Horizon 2020 research and innovation programme (grant agreement No. 817581). We acknowledge support from EPSRC grant EP/T001062/1.

\section{Proof of Theorem~\ref{thm:localsat}} \label{sec:localsat}

In this proof we will make use of some combinatorial lemmas, some of which are well-known. For the reader's convenience we prove all these lemmas, but to avoid a long digression we have relegated the proofs to Appendix~\ref{app:combinatorics}. 
We will present the lemmas as they arise in the order of the proof. 

We begin by recalling several facts about entangled subspaces. Recall that a subspace $L \subset \mathbb{C}^2 \otimes \mathbb{C}^m$ is \emph{completely entangled} if there is no product state in $L$. 

\begin{lemma}[{\cite[Lemmas 1 and 2]{Augusiak2011}, also follows from \cite[Prop. 1.4]{Parthasarathy2004}}]\label{lem:completelyent}
Let $L\subset \mathbb{C}^2 \otimes \mathbb{C}^m$ be a completely entangled subspace, for any $m \geq 2$. Then the orthogonal complement $L^{\perp} \subset \mathbb{C}^2 \otimes \mathbb{C}^m$ contains a product state. 
\end{lemma}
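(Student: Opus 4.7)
The plan is to establish the result in two steps: a Parthasarathy-style dimension bound $\dim L \leq m-1$, followed by a short duality argument producing an explicit product state in $L^\perp$.

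For any nonzero $\ket{a} \in \mathbb{C}^2$ I would introduce the partial contraction $\pi_a\colon \mathbb{C}^2 \otimes \mathbb{C}^m \to \mathbb{C}^m$ defined on elementary tensors by $\pi_a(\ket{u} \otimes \ket{w}) = \langle a|u\rangle \ket{w}$ and extended linearly. A direct calculation in the basis $\{\ket{0}, \ket{1}\}$ of $\mathbb{C}^2$ shows that $\ker \pi_a = \{\ket{a^\perp} \otimes \ket{w} : \ket{w} \in \mathbb{C}^m\}$, which consists entirely of product vectors; hence when $L$ is completely entangled, $\pi_a|_L$ is injective, and in particular $\dim L \leq m$.

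To upgrade this to $\dim L \leq m-1$ I would argue by contradiction. If $\dim L = m$, then $\pi_0|_L$ and $\pi_1|_L$ are both linear bijections $L \to \mathbb{C}^m$, so $T := \pi_1 \circ (\pi_0|_L)^{-1}$ is an endomorphism of $\mathbb{C}^m$. Since $\mathbb{C}$ is algebraically closed, $T$ admits an eigenvector $\ket{w}$ with some eigenvalue $\lambda \in \mathbb{C}$. Lifting $\ket{w}$ back through $\pi_0|_L$ yields a $v \in L$ satisfying $\pi_0(v) = \ket{w}$ and $\pi_1(v) = \lambda \ket{w}$; writing $v = \ket{0}\ket{v_0} + \ket{1}\ket{v_1}$ in the computational basis forces $v = (\ket{0} + \lambda \ket{1}) \otimes \ket{w}$, a product vector in $L$, contradicting the hypothesis that $L$ is completely entangled.

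With the bound $\dim L \leq m-1$ in hand I would conclude as follows. A product vector $\ket{a} \otimes \ket{b}$ lies in $L^\perp$ precisely when $\langle b|\pi_a(v)\rangle = 0$ for every $v \in L$, i.e.\ when $\ket{b} \in \pi_a(L)^\perp \subset \mathbb{C}^m$. Because $\dim \pi_a(L) \leq \dim L \leq m-1 < m$, the orthogonal complement $\pi_a(L)^\perp$ is nontrivial, and any nonzero $\ket{b}$ in it yields the desired product state $\ket{a} \otimes \ket{b} \in L^\perp$. The main obstacle is the eigenvalue step in the dimension bound: it uses algebraic closedness of $\mathbb{C}$ in an essential way and genuinely exploits the fact that the first tensor factor has dimension exactly two, so that a single eigenvalue of $T$ suffices to parametrise a candidate product state. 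Once that step is complete, the rest is linear-algebraic bookkeeping.
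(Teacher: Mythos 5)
Your proof is correct. Note that the paper does not prove this lemma at all: it is imported verbatim from Augusiak et al.\ (the cited Lemmas 1 and 2), so there is no in-paper argument to compare against. Your two-step structure in fact reconstructs exactly the content of those two cited lemmas: the eigenvector argument for $T=\pi_1\circ(\pi_0|_L)^{-1}$ is the standard Parthasarathy-type proof that a completely entangled subspace of $\mathbb{C}^2\otimes\mathbb{C}^m$ has dimension at most $m-1$ (the $(d_1-1)(d_2-1)$ bound specialised to $d_1=2$), and the duality step $\ket{a}\otimes\ket{b}\in L^\perp \iff \ket{b}\in\pi_a(L)^\perp$ correctly converts that dimension bound into an explicit product state in the complement. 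All the individual claims check out: $\ker\pi_a=\ket{a^\perp}\otimes\mathbb{C}^m$ consists of product vectors, so $\pi_a|_L$ is injective; the lifted eigenvector gives $v=(\ket{0}+\lambda\ket{1})\otimes\ket{w}\neq 0$ in $L$, yielding the contradiction; and $\dim\pi_a(L)\le m-1$ guarantees a nonzero $\ket{b}\perp\pi_a(L)$. This is a clean, self-contained replacement for the external citation.
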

\begin{corollary}\label{cor:completelyent}
Let $m \in \mathbb{N}$, and let $V \subset (\mathbb{C}^2)^{\otimes m}$ be a subspace of dimension $\geq 2$. Let $L \subset \mathbb{C}^2 \otimes (\mathbb{C}^2)^{\otimes m}$ be a subspace such that:
\begin{enumerate} \item $\mathbb{C}^2 \otimes V^{\perp} \subset L$.
\item $(\mathbb{C}^2 \otimes V) \cap L$ is a completely entangled subspace of $\mathbb{C}^2 \otimes V$.
\end{enumerate}
Then $L^{\perp} \subset \mathbb{C}^2 \otimes \mathbb{C}^m$ contains a product state.
\end{corollary}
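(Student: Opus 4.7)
The plan is to reduce directly to Lemma~\ref{lem:completelyent} by restricting attention to the smaller bipartite space $\mathbb{C}^2 \otimes V$. First, I would exploit the hypothesis $\mathbb{C}^2 \otimes V^\perp \subseteq L$ to obtain the orthogonal direct sum decomposition
\[
L \;=\; (\mathbb{C}^2 \otimes V^\perp) \;\oplus\; L', \qquad L' := (\mathbb{C}^2 \otimes V) \cap L.
\]
This is immediate by writing any $w \in L$ uniquely as $w = w_V + w_{V^\perp}$ with respect to the orthogonal splitting $(\mathbb{C}^2)^{\otimes m} = V \oplus V^\perp$; the piece $w_{V^\perp}$ already lies in $L$ by hypothesis, so $w_V = w - w_{V^\perp}$ lies in $L' = (\mathbb{C}^2 \otimes V) \cap L$.

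Second, taking orthogonal complements of both sides in the ambient space $\mathbb{C}^2 \otimes (\mathbb{C}^2)^{\otimes m}$, I would deduce
\[
L^\perp \;=\; \{\, u \in \mathbb{C}^2 \otimes V : u \perp L' \,\},
\]
that is, $L^\perp$ coincides with the orthogonal complement of $L'$ computed \emph{inside} $\mathbb{C}^2 \otimes V$. This follows since a vector is orthogonal to $L$ iff it is orthogonal to both summands, the first of which forces the vector to lie in $\mathbb{C}^2 \otimes V$.

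Finally, because $\dim V \geq 2$ by hypothesis, fixing an isometric isomorphism $V \cong \mathbb{C}^{\dim V}$ lets us view $L'$ as a completely entangled subspace of $\mathbb{C}^2 \otimes \mathbb{C}^{\dim V}$, to which Lemma~\ref{lem:completelyent} applies. That lemma yields a product vector $\ket{a}\otimes\ket{v}$ with $\ket{a}\in \mathbb{C}^2$ and $\ket{v}\in V \subseteq (\mathbb{C}^2)^{\otimes m}$ lying in the orthogonal complement of $L'$ inside $\mathbb{C}^2 \otimes V$; by the previous step this vector belongs to $L^\perp$ and is a product state across the bipartition $\mathbb{C}^2 \otimes (\mathbb{C}^2)^{\otimes m}$. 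There is no real obstacle here: the only point needing care is that ``product state'' is taken with respect to this bipartition and does not assert product structure across the $m$ sub-qubits, which is consistent with the conclusion since Lemma~\ref{lem:completelyent} is a bipartite statement and the isomorphism $V \cong \mathbb{C}^{\dim V}$ leaves the $\mathbb{C}^2$ tensor factor untouched.
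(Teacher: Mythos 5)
Your proposal is correct and follows essentially the same route as the paper's proof: the same orthogonal decomposition $L = (\mathbb{C}^2 \otimes V^\perp) \oplus \bigl((\mathbb{C}^2 \otimes V) \cap L\bigr)$, the same identification of $L^\perp$ with the complement of $(\mathbb{C}^2 \otimes V) \cap L$ taken inside $\mathbb{C}^2 \otimes V$, and the same application of Lemma~\ref{lem:completelyent}. Your explicit remark that $\dim V \geq 2$ is what licenses the appeal to the lemma, and that the resulting product state is product only across the bipartition, is a correct and slightly more careful rendering of what the paper leaves implicit.
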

\begin{proof}
Since $\mathbb{C}^2 \otimes V^{\perp} \subset L$, we have $L = (\mathbb{C}^2 \otimes V^{\perp}) \oplus \left((\mathbb{C}^2 \otimes V^{\perp})^{\perp} \cap L \right) = (\mathbb{C}^2 \otimes V^{\perp}) \oplus \left( (\mathbb{C}^2 \otimes V) \cap L \right)$. Then $L^{\perp} = (\mathbb{C}^2 \otimes V) \cap  \left( (\mathbb{C}^2 \otimes V) \cap L \right)^{\perp} = \left( (\mathbb{C}^2 \otimes V) \cap L \right)^{\widetilde{\perp}}$, where $\widetilde{\perp}$ is the orthogonal complement in $\mathbb{C}^2 \otimes V$. But $(\mathbb{C}^2 \otimes V) \cap L$ is a completely entangled subspace of $\mathbb{C}^2 \otimes V$, so by Lemma~\ref{lem:completelyent} we know that $\left( (\mathbb{C}^2 \otimes V) \cap L \right)^{\widetilde{\perp}}$ contains a product state; therefore $L^{\perp}$ contains a product state. 
\end{proof}
\begin{lemma}\label{lem:suppsprod}
Let $\ket{\psi} \in (\mathbb{C}^2)^{\otimes n}$ be a state and let $x \in [n]$. Let $S_1,S_2 \subset ([n]-\{x\})$ be two subsets such that $S_1 \cap S_2 = \emptyset$ and $S_1 \sqcup S_2 = ([n]-\{x\})$.  Then at least one of the following is true:
\begin{enumerate}
\item\label{num:prodlem1} The subspace $\supp(\Tr_{S_2}(\ket{\psi}\bra{\psi}))$ contains a product state $\ket{\phi_{x}} \otimes \ket{\phi_{S_1}}$. 
\item\label{num:prodlem2} The subspace $\supp(\Tr_{S_1}(\ket{\psi}\bra{\psi}))$ contains a product state $\ket{\phi_{x}} \otimes \ket{\phi_{S_2}}$. 
\end{enumerate}
\end{lemma}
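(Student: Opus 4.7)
My plan is to show that if item~\ref{num:prodlem1} fails then item~\ref{num:prodlem2} holds; since the roles of $S_1$ and $S_2$ are symmetric, this suffices. Write $V_1 := \supp(\Tr_{S_2}(\ket{\psi}\bra{\psi}))$ and $V_2 := \supp(\Tr_{S_1}(\ket{\psi}\bra{\psi}))$, so the hypothesis becomes that $V_1$ contains no product state of the form $\ket{\phi_x}\otimes\ket{\phi_{S_1}}$, i.e.\ $V_1$ is completely entangled under the bipartition $\{x\}\mid S_1$.

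The first step is to produce a product state $\ket{\phi_x}\otimes\ket{\phi_{S_1}}$ in the orthogonal complement $V_1^{\perp}$ (rather than in $V_1$ itself) with the crucial extra property that $\ket{\phi_{S_1}}$ lies in $W := \supp(\Tr_{\{x\}\cup S_2}(\ket{\psi}\bra{\psi}))$. Standard facts about reduced density matrices give $V_1 \subset \mathbb{C}^2_x \otimes W$. After disposing of the trivial cases $\dim W \leq 1$ (in which $V_1$ would itself contain a product state, contradicting the hypothesis), I apply Corollary~\ref{cor:completelyent} with $V := W$ and $L := V_1 \oplus (\mathbb{C}^2_x \otimes W^{\perp})$. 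Both hypotheses of the Corollary follow immediately from this choice (the intersection $(\mathbb{C}^2_x\otimes W)\cap L$ is exactly $V_1$, which is completely entangled in $\mathbb{C}^2_x\otimes W$ by assumption), so the Corollary yields a product state in $L^{\perp} = V_1^{\perp}\cap(\mathbb{C}^2_x\otimes W)$, which has exactly the required form.

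The second step is to convert this product state in $V_1^{\perp}$ into a product state inside $V_2$. Define the sliced vector $\ket{\psi'} := (\bra{\phi_{S_1}}\otimes I_{\{x\}\cup S_2})\ket{\psi}$. The condition $\ket{\phi_{S_1}}\in W$ makes $\ket{\psi'}$ nonzero, since $\|\ket{\psi'}\|^2 = \bra{\phi_{S_1}}\Tr_{\{x\}\cup S_2}(\ket{\psi}\bra{\psi})\ket{\phi_{S_1}} > 0$. The condition $\ket{\phi_x}\otimes\ket{\phi_{S_1}} \in V_1^{\perp}$ translates into $(\bra{\phi_x}\otimes I_{S_2})\ket{\psi'} = 0$; since qubit $x$ is two-dimensional, this forces $\ket{\psi'} = \ket{\phi_x^{\perp}}\otimes\ket{\eta}_{S_2}$ for some nonzero $\ket{\eta}_{S_2}$---a product state across $\{x\}\mid S_2$. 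Finally, $\ket{\psi'}$ lies in $V_2$ because $V_2$ is spanned by slices of $\ket{\psi}$ of exactly this form, giving item~\ref{num:prodlem2}.

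The main obstacle is that a naive application of Lemma~\ref{lem:completelyent}, which only produces some product state in $V_1^{\perp}$ with no further control, is insufficient: if its $S_1$-factor happened to lie entirely in $W^{\perp}$, the slice $\ket{\psi'}$ would vanish and the argument would collapse. Enlarging $L$ by $\mathbb{C}^2\otimes W^{\perp}$ and invoking Corollary~\ref{cor:completelyent} in place of the bare Lemma is precisely the device that confines the product state to $\mathbb{C}^2\otimes W$ and thereby guarantees a nonzero slice.
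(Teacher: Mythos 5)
Your proof is correct and follows essentially the same route as the paper's: both arguments hinge on Corollary~\ref{cor:completelyent} (the impossibility of $\mathbb{C}^2\otimes\mathbb{C}^m$ splitting into two completely entangled pieces, with the product state confined to $\mathbb{C}^2\otimes W$ so that the subsequent slice of $\ket{\psi}$ is nonzero) followed by the same slicing/Schmidt-rank argument; the paper merely phrases the dichotomy on the kernel $\supp(\Tr_{S_2}(\ket{\psi}\bra{\psi}))^{\perp}\cap(\mathbb{C}^2\otimes W)$ rather than contraposing claim~\ref{num:prodlem1}, which is a cosmetic dualization of your choice of $L$.
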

\begin{proof}
To clarify our notation, we first recall the definition of supports and kernels. By definition, for any subset $S \subset [n]$, we have $\supp(\Tr_{\overline{S}}(\ket{\psi}\bra{\psi})) = \ker(\Tr_{\overline{S}}(\ket{\psi}\bra{\psi}))^{\perp}$, where the orthogonal complement is taken in the space $(\mathbb{C}^2)^{\otimes |S|}$. If we write $(\mathbb{C}^2)^{\otimes n} = (\mathbb{C}^2)^{\otimes |S|} \otimes (\mathbb{C}^2)^{\otimes (n-|S|)}$, where the first factor corresponds to those qubits in the set $S$, then: 
\bea
\ker(\Tr_{\overline{S}}(\ket{\psi}\bra{\psi})) &=& \textrm{span}\{\ket{v}\in (\mathbb{C}^2)^{\otimes |S|}~:~ (\bra{v} \otimes \mathbbm{1}) \ket{\psi} = 0 \}, \label{eq:ker} \\
\supp(\Tr_{S}(\ket{\psi}\bra{\psi}))) &=& \textrm{span}\{(\bra{v} \otimes \mathbbm{1}) \ket{\psi} ~:~\ket{v} \in (\mathbb{C}^2)^{|S|}\}. \label{eq:supp}
\eea
We now observe that if any of (i) $\dim(\supp(\Tr_{\overline{x}}(\ket{\psi}\bra{\psi})))$, (ii) $\dim(\supp(\Tr_{\overline{S_1}}(\ket{\psi}\bra{\psi})))$ or (iii) $\dim(\supp(\Tr_{\overline{S_2}}(\ket{\psi}\bra{\psi})))$ equal 1, then at least one of
the two claims in this lemma
is true. Indeed, for (i) observe that if $\dim(\supp(\Tr_{\overline{x}}(\ket{\psi}\bra{\psi}))) = 1$, then $\ket{\psi} = \ket{\psi_{x}} \otimes \ket{\psi_{([n]-\{x\})}}$. To see this, observe that there exists $\ket{w} \in \mathbb{C}^2$ such that $(\bra{w} \otimes \mathbbm{1}) \ket{\psi} = 0$; it follows by the Schmidt decomposition that $\ket{\psi}$ is a product state. The argument for (ii) and (iii) is similar. 

We therefore now assume that those three dimensions are greater than 1. We consider the subspace 
\begin{align*}
\ker(\Tr_{\overline{\{x\} \sqcup S_1}}(\ket{\psi}\bra{\psi})) \cap \big(\supp(\Tr_{\overline{\{x\}}}(\ket{\psi}\bra{\psi})) &\otimes \supp(\Tr_{\overline{S_1}}(\ket{\psi}\bra{\psi}))\big) \\
& 
\hspace{-1cm}
\subseteq \supp(\Tr_{\overline{\{x\}}}(\ket{\psi}\bra{\psi})) \otimes \supp(\Tr_{\overline{S_1}}(\ket{\psi}\bra{\psi})).
\end{align*}
We have a dichotomy:
\begin{enumerate}[label=(\alph*)]
    \item \label{num:isceclem=1}The subspace $\ker(\Tr_{\overline{\{x\} \sqcup S_1}}(\ket{\psi}\bra{\psi})) \cap \big(\supp(\Tr_{\overline{\{x\}}}(\ket{\psi}\bra{\psi})) \otimes \supp(\Tr_{\overline{S_1}}(\ket{\psi}\bra{\psi}))\big)$ is a completely entangled subspace of $\supp(\Tr_{\overline{\{x\}}}(\ket{\psi}\bra{\psi})) \otimes \supp(\Tr_{\overline{S_1}}(\ket{\psi}\bra{\psi}))$.
    \item \label{num:isceclem=2}The subspace $\ker(\Tr_{\overline{\{x\} \sqcup S_1}}(\ket{\psi}\bra{\psi})) \cap \big(\supp(\Tr_{\overline{\{x\}}}(\ket{\psi}\bra{\psi})) \otimes \supp(\Tr_{\overline{S_1}}(\ket{\psi}\bra{\psi}))\big)$ contains a product state. 
\end{enumerate}
In case~\ref{num:isceclem=1}, we are precisely in the situation of Corollary~\ref{cor:completelyent}, where  $V=\supp(\Tr_{\overline{S_1}}(\ket{\psi}\bra{\psi}))$ and $L = \ker(\Tr_{\overline{\{x\} \sqcup S_1}}(\ket{\psi}\bra{\psi}))$; therefore claim~\ref{num:prodlem1} is true. On the other hand, in case~\ref{num:isceclem=2} there exists a product state $\ket{v} \otimes \ket{w} \in \supp(\Tr_{\overline{\{x\}}}(\ket{\psi}\bra{\psi})) \otimes \supp(\Tr_{\overline{S_1}}(\ket{\psi}\bra{\psi}))$ such that $(\bra{v} \otimes \bra{w} \otimes \mathbbm{1}) \ket{\psi} = 0$. Consider the state $\ket{\psi'}:= (\mathbbm{1} \otimes \bra{w} \otimes \mathbbm{1}) \ket{\psi} \in \supp(\Tr_{S_1}(\ket{\psi}\bra{\psi}))$. Since $(\bra{v} \otimes \mathbbm{1}) \ket{\psi'}=0$, we have that $\dim(\ker(\Tr_{\overline{\{x\}}}(\ket{\psi'}\bra{\psi'}))) = 1$; by the Schmidt decomposition this implies that $\ket{\psi'} = \ket{\psi_{x}'} \otimes \ket{\psi_{S_2}'}$, and so claim~\ref{num:prodlem2} is true.
\end{proof}
\noindent
We can now prove the theorem. To warm up, we prove the case $c=2$ first. 
\begin{proposition}\label{prop:c=2}
Let $([n],\{\Pi_s\})$ be a satisfiable instance of quantum $k$-SAT, and let $x_1 \in [n]$ be any qubit. Then there is at most one qubit $x_2 \in ([n]-\{x_1\})$ such that the instance is not locally satisfiable by a product state at $\{x_1,x_2\}$.

Equivalently, let $\ket{\psi} \in (\mathbb{C}^{2})^{\otimes n}$ be any state, and let $x_1 \in [n]$ be any qubit. Then there is at most one qubit $x_2 \in ([n] - \{x_1\})$ such that the subspace $\supp(\Tr_{\overline{\{x_1,x_2\}}}(\ket{\psi}\bra{\psi})) \subseteq (\mathbb{C}^2)^{\otimes 2}$ does not contain a product state.
\end{proposition}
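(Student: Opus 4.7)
The plan is to reduce the hypothesis to a structural fact about $\ket{\psi}$, and then derive a contradiction from assuming two bad qubits exist by computing the reduced density matrix on $\{x_1,x_2\}$ in two different ways. The two formulations in the statement are equivalent (satisfiability by a product state on $C$ is the same as the existence of a product state in $\supp(\Tr_{\overline{C}}(\ket{\psi}\bra{\psi}))$ for a witness state $\ket{\psi}$), so I will work with the state version.

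First I would rephrase the hypothesis using the fact that every subspace of $\mathbb{C}^2\otimes\mathbb{C}^2$ of dimension $\geq 2$ contains a product state. Identifying two-qubit vectors with $2\times 2$ matrices, a product state in a $2$-dimensional subspace $\textrm{span}\{M_1,M_2\}$ corresponds to a nonzero $(\alpha,\beta)\in\mathbb{C}^2$ with $\det(\alpha M_1+\beta M_2)=0$, which exists because this is a nonzero homogeneous degree-$2$ equation over $\mathbb{C}$; equivalently, Lemma~\ref{lem:completelyent} with $m=2$ rules out completely entangled subspaces of $\mathbb{C}^2\otimes\mathbb{C}^2$ of dimension $\geq 2$. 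Thus, if $\supp(\Tr_{\overline{\{x_1,x_2\}}}(\ket{\psi}\bra{\psi}))$ fails to contain a product state, it must be exactly $1$-dimensional and spanned by an entangled two-qubit vector; by the Schmidt decomposition across the cut $\{x_1,x_2\}\,|\,\overline{\{x_1,x_2\}}$, this is equivalent to a factorization
\[\ket{\psi}=\ket{\eta_{x_2}}_{\{x_1,x_2\}}\otimes\ket{\chi_{x_2}}_{\overline{\{x_1,x_2\}}}\]
with $\ket{\eta_{x_2}}$ a two-qubit entangled state.

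Next, I would assume for contradiction (the cases $n\leq 2$ are trivial) that there are two distinct bad qubits $x_2,x_3\in[n]-\{x_1\}$, obtaining simultaneous factorizations
\[\ket{\psi}=\ket{\eta_{x_2}}_{\{x_1,x_2\}}\otimes\ket{\chi_{x_2}}_{\overline{\{x_1,x_2\}}}=\ket{\eta_{x_3}}_{\{x_1,x_3\}}\otimes\ket{\chi_{x_3}}_{\overline{\{x_1,x_3\}}},\]
with both $\ket{\eta_{x_2}}$ and $\ket{\eta_{x_3}}$ entangled. Then I would compute $\rho_{\{x_1,x_2\}}:=\Tr_{\overline{\{x_1,x_2\}}}(\ket{\psi}\bra{\psi})$ in two ways. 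From the first factorization, $\rho_{\{x_1,x_2\}}=\ket{\eta_{x_2}}\bra{\eta_{x_2}}$ has rank $1$. From the second, the tensor product structure of $\ket{\psi}\bra{\psi}$ lets the partial trace split across the factors, tracing $x_3$ out of $\ket{\eta_{x_3}}\bra{\eta_{x_3}}$ and tracing $[n]-\{x_1,x_2,x_3\}$ out of $\ket{\chi_{x_3}}\bra{\chi_{x_3}}$, giving $\rho_{\{x_1,x_2\}}=\sigma_{x_1}\otimes\tau_{x_2}$ where $\sigma_{x_1}=\Tr_{x_3}(\ket{\eta_{x_3}}\bra{\eta_{x_3}})$ has rank $2$ (since $\ket{\eta_{x_3}}$ is two-qubit entangled) and $\tau_{x_2}$ is nonzero. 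Hence $\rho_{\{x_1,x_2\}}$ has rank at least $2$, contradicting rank $1$.

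The only nontrivial ingredient is the structural fact in the reformulation step, which is an immediate quadric/Segre observation in $\mathbb{P}^3$ (or a direct consequence of Lemma~\ref{lem:completelyent}); the rest is careful bookkeeping of which qubits each tensor factor lives on, and I anticipate no real obstacle.
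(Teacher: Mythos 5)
Your proof is correct, but it takes a genuinely different route from the paper. The paper obtains Proposition~\ref{prop:c=2} as an immediate corollary of the general dichotomy in Lemma~\ref{lem:suppsprod} (applied with $x=x_1$, $S_1=\{x_2\}$, $S_2=[n]-\{x_1,x_2\}$): either the support at $\{x_1,x_2\}$ contains a product state, or the support at $\{x_1\}\sqcup S_2$ contains a state of the form $\ket{\phi_{x_1}}\otimes\ket{\phi_{S_2}}$, which upon further tracing yields a product state at $\{x_1,x'\}$ for every other $x'$. That lemma in turn rests on the Augusiak--Stasi\'nska result (Lemma~\ref{lem:completelyent}) about completely entangled subspaces of $\mathbb{C}^2\otimes\mathbb{C}^m$. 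Your argument is self-contained and more elementary: you use only the classical fact that $\mathbb{C}^2\otimes\mathbb{C}^2$ has no completely entangled subspace of dimension $\geq 2$ to force the bad case into the factorization $\ket{\psi}=\ket{\eta_{x_2}}\otimes\ket{\chi_{x_2}}$ with $\ket{\eta_{x_2}}$ entangled, and then get a clean rank contradiction on $\rho_{\{x_1,x_2\}}$ from two such factorizations. What each approach buys: the paper's version reuses machinery (Lemma~\ref{lem:suppsprod}) that is needed anyway for the $c\geq 3$ iteration, whereas yours is shorter and avoids the Augusiak lemma entirely, but is specific to pairs and does not feed into the general case. Two small points of hygiene: your determinant argument should note that if $\det(\alpha M_1+\beta M_2)$ vanishes identically then every nonzero element of the span is already a product vector (the conclusion still holds); and the aside that Lemma~\ref{lem:completelyent} with $m=2$ ``rules out'' completely entangled subspaces of dimension $\geq 2$ is not quite right as stated --- that lemma only asserts that the \emph{orthogonal complement} of a completely entangled subspace contains a product state, which by itself does not bound the dimension. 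Neither point affects the validity of your proof, since the determinant argument carries it.
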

\begin{proof}
The equivalence of the two statements will be shown in the first paragraph of the proof of Theorem~\ref{thm:localsat} below. We therefore need only prove the second statement. Let $x_2 \in ([n]-\{x_1\})$ be any qubit. We now use Lemma~\ref{lem:suppsprod}; in the notation of that lemma, let $x:= x_1$, $S_1 := \{x_2\}$ and $S_2 := [n]-\{x_1,x_2\}$. Then either $\supp(\Tr_{\overline{\{x_1,x_2\}}}(\ket{\psi}\bra{\psi})))$ contains a product state, or  $\supp(\Tr_{\{x_2\}}(\ket{\psi}\bra{\psi})))$ contains a state which is a product over $x_1\sqcup S_2$. In the latter case it clearly follows from the definition of the support that there is a product state in $\supp(\Tr_{\overline{\{x_1 \sqcup x'\}}}(\ket{\psi}\bra{\psi})))$  for any $x' \in S_2$.
\end{proof}
\noindent
We now move onto the case $c \geq 3$.
\begin{theorem*}[Restatement of Theorem~\ref{thm:localsat}]
Let $([n],\{\Pi_s\})$ be a satisfiable instance of quantum $k$-SAT, and let $c \in \mathbb{N}$ be fixed, where $c \geq 3$. Let $C \in \sigma_c([n])$ be a subset chosen uniformly at random. The probability that the instance is locally satisfiable by a product state at $C$ is greater than $p \in (0,1)$ whenever $n > \Psi(p,c)$, where 
$$\Psi(p,c) := \max\left( \frac{c^8(2^{3c+20})}{(1-\sqrt{p})^5}~,~\frac{2^{c+2}(c+1)}{-\ln(p)}\right).$$

Equivalently, let $\ket{\psi} \in (\mathbb{C}^{2})^{\otimes n}$ be any state and let $C \in \sigma_c([n])$ be a subset chosen uniformly at random. The probability that the subspace $\supp(\Tr_{\overline{C}}(\ket{\psi}\bra{\psi})) \subseteq (\mathbb{C}^2)^{\otimes c}$ contains a product state is greater than $p \in (0,1)$ whenever $n > \Psi(p,c)$.
\end{theorem*}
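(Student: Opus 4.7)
I plan to prove the second (state-only) form of Theorem~\ref{thm:localsat}; the two forms are equivalent, since if $\ket{\psi}$ satisfies every $\Pi_s$ then $\supp(\Tr_{\overline{C}}\ket{\psi}\bra{\psi})\subseteq\ker(\Pi_s)$ for each $s\subseteq C$, so any product state in this support is a locally satisfying product state. The first ingredient is a downward-closure principle for supports: if $\ket{\phi_A}\otimes\ket{\phi_B}\in\supp(\rho_{AB})$ then $\ket{\phi_A}\in\supp(\rho_A)$. This follows by writing $\ket{\phi_A}\otimes\ket{\phi_B}=(\bra{u}\otimes\mathbbm{1}_{AB})\ket{\psi}$ via the characterisation of support used in the proof of Lemma~\ref{lem:suppsprod}, and then applying $\mathbbm{1}_A\otimes\bra{\phi_B}$ to extract $\ket{\phi_A}=(\bra{u'}\otimes\mathbbm{1}_A)\ket{\psi}$. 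Iterating across a multipartite product shows that if $\bigotimes_{j\in E}\ket{\phi_j}\in\supp(\Tr_{\overline{E}}\ket{\psi}\bra{\psi})$ then the same holds for every restriction $E'\subseteq E$; hence the set of good subsets is closed under taking subsets.

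The core of the argument constructs a hypergraph of good $c$-subsets on $[n]$ by iterating Lemma~\ref{lem:suppsprod}. At iteration $i$ I carry a collection of \emph{partial factorisations} $(\{x_1,\dots,x_i\}, T_i, \ket{\phi_{x_1}}\otimes\cdots\otimes\ket{\phi_{x_i}}\otimes\ket{\phi_{T_i}})$ whose product lies in $\supp(\rho_{\{x_1,\dots,x_i\}\cup T_i})$, with $|T_i|\approx n/2^i$. I advance by applying Lemma~\ref{lem:suppsprod} to the residual state $\ket{\phi_{T_i}}$: for each peel qubit $x_{i+1}\in T_i$ and each balanced bipartition of $T_i\setminus\{x_{i+1}\}$, one half is good and extends the factorisation to depth $i+1$. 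Aggregating these choices realises the depth-$(i+1)$ family as (up to size-parity adjustments) a partial $(\cdot,1/2)$-shadow of the depth-$i$ family, so Lemma~\ref{lem:partialedgedensity} bounds the edge density loss by a constant factor $K_A$ per step; the initial density at depth $1$ is at least $1/2$ by Lemma~\ref{lem:oddedges}.

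After $c-1$ iterations, each leaf $(\{x_1,\dots,x_{c-1}\},T_{c-1})$ yields $|T_{c-1}|$ good $c$-subsets: for any $y\in T_{c-1}$ and any vector $\ket{\phi_y}$ in the nonzero subspace $\supp(\Tr_{T_{c-1}\setminus\{y\}}\ket{\phi_{T_{c-1}}}\bra{\phi_{T_{c-1}}})$, a further partial trace applied to the depth-$(c-1)$ product gives $\bigotimes_i\ket{\phi_{x_i}}\otimes\ket{\phi_y}\in\supp(\rho_{\{x_1,\dots,x_{c-1},y\}})$. Lemma~\ref{lem:edgedensity} then passes from the depth-$(c-1)$ hypergraph to one of good $c$-subsets, giving an edge-density lower bound of the form $(\theta_{c-1}^{1/\ell_{c-1}}-2/n)^c$. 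Requiring this to exceed $p$, together with verifying the size-threshold hypotheses of Lemma~\ref{lem:partialedgedensity} at each of the $c-2$ partial-shadow steps, yields the three terms of $\Psi(p,c)$; Lemma~\ref{lem:randomvertex} can be invoked as needed to convert edge-density estimates into the required form. The main obstacle is the combinatorial bookkeeping---compounding per-step density losses while simultaneously ensuring all size thresholds $n\gtrsim 2^c$ hold---and the exponential dependence on $c$ appears intrinsic, since halving the residual $c-2$ times on its own already forces $n\gtrsim 2^c$.
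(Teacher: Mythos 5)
Your overall architecture matches the paper's: peel off qubits one at a time, apply the dichotomy of Lemma~\ref{lem:suppsprod} to balanced bipartitions of each residual set, realise the surviving halves as a partial $(\cdot,1/2)$-shadow, and track edge density via Lemmas~\ref{lem:oddedges}, \ref{lem:partialedgedensity} and~\ref{lem:randomvertex}; the equivalence of the two formulations and the downward-closure observation are also as in the paper. However, the two places where the probability $p$ is actually produced are, respectively, wrong and absent.

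The final step is the genuine error. After $c-1$ peels, the certified good $c$-subsets are exactly $\{x_1,\dots,x_{c-1},y\}$ for $y$ a vertex covered by some edge of $G_{c-1}$ --- they all contain the specific random prefix. Your proposed application of Lemma~\ref{lem:edgedensity}, giving a $c$-uniform hypergraph of density $(\theta_{c-1}^{1/\ell_{c-1}}-2/n)^c$, takes the $c$-shadow of $G_{c-1}$, i.e.\ $c$-subsets lying inside the tails $T_{c-1}$; nothing certifies those as good, since whether a small subsystem of the entangled residual $\ket{\phi_{T_{c-1}}}$ supports a product state is precisely the statement being proved. The correct final move is a vertex-coverage bound: the probability that the last random qubit $x_c$ is contained in \emph{some} edge of $G_{c-1}$, which the paper lower-bounds by $\theta_{c-1}^{2^{c-1}/(n+2-2^c)}$ via Lemma~\ref{lem:binomialfraction} (worst case: all edges packed into a complete hypergraph on few vertices). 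Relatedly, the link step cannot be waved through: $x_{i+1}$ is a single uniformly random qubit and only the edges of $G_i$ containing it survive; ``for each peel qubit $x_{i+1}\in T_i$'' cannot be aggregated into one hypergraph, because different peels yield factorisations with different prefixes. Lemma~\ref{lem:randomvertex} is not an optional conversion device but the probabilistic core here: it controls the chance that the random $x_{i+1}$ has low degree in $G_i$, and a union bound over the $c-2$ link steps supplies the factor $1-(c-2)(1-\overline{\alpha})$ multiplying the coverage probability. Without these two pieces the stated $\Psi(p,c)$ cannot be recovered.
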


\begin{remark}\label{rem:entanglement}
Before giving the proof, we remark that the following weaker fact follows quite straightforwardly from known results: 
\begin{quote} Let $([n],\{\Pi_s\})$ be a satisfiable instance of quantum $k$-SAT, let $c \in \mathbb{N}$ be some constant, and let $\epsilon > 0$. Let $C \in \sigma_c([n])$ be a subset chosen uniformly at random. The probability that there exists a product state $\ket{\phi} \in (\mathbb{C}^2)^{\otimes c}$ such that $\sum_{s \in \sigma_k(C)}\bra{\phi}\Pi_s \ket{\phi} < \epsilon$ is greater than $p$ whenever $p > O(n^{k-c-1/4}){c \choose k}/\epsilon$.
\end{quote}
To prove this, one can use~\cite[Equation (91)]{Brandao2016}, which implies that there is a product state $\ket{\phi} \in (\mathbb{C}^2)^{\otimes n}$ such that $\sum_{s \in \sigma_k([n])} \bra{\phi} \Pi_s \ket{\phi} \leq O(n^{k-1/4})$. We can write 
$$
\sum_{s \in \sigma_k([n])} \bra{\phi} \Pi_s \ket{\phi} =  \sum_{C \in \sigma_c([n])} \sum_{s \in \sigma_k(C)}  \frac{1}{{c \choose k}} \bra{\phi} \Pi_s \ket{\phi} = \sum_{C \in \sigma_c([n])} S_C,
$$
where $S_C := \sum_{s \in \sigma_k(C)}{c \choose k}^{-1}\bra{\phi} \Pi_s \ket{\phi}$. Since there are $O(n^c)$ $c$-subsets, the expected value of $S_C$ for a randomly chosen $c$-subset is $O(n^{(k-c)-1/4})$; the result then follows by Markov's inequality. 

An alternative approach to proving this fact, which has some issues but nevertheless reveals something about the nature of the problem, is to use monogamy of the squashed entanglement~\cite[Theorem 8]{Koashi2004}. For this, let $\ket{\psi}$ be a state satisfying $([n],\{\Pi_s\})$. It follows straightforwardly from monogamy that, for a randomly chosen $C \in \sigma_{c}([n])$, the expected entanglement across every bipartition of $C$ is $O(1/n)$. Then by~\cite[Theorem in \S{}2]{Brandao2011}, $\Tr_{\overline{C}}(\ket{\psi}\bra{\psi})$ is close to a biseparable state across every bipartition of $C$. If we could conclude that this implied closeness to a fully separable state, the fact would follow.

Theorem~\ref{thm:localsat} is a stronger, exact analogue, where we demand exact local satisfiability by a product state rather than approximate satisfiability. We were unable to derive Theorem~\ref{thm:localsat} from the non-exact result, and the proof we are about to provide follows a different approach, which does not make use of any continuous entanglement measure.
\end{remark}

\begin{proof}[Proof of Theorem~\ref{thm:localsat}]
We first observe the equivalence of the two statements in this theorem. If $([n],\{\Pi_s\})$ is satisfiable, let $\ket{\psi} \in (\mathbb{C}^2)^{\otimes n}$ be a satisfying state; clearly any state $\ket{\psi_{C}} \in \supp(\Tr_{\overline{C}}(\ket{\psi}\bra{\psi}))$ is a local solution to $([n],\{\Pi_s\})$ at $C$, so the second statement implies the first. In the other direction, let $\ket{\psi} \in (\mathbb{C}^2)^{\otimes n}$ be a state; then for all $C \in \sigma_c([n])$ let $\Pi_{C}$ be the projector onto $\ker(\Tr_{\overline{C}}(\ket{\psi}\bra{\psi}))$, and consider the instance $([n],\{\Pi_C\}_{C \in \sigma_c([n])})$ of quantum $c$-SAT, which is of course satisfied by the state $\ket{\psi}$.

Given this equivalence it is sufficient to show that for any state $\ket{\psi} \in (\mathbb{C}^2)^{\otimes n}$ and a randomly chosen subset $C \in \sigma_c([n])$, there exists a product state in $\supp(\Tr_{\overline{C}}(\ket{\psi}\bra{\psi}))$ with high probability.

Our idea to prove this is as follows. We will choose qubits $\{x_1,\dots,x_c\} =: C$ one by one uniformly at random. As we pick these qubits we will construct a sequence of hypergraphs $G_1,\dots,G_{c-1}$ whose vertices are the qubits which have not yet been picked; that is, the vertex set of $G_i$ is $[n]-\{x_1,\dots,x_{i}\}$. These hypergraphs are uniform, i.e. all their hyperedges (which for simplicity we will just call `edges') have the same cardinality. The significance of these hypergraphs is that, for any edge $E$ of the hypergraph $G_i$, we know that there is a state in the subspace 
$$
\supp(\Tr_{\overline{\{x_1,\dots,x_i\} \sqcup E}}(\ket{\psi}\bra{\psi})) \subset (\mathbb{C}^2)^{\otimes (i + |E|)}
$$
which is a product between the qubits $\{x_1,\dots,x_{i}\}$ that have already been picked and the qubits in edge $E$. To be precise:
\begin{quote}
Let $E \subset [n]-\{x_1,\dots,x_{i}\}$ be an edge of the hypergraph $G_i$. Then the subspace
$$
\supp(\Tr_{\overline{\{x_1,\dots,x_i\} \sqcup E}}(\ket{\psi}\bra{\psi})) \subset (\mathbb{C}^2)^{\otimes (i + |E|)}
$$
contains a state
\begin{align}\label{eq:giproperty}
\ket{\psi_1} \otimes \dots \otimes \ket{\psi_i} \otimes \ket{\psi_E}
\end{align}
where $\ket{\psi_1},\dots,\ket{\psi_i}$ are states of the qubits $x_1,\dots,x_i$ respectively, and $\ket{\psi_E}$ is a state of the qubits in $E$ (which may not itself be a product state).
\end{quote}
We will continue picking qubits at random and constructing these hypergraphs until we have picked the qubits $\{x_1,\dots,x_{c-1}\}$ and constructed the hypergraph $G_{c-1}$ on the remaining qubits. This hypergraph has the following property, which is a special case of the property just given for general $G_{i}$:
\begin{quote}
Let $E \subset [n]-\{x_1,\dots,x_{c-1}\}$ be an edge of the hypergraph $G_{c-1}$. Then the subspace
$$
\supp(\Tr_{\overline{\{x_1,\dots,x_{c-1}\} \sqcup E}}(\ket{\psi}\bra{\psi})) \subset (\mathbb{C}^2)^{\otimes (c-1 + |E|)}
$$
contains a state
\begin{align}\label{eq:gc-1property}
\ket{\psi_1} \otimes \dots \otimes \ket{\psi_{c-1}} \otimes \ket{\psi_E}
\end{align}
where $\ket{\psi_1},\dots,\ket{\psi_{c-1}}$ are states of the qubits $x_1,\dots,x_{c-1}$ respectively, and $\ket{\psi_E}$ is a state of the qubits in $E$ (which may not itself be a product state).
\end{quote}
Since there is a state $
\ket{\psi_1} \otimes \dots \otimes \ket{\psi_{c-1}} \otimes \ket{\psi_E}
$ in $\supp(\Tr_{\overline{\{x_1,\dots,x_{c-1}\} \sqcup E}}(\ket{\psi}\bra{\psi}))$ then it follows straightforwardly from the definition of the support given in~\eqref{eq:supp} that, if the randomly chosen final qubit $x_{c}$ lies in $E$, there will be a product state in $\supp(\Tr_{\overline{C}}(\ket{\psi}\bra{\psi}))$. Indeed, by definition of the support~\eqref{eq:supp} there exists some vector $\ket{v} \in (\mathbb{C}^2)^{\otimes (n-(c-1+|E|))}$ such that 
$$
(\mathbbm{1} \otimes \bra{v})\ket{\psi} = \ket{\psi_1} \otimes \dots \otimes \ket{\psi_{c-1}} \otimes \ket{\psi_E}
$$
Now let $\ket{w} \in (\mathbb{C}^2)^{|E|-1}$ be any vector in the Hilbert space of the qubits in $(E-\{x_{c}\})$ such that 
$$
||(\mathbbm{1}\otimes \bra{w})\ket{\psi_E}|| = 1.
$$
Then we see that 
$$
(\mathbbm{1} \otimes \bra{w} \otimes \bra{v})\ket{\psi} = \ket{\psi_1} \otimes \dots \otimes \ket{\psi_{c-1}} \otimes \ket{\psi_c},
$$
where $\ket{\psi_c} = (\mathbbm{1}\otimes \bra{w})\ket{\psi_E}$; and therefore, by definition~\eqref{eq:supp} there is a product state in $$\supp(\Tr_{\overline{C}}(\ket{\psi}\bra{\psi})).$$
Note that the final qubit $x_c$, like all the other qubits in $C$, is randomly chosen and we therefore cannot stipulate that it is contained in $E$.

The probability that there is a product state in $\supp(\Tr_{\overline{C}}(\ket{\psi}\bra{\psi}))$ is therefore lower bounded by the probability that the final qubit we pick lies in an edge of $G_{c-1}$.
We will lower bound this latter probability; our lower bound depends on the edge size and edge density of the uniform hypergraph $G_{c-1}$ (recall that the \emph{edge density} $\theta \in [0,1]$ of a $k$-uniform hypergraph on $n$ vertices is the proportion of the $\binom{n}{k}$ possible edges which are actually edges of the hypergraph). In this context, we have our first two combinatorial lemmas. 
\begin{lemma}\label{lem:finalvertex}
    Let $G$ be a $k$-uniform hypergraph on $n$ vertices with edge density $\theta \in [0,1]$, where $\theta^{1/k}n \geq k$. If we pick a vertex at random, the probability that it lies within an edge of $G$ is at least $\theta^{1/k}$.
\end{lemma}
\noindent
These lemmas are proved in Appendix~\ref{app:combinatorics}. Given Lemma~\ref{lem:finalvertex}, we can lower bound the probability that the qubit $x_{c}$ lies within an edge of $G_{c-1}$ by lower bounding the edge density and edge size of $G_{c-1}$. 

\paragraph{Construction of the hypergraph $G_1$.}We will begin by defining the hypergraph $G_1$. 
We pick the first qubit $x_{1}$ at random. There are two possibilities:
\begin{enumerate}[label=(\alph*)]
    \item \label{num:nodd} $n-1$ is even.
    \item \label{num:neven} $n-1$ is odd.
\end{enumerate}
In case~\ref{num:nodd}, we consider all the subsets of the set $([n]-\{x_1\})$ of size $(n-1)/2$; recall that our notation for the set of all such subsets is $\sigma_{(n-1)/2}([n]-\{x_1\})$. Let $X \in \sigma_{(n-1)/2}([n]-\{x_1\})$. Then, by Lemma~\ref{lem:suppsprod}, we know that at least one of the following holds:
\begin{enumerate}[label=(\roman*)]
\item\label{num:prodb1} The subspace $\supp(\Tr_{\overline{\{x_1\} \sqcup X}}(\ket{\psi}\bra{\psi}))$ contains a state $\ket{\psi_1} \otimes \ket{\psi_X}$. 
\item\label{num:prodb2} The subspace $\supp(\Tr_{\overline{\{x_1\} \sqcup \overline{X}}}(\ket{\psi}\bra{\psi}))$ contains a state $\ket{\psi_1} \otimes \ket{\psi_{\overline{X}}}$. 
\end{enumerate}
We therefore define an $(n-1)/2$-uniform hypergraph $G_1$ on the set $([n]-\{x_1\})$ as follows. For each $X \in \sigma_{(n-1)/2}([n]-\{x_1\})$, if $\supp(\Tr_{\overline{\{x_1\} \sqcup X}}(\ket{\psi}\bra{\psi}))$ contains a state $\ket{\psi_1} \otimes \ket{\psi_X}$, then add the subset $X$ as an edge to the hypergraph $G_1$. On the other hand, if $\supp(\Tr_{\overline{\{x_1\} \sqcup \overline{X}}}(\ket{\psi}\bra{\psi}))$ contains a state $\ket{\psi_1} \otimes \ket{\psi_{\overline{X}}}$, then add the subset $\overline{X}$ as an edge to the hypergraph $G_1$. By Lemma~\ref{lem:suppsprod}, the edge density $\theta_1$ of $G_1$ has the lower bound $\theta_1 \geq 1/2$. By definition, this hypergraph $G_1$ satisfies the desired property~\eqref{eq:giproperty}.

That definition of $G_1$ depended on the fact that $(n-1)$ was even. In case~\ref{num:neven}, we have to make a slightly different definition, which is only a technical annoyance. For any subset $X \in \sigma_{n/2-1}([n]-\{x_1\})$, at least one of the following holds, by Lemma~\ref{lem:suppsprod}:
\begin{enumerate}[label=(\roman*)]
\item\label{num:prodx} The subspace $\supp(\Tr_{\overline{\{x_1\} \sqcup X}}(\ket{\psi}\bra{\psi}))$ contains a state $\ket{\psi_1} \otimes \ket{\psi_X}$. 
\item The subspace $\supp(\Tr_{\overline{\{x_1\} \sqcup \overline{X}}}(\ket{\psi}\bra{\psi}))$ contains a state $\ket{\psi_1} \otimes \ket{\psi_{\overline{X}}}$. 
\end{enumerate}
We can therefore define an $(n/2-1)$-uniform hypergraph $G_1$ on the set $([n]-\{x_1\})$ as follows. For every $X \in \sigma_{n/2-1}([n]-\{x_1\})$, if $\supp(\Tr_{\overline{\{x_1\} \sqcup X}}(\ket{\psi}\bra{\psi}))$ contains a state $\ket{\psi_1} \otimes \ket{\psi_X}$ then we add $X$ as an edge to $G_1$. On the other hand, if $\supp(\Tr_{\overline{\{x_1\} \sqcup \overline{X}}}(\ket{\psi}\bra{\psi}))$ contains a state $\ket{\psi_1} \otimes \ket{\psi_{\overline{X}}}$ then we add all the subsets in $\sigma_{n/2-1}(\overline{X})$ as edges to $G_1$. Using this construction, we again find that the edge density $\theta_1$ of $G_1$ has the lower bound $\theta_1 \geq 1/2$, by the following combinatorial lemma.
\begin{lemma}[Edge density for a certain construction]\label{lem:oddedges}
Let $n \in \mathbb{N}$ be odd. We construct an $(n-1)/2$-uniform hypergraph $G$ as follows. For every subset $X \in \sigma_{(n-1)/2}([n])$, we do precisely one of the following:
\begin{enumerate}
    \item\label{num:option1} Add $X$ as an edge to $G$.
    \item\label{num:option2} Add all the subsets in $\sigma_{(n-1)/2}(\overline{X})$ as edges to $G$.
\end{enumerate}
The edge density $\theta$ of a hypergraph $G$ constructed in this way satisfies $\theta \geq 1/2$.
\end{lemma}
\noindent
This lemma is proved in Appendix~\ref{app:combinatorics}.
Again, by construction the hypergraph $G_1$ satisfies the desired property~\eqref{eq:giproperty}.

So, in summary, we have constructed a hypergraph $G_1$ on the vertex set $([n]-\{x_1\})$ satisfying the desired property~\eqref{eq:giproperty}, with edge density $\theta_1 \geq 1/2$, and which is $k_1$-uniform, where $k_1=(n-1)/2$ (if $n-1$ is even) or $k_1=n/2-1$ (if $n-1$ is odd).

\paragraph{Construction of the hypergraph $G_2$.} Having constructed the hypergraph $G_1$, we now pick the next qubit $x_2 \in ([n]-\{x_1\})$ at random. We define a new hypergraph $W_2$ ($W$ for `working', since this graph is an intermediate step towards obtaining $G_2$) on the vertex set $([n]-\{x_1,x_2\})$. The edges of $W_2$ are precisely the edges of $G_1$ which contain the vertex $x_2$, with that vertex removed. Formally, the edge set of $W_2$ is:
$$\{(E -\{x_2\})~|~ E\textrm{ is an edge of $G_1$ and } x_2 \in E\}.$$
The construction of $W_2$ from $G_1$ is illustrated in Figure~\ref{fig:w2}.
\begin{figure}
    \centering
    \includegraphics[valign=c,width=0.25\linewidth]{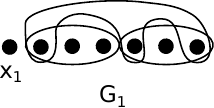}
    ~~~~~~~~~\includegraphics[valign=c,width=0.25\linewidth]{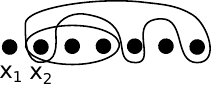}
    ~~~~~~~~~\includegraphics[valign=c,width=0.25\linewidth]{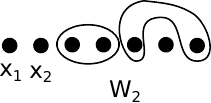}
    \caption{The construction of the hypergraph $W_2$ from the hypergraph $G_1$. The figures should be read from left to right. The first figure shows the graph $G_1$. In the second figure, the vertex $x_2$ is picked and all edges not containing that vertex are discarded. In the third figure, the vertex $x_2$ is removed from the remaining edges to obtain the hypergraph $W_2$ on $[n]-\{x_1,x_2\}$.}
    \label{fig:w2}
\end{figure}
This hypergraph $W_2$ is $l_2$-uniform, where $l_2 = ((n-3)/2)$ if $n-1$ was even, and $l_2 = (n/2-2)$ if $n-1$ was odd. To obtain a lower bound on its edge density, we use the following combinatorial lemma. Recall that the \emph{degree} of a vertex in a hypergraph is the number of edges containing it.
\begin{lemma}[Degree of a random vertex]\label{lem:randomvertex}
Let $G$ be a $k$-uniform hypergraph on $n$ vertices with edge density $\theta$, where $k = \beta n$ for some $\beta \in [0,1]$. Let $N$ be the degree of a vertex picked uniformly at random, and let $\tau:= N/{n-1 \choose k-1}$. Then, for any $\alpha \in(0,1)$:
\begin{align*}
\Pr\left[|\theta - \tau| \geq 
\Delta(\alpha,\beta,n)\right] \leq \alpha,
\end{align*}
where
\begin{align}\label{eq:deltadef}
\Delta(\alpha,\beta,n) =  \frac{5\exp\left(\frac{1}{12\alpha(1-\alpha)n}+\frac{1}{12\beta(1-\beta)n}+4\right)}{\alpha^2\sqrt{2\pi \alpha \beta (1-\alpha) (1-\beta) n}\left(1-\frac{2}{\beta n}\right)^{5/2}\left(1-\frac{2}{(1-\alpha)(1-\beta)n}\right)^{1/2}}.
\end{align}
\end{lemma}
\noindent
Note that when $\alpha,\beta$ are small constants, then $\Delta(\alpha,\beta,n)$ scales like $1/\sqrt{n}$.
The proof of this lemma can be found in Appendix~\ref{app:combinatorics}. 

Looking back at the construction of the hypergraph $W_2$, we see that the number of edges of $W_2$ is precisely the number of edges of $G_1$ which contained the randomly chosen vertex $x_2$; that is, the degree of the vertex $x_2$ in $G_1$. By Lemma~\ref{lem:randomvertex}, then, with probability greater than or equal to $(1-\alpha)$ the edge density of $W_2$ is lower bounded by $1/2-\Delta(\alpha,k_1/(n-1),n-1)$. For the moment, we will not fix a value for $\alpha \in (0,1)$; it is just a constant that we will fix later in order to obtain the theorem.

Now we will define the graph $G_2$. First we observe that, by the property~\eqref{eq:giproperty} obeyed by $G_1$, it follows that
for every edge $E$ in $W_2$ there is a state $\ket{\psi_1} \otimes \ket{\psi_{\{x_2\} \sqcup E}}$ in $\supp(\Tr_{\overline{\{x_1,x_2\}\sqcup E}}(\ket{\psi}\bra{\psi}))$.
If $l_2$ is even, we again consider all $(l_2/2)$-subsets of the edges of $W_2$: for every edge $E$ of $W_2$ and every $X \in \sigma_{l_2/2}(E)$, then at least one of the following holds, by applying Lemma~\ref{lem:suppsprod} to the state $\ket{\psi_{\{x_2\} \sqcup E}}$:
\begin{enumerate}[label=(\roman*)]
    \item\label{num2:prodx} The subspace $\supp(\Tr_{\overline{\{x_1,x_2\} \sqcup X}}(\ket{\psi}\bra{\psi})$ contains a state $\ket{\psi_{1}} \otimes \ket{\psi_{2}} \otimes \ket{\psi_X}$.
    \item The subspace $\supp(\Tr_{\overline{\{x_1,x_2\} \sqcup \overline{X}}}(\ket{\psi}\bra{\psi})$  contains a state $\ket{\psi_{1}} \otimes \ket{\psi_{2}} \otimes \ket{\psi_{\overline{X}}}$.
\end{enumerate}
We then define the $(l_2/2)$-uniform hypergraph $G_2$ as follows. For every edge $E$ in $W_2$, and every $X \in \sigma_{l_2/2}(E)$, if $\supp(\Tr_{\overline{\{x_1,x_2\} \sqcup X}}(\ket{\psi} \bra{\psi})$ contains a state $\ket{\psi_{1}} \otimes \ket{\psi_{2}} \otimes \ket{\psi_X}$ then we add $X$ as an edge to $G_2$. On the other hand, if $\supp(\Tr_{\overline{\{x_1,x_2\} \sqcup \overline{X}}}(\ket{\psi}\bra{\psi})$ contains a state $\ket{\psi_{1}} \otimes \ket{\psi_{2}} \otimes \ket{\psi_{\overline{X}}}$ then we add $\overline{X}$ as an edge to $G_2$.

On the other hand, if $l_2$ is odd then we define the $(l_2-1)/2$-uniform hypergraph $G_2$ as follows. For every edge $E$ of $W_2$ and every $X \in \sigma_{(l_2-1)/2}(E)$, at least one of the following holds, by applying Lemma~\ref{lem:suppsprod} to $\ket{\psi_{\{x_2\} \sqcup E}}$:
\begin{enumerate}[label=(\roman*)]
    \item The subspace $\supp(\Tr_{\overline{\{x_1,x_2\} \sqcup X}}(\ket{\psi}\bra{\psi})$ contains a state $\ket{\psi_{1}} \otimes \ket{\psi_{2}} \otimes \ket{\psi_X}$.
    \item The subspace $\supp(\Tr_{\overline{\{x_1,x_2\} \sqcup \overline{X}}}(\ket{\psi}\bra{\psi})$  contains a state $\ket{\psi_{1}} \otimes \ket{\psi_{2}} \otimes \ket{\psi_{\overline{X}}}$.
\end{enumerate}
For every edge $E$ in $W_2$ and every $X \in \sigma_{(l_2-1)/2}(E)$, if 
$\supp(\Tr_{\overline{\{x_1,x_2\} \sqcup X}}(\ket{\psi} \bra{\psi})$  contains a state $\ket{\psi_{1}} \otimes \ket{\psi_{2}} \otimes \ket{\psi_X}$ then we add $X$ as an edge to $G_2$. On the other hand, if $\supp(\Tr_{\overline{\{x_1,x_2\} \sqcup \overline{X}}}(\ket{\psi}\bra{\psi})$ contains a state $\ket{\psi_{1}} \otimes \ket{\psi_{2}} \otimes \ket{\psi_{\overline{X}}}$ then we add all the subsets in $\sigma_{(l_2-1)/2}(\overline{X})$ as edges to $G_2$. 

By construction, the graph $G_2$ obeys the desired property~\eqref{eq:giproperty}. 
We now need to obtain a lower bound on the edge density of $G_2$. For this we use the notion of the \emph{partial shadow} of a hypergraph. We first recall the notion of the shadow of a hypergraph, which is well known; see e.g.~\cite{Keevash2008}. 
\begin{definition}
Let $G$ be an $l$-uniform hypergraph on $n$ vertices, and let $k \leq l$. We define the \emph{$k$-shadow} $G|_k$ of $G$ to be the $k$-uniform hypergraph on $n$ vertices whose edges are precisely those $k$-subsets which are contained in an edge of $G$.
\end{definition}
\noindent
The notion of a partial shadow is less well-known, but has appeared in the combinatorial literature; see e.g.~\cite{Bollobas2015,Fitch2018}.
\begin{definition}
Let $G$ be an $l$-uniform hypergraph on $n$ vertices, let $k \leq l$, and let $\omega \in [0,1]$. We say that an $k$-uniform hypergraph $G|_{k,\omega}$ on $n$ vertices is a \emph{partial $(k,\omega)$-shadow} of $G$ if it can be obtained by the following construction. For each edge $E$ of $G$, choose a subset $K_{E} \subset \sigma_{k}(E)$ of size $|K_{E}| = \lfloor \omega {l \choose k} \rfloor$. The edge set of $G|_{k,\omega}$ is then defined as $\bigcup_{E \in G} K_{E}$.
\end{definition}
\noindent
Obviously there are many different partial $(k,\omega)$-shadows of an $l$-uniform hypergraph $G$; they depend on the choice made of the set $K_{E}$ for each edge $E \in G$. The following lemma gives a lower bound on the edge density of any partial shadow.
\begin{lemma}[Edge density of partial shadows]\label{lem:partialedgedensity}
Let $G$ be an $l$-uniform hypergraph with edge density $\theta$, and let $k\leq l$ and $\omega \in [0,1]$. The edge density $\theta|_{k,\omega}$ of any partial $(k,\omega)$-shadow of $G$ has the lower bound 
$$
\theta|_{k,\omega} \geq \omega \theta.
$$

\end{lemma}
\noindent 
This lemma is proved in Appendix~\ref{app:combinatorics}. Let us see how it allows us to lower bound the edge density of $G_2$. Recall that if the edge size $l_2$ of $W_2$ is even, $G_2$ is constructed by considering, for each edge $E$ of $W_2$, every subset in $X \in \sigma_{l_2/2}(E)$, and adding either $X$ or $\overline{X}$ (or both) as an edge of $G_2$. In this case the hypergraph $G_2$ therefore contains a partial $(l_2/2,1/2)$-shadow of $W_2$. By Lemma \ref{lem:randomvertex}, we know that the edge density of $W_2$ is lower bounded by $1/2-\Delta(\alpha,k_1/(n-1),n-1)$ with probability $(1-\alpha)$. By Lemma~\ref{lem:partialedgedensity}, therefore, with probability $(1-\alpha)$ the edge density $\theta_2$ of $G_2$ has the following lower bound:
\begin{align}\label{eq:lowerboundtheta2}
\theta_2 \geq \frac{1}{4} - \frac{\Delta(\alpha, k_1/(n-1),n-1)}{2} \geq \frac{1}{4} - \Delta(\alpha, k_1/(n-1),n-1).
\end{align}
Here for the second inequality (which will simplify the later analysis) we assume that $n$ is big enough that $\Delta(\alpha,k_1/(n-1),n-1)$ is positive. 

On the other hand, recall that if the edge size of $W_2$ is odd, $G_2$ is constructed by considering, for every edge $E$, every $(l_2-1)/2$-subset of that edge and adding either that subset or every $(l_2-1)/2$-subset in its complement to $G_2$. By Lemma~\ref{lem:oddedges}, we therefore add at least half of the $(l_2-1)/2$-subsets of each edge to $G_2$, and it follows that $G_2$ contains an $((l_2-1)/2,1/2)$-shadow of $W_2$. We can therefore again apply  Lemma~\ref{lem:partialedgedensity}, and we again obtain the lower bound~\eqref{eq:lowerboundtheta2} on the edge density of $G_2$ with probability $(1-\alpha)$. 

In summary, we have constructed a hypergraph $G_2$ on the vertex set $([n]-\{x_1,x_2\})$ satisfying the desired property~\eqref{eq:giproperty}, with edge density, which is $k_2$-uniform, where $k_2=l_2/2$ (if $l_2$ is even) or $k_2=(l_2-1)/2$ (if $l_2$ is odd), and whose edge density has the lower bound~\eqref{eq:lowerboundtheta2} with probability $(1-\alpha)$.

\paragraph{Construction of the hypergraphs $G_3,\dots,G_{c-1}$.} Now that the construction of $G_2$ from $G_1$ has been exemplified, it is straightforward to iterate the construction to obtain the hypergraphs $G_3,\dots,G_{c-1}$. We will briefly review the procedure to go from the hypergraph $G_i$ to the hypergraph $G_{i+1}$. First the qubit $x_{i+1}$ is picked at random. The hypergraph $W_{i+1}$ is defined on the vertex set $[n]-\{x_1,\dots,x_{i+1}\}$ by taking all edges of $G_i$ containing the vertex $x_{i+1}$, removing that vertex from those edges, and adding those edges (with the vertex $x_{i+1}$ removed) to the hypergraph $W_{i+1}$. The hypergraph $G_{i+1}$ is then defined as follows:
\begin{itemize}
    \item If the edge size $l_{i+1}$ of $W_{i+1}$ is even, then for every edge $E$ in $W_{i+1}$, and every $X \in \sigma_{l_{i+1}/2}(E)$, if $\supp(\Tr_{\overline{\{x_1,\dots,x_{i+1}\} \sqcup X}}(\ket{\psi} \bra{\psi}))$ contains a product state $\ket{\psi_{1}} \otimes \cdots \otimes \ket{\psi_{i+1}} \otimes \ket{\psi_X}$ then we add $X$ as an edge to $G_{i+1}$. On the other hand, if $\supp(\Tr_{\overline{\{x_1,\cdots,x_{i+1}\} \sqcup \overline{X}}}(\ket{\psi}\bra{\psi}))$ contains a product state $\ket{\psi_{1}} \otimes \cdots \otimes \ket{\psi_{i+1}} \otimes \ket{\psi_{\overline{X}}}$ then we add $\overline{X}$ as an edge to $G_{i+1}$.
\item If the edge size $l_{i+1}$ of $W_{i+1}$ is odd, then for every edge $E$ in $W_{i+1}$ and every $X \in \sigma_{(l_{i+1}-1)/2}(E)$, if 
$\supp(\Tr_{\overline{\{x_1,\dots,x_{i+1}\} \sqcup X}}(\ket{\psi} \bra{\psi}))$  contains a product state $\ket{\psi_{1}} \otimes \cdots \otimes \ket{\psi_{i+1}} \otimes \ket{\psi_X}$ then we add $X$ as an edge to $G_{i+1}$. On the other hand, if $\supp(\Tr_{\overline{\{x_1,\dots,x_{i+1}\} \sqcup \overline{X}}}(\ket{\psi}\bra{\psi}))$ contains a product state $\ket{\psi_{1}} \otimes  \cdots \otimes \ket{\psi_{i+1}} \otimes \ket{\psi_{\overline{X}}}$ then we add all the subsets in $\sigma_{(l_{i+1}-1)/2}(\overline{X})$ as edges to $G_{i+1}$. 
\end{itemize}
By definition, the hypergraph $G_{i+1}$ satisfies the desired property~\eqref{eq:giproperty}. By Lemma~\ref{lem:suppsprod} and Lemma~\ref{lem:oddedges}, using the same argument as we made for the transition from $G_{1}$ to $G_2$, we see that $G_{i+1}$ contains a partial $(l_{i+1}/2,1/2)$-shadow of $W_{i+1}$ (if the edge size of $W_{i+1}$ is even) and a partial $((l_{i+1}-1)/2,1/2)$-shadow of $W_{i+1}$ (if the edge size of $W_{i+1}$ is odd). This will allow us to lower bound the edge size and edge density of $G_{c-1}$ using Lemma~\ref{lem:randomvertex} and Lemma~\ref{lem:partialedgedensity}, as we did for $G_2$.

\paragraph{Lower bounds on the edge size and edge density of $G_{c-1}$.}
Let us first compute bounds on the edge size of the hypergraph $G_i$. From the construction we have outlined we see that, where the edge size of the hypergraph $G_i$ is $k_i$ and the edge size of the hypergraph $W_i$ is $l_i$, we have
$$
l_{i+1} = k_i-1
$$
and 
$$
k_{i+1} = l_{i+1}/2 = (k_i-1)/2 \mathrm{ ~~~or~~~ } k_{i+1} = l_{i+1}/2-1/2 = k_{i}/2 -1.
$$
The choice between the two values of $k_{i+1}$ depends on whether the edge size $l_{i+1}$ of $W_{i+1}$ is even or odd. It follows that an upper bound on the edge size is given by solving the linear recurrence relation $k_{i+1} = (k_{i}-1)/2$ with the initial condition $k_{0} = n$, and an upper bound is given by solving the linear recurrence relation $k_{i+1} = k_i/2 -1$ with the initial condition $k_0 = n$.
We thereby obtain the following bounds on $k_i, l_{i+1}$:
\begin{align}\nonumber
    n/2^i > n/2^i +1/2^i-1 \geq k_{i} &\geq n/2^i+1/2^{i-1}-2 > n/2^i-2, \\\label{eq:lkbounds}
    n/2^i -1 > n/2^i +1/2^i-2 \geq l_{i+1} &\geq n/2^i+1/2^{i-1}-3 > n/2^i - 3.
\end{align}
In particular, 
\begin{align}\label{eq:kc-1bound}
n/2^{c-1} > k_{c-1} > n/2^{c-1}-2. 
\end{align}

We now want to obtain a lower bound on the edge density of these hypergraphs. We can make the same argument as for the passage from $G_1$ to $G_2$: if $\theta_i$ is the edge density of $G_i$, then with probability at least $(1-\alpha)$ the edge density of $G_{i+1}$ has the following lower bound:
$$
\theta_{i+1} \geq \frac{\theta_{i}}{2} - \Delta(\alpha,k_{i}/(n-i),n-i).
$$
Given that the edge density $\theta_1$ of $G_1$ is at least $1/2$, and using the union bound for the probabilities, it follows that, with probability at least $(1-(c-2)\alpha)$,
\begin{align}\label{eq:c-1basicbound}
\theta_{c-1} \geq \frac{1}{2^{c-1}} - \Delta(\alpha, k_1/(n-1),n-1) - \dots - \Delta(\alpha,k_{c-2}/(n-(c-2)),n-(c-2)).
\end{align}
We have thus obtained our desired lower bounds on the edge size and edge density of $G_{c-1}$. We will now make numerous approximations to simplify our subsequent calculations. Throughout we will accumulate lower bounds on $n$ that we will bring together at the end of the proof. 

Firstly, we will simplify 
$\Delta(\alpha,k_i/(n-i),n-i)$; recall the definition~\eqref{eq:deltadef}. We observe that if the conditions
$$
n \geq \frac{1}{12}\left(\frac{1}{\alpha(1-\alpha)}+\frac{1}{\beta(1-\beta)}\right)+i 
$$
$$
n \geq \frac{2}{\beta(1-(1/2)^{2/5})}+i
$$
$$
n \geq \frac{8}{3(1-\alpha)(1-\beta)}+i
$$
are obeyed, then $$\Delta(\alpha,\beta,n-i) \leq \frac{20e^5}{\alpha^2\sqrt{2\pi\alpha(1-\alpha)\beta(1-\beta)(n-i)}}.$$
Now observe from~\eqref{eq:lkbounds} that, for 
\begin{align}\label{eq:ncondnew1}
    n \geq 2^{c+5},
\end{align} since $i \leq c$, and $i \geq 2$ whenever we apply Lemma~\ref{lem:randomvertex}:
\begin{align}\label{eq:kiinequal} 
k_i/(n-i) > n/(2^{i}(n-i)) - 2/(n-i)>1/2^{i} - 1/2^{c+3} > 1/2^{c+1}
\end{align}
\begin{align*}
k_i/(n-i) < n/(2^i(n-i)) < 1/2^{i-1} < 1/2
\end{align*}
We therefore conclude from what has just been said that if 
\begin{align}\label{eq:ncondnew21}
n \geq \frac{1}{12}\left(\frac{1}{\alpha(1-\alpha)}+2^{c+2}\right)+c 
\end{align}
\begin{align}
\label{eq:ncondnew22}
n \geq 2^{c+5} + c    
\end{align}
\begin{align}
\label{eq:ncondnew23}
n \geq \frac{2^{3}}{1-\alpha}+c,  
\end{align}
then 
\begin{align}\label{eq:deltatildedef}\Delta(\alpha,k_i/(n-i),n-i) \leq \widetilde{\Delta}(\alpha,n):= \frac{20e^5(2^{c+1})}{\alpha^2\sqrt{2\pi\alpha(1-\alpha)(2^{c+1}-1)(n-c)}}.\end{align}
(This can be seen by observing that $\beta(1-\beta)$ is a quadratic with maximum at $\beta = 1/2$; therefore a lower bound for $\beta = k_i/(n-i)< 1/2$ will give an lower bound for $\beta(1-\beta)$, and by~\eqref{eq:kiinequal} we have $k_i/(n-i) > 1/2^{c+1}$.) 
Applying these simplifications to~\eqref{eq:c-1basicbound}, we obtain
\begin{align}\label{eq:c-1edgedensity}
\Pr[\theta_{c-1} \geq \Phi] \geq 1-(c-2)\alpha \geq 1-c\alpha,
\end{align}
where 
\begin{align}\label{eq:defphi}
\Phi =  \frac{1}{2^{c}}-c\widetilde{\Delta}(\alpha,n)
\end{align}

\paragraph{Final computations.}  Now that we have lower bounds on the edge size and edge density of the hypergraph $G_{c-1}$, we can go back to Lemma~\ref{lem:finalvertex} and lower bound the probability that the final picked vertex $x_c$ lies in an edge of that hypergraph. 

By~\eqref{eq:c-1edgedensity} we know that the probability that the edge density of $G_{c-1}$ is greater than $\Phi$ is lower bounded by $1-c\alpha$. Given that the edge density of $G_{c-1}$ is greater than $\Phi$, we see from Lemma~\ref{lem:finalvertex} that the probability of a randomly picked vertex lying in an edge of $G_{c-1}$ is greater than $(\Phi)^{1/k_{c-1}}$, provided that \begin{align}\label{eq:binomfraccondition1}(\Phi)^{1/k_{c-1}}(n-(c-1)) \geq k_{c-1}.\end{align}
We will check this condition at the end of the proof, but for now let us assume that it holds. 

It follows that the probability $p$ that $\supp(\Tr_{\overline{C}}(\ket{\psi}\bra{\psi}))$ contains a product state has the following lower bound:
\begin{align}\label{eq:pinequality}
p \geq (1-c\alpha)(\Phi)^{1/k_{c-1}} \geq (1-c\alpha)(\Phi)^{2^{c-1}/(n-2^{c})}
\end{align}
In order to obtain the theorem we will suppose that some desired lower bound $\overline{p}$ for $p$ is given and choose the free variables $\alpha, n$ in order to satisfy the inequality  
\begin{align}\label{eq:poverlineequality}
(1-c\alpha)(\Phi)^{2^{c-1}/(n-2^{c})} \geq \overline{p}
\end{align}
while keeping $n$ as small as possible. We have already made and will continue to make numerous simplifying assumptions, so the value of $n$ we arrive at will not be optimal. 

Assuming that $1-c\alpha >0$, substituting in the definitions~\eqref{eq:deltatildedef}, \eqref{eq:defphi} of $\Phi$ and $\widetilde{\Delta}$ into ~\eqref{eq:poverlineequality} and rearranging, we obtain:
$$
\left(\frac{\overline{p}}{1-c\alpha}\right)^{(n-2^c)/2^{c-1}} \leq \frac{1}{2^{c}} - \frac{20e^5c(2^{c+1})}{\alpha^2\sqrt{2\pi\alpha(1-\alpha)(2^{c+1}-1)(n-c)}}
$$
As $n$ gets bigger, the quantity on the right of the inequality becomes larger. Let us therefore suppose that $n$ is large enough that 
\begin{align}\label{eq:nconditionnew3}
\frac{20e^5c(2^{c+1})}{\alpha^2\sqrt{2\pi\alpha(1-\alpha)(2^{c+1}-1)(n-c)}} \leq \frac{1}{2^{c+1}}.
\end{align}
This is implied by 
\begin{align}\label{eq:nconditionnew2}
n \geq \frac{c^2(2^{3c+20})}{\alpha^5(1-\alpha)}.
\end{align}
With this lower bound on the size of $n$, we obtain the simpler inequality
$$
\left(\frac{\overline{p}}{1-c\alpha}\right)^{(n-2^c)/2^{c-1}} \leq \frac{1}{2^{c+1}}.
$$
Let us now set \begin{align}\label{eq:alphadefnew}
\alpha = (1-\overline{p}^{1/2})/c.
\end{align}
This gives
$$
\overline{p}^{(n-2^{c})/2^c} \leq \frac{1}{2^{c+1}},
$$
which follows whenever 
\begin{align}\label{eq:ncondnew3}
n \geq 2^c\left(\frac{(c+1)\ln(2)}{-\ln(\overline{p})} + 1\right).
\end{align}
We have therefore obtained the desired lower bounds on $n$ for $\supp(\Tr_{\overline{C}}(\ket{\psi}\bra{\psi}))$ to contain a product state with probability greater than $\overline{p}$. We will shortly accumulate and simplify these lower bounds to obtain the theorem. 
Before doing this, we need to check the condition~\eqref{eq:binomfraccondition1}. Fully expanded and using the bounds~\eqref{eq:kc-1bound} on $k_{c-1}$, this condition is implied by
$$
(n-(c-1))\left(\frac{1}{2^{c}}-\frac{20e^5c(2^{c+1})}{\alpha^2\sqrt{2\pi\alpha(1-\alpha)(2^{c+1}-1)(n-c)}} \right)^{2^{c-1}/(n-2^c)} \geq \frac{n}{2^{c-1}}.
$$
Again, we assume the lower bound~\eqref{eq:nconditionnew2} on $n$, which implies~\eqref{eq:nconditionnew3}. Therefore the condition is implied by
$$
(n-c)\left( \frac{1}{2^{c+1}}\right)^{2^{c-1}/(n-2^c)} \geq \frac{n}{2^{c-1}},
$$
which is implied by 
\begin{align}\label{eq:ncondnew4}
    n \geq 2^{c+1}.
\end{align}
We can now collect the various lower bounds on $n$ we have assumed throughout this proof. These are~\eqref{eq:ncondnew1}, \eqref{eq:ncondnew21}, \eqref{eq:ncondnew22}, \eqref{eq:ncondnew23}, \eqref{eq:nconditionnew2}, \eqref{eq:ncondnew3} and \eqref{eq:ncondnew4}. We see that all the lower bounds except~\eqref{eq:ncondnew3} are implied by~\eqref{eq:nconditionnew2}. This leaves us with two lower bounds; inserting the definition~\eqref{eq:alphadefnew} of $\alpha$ in~\eqref{eq:nconditionnew2} and simplifying yields the lower bound in the theorem statement.
\end{proof}

\section{Proof of Theorem~\ref{thm:localunsat}}\label{sec:localunsat}

\subsection{Overview}

\begin{theorem*}[Restatement of Theorem~\ref{thm:localunsat}]
There exists a constant $c(k,\epsilon)$ (i.e. not depending on $n$) such that the following holds for large enough $n$. Let $([n],\{\Pi_s\})$ be any instance of quantum $k$-SAT which is $\epsilon$-far from satisfiable by a product state. Then, for a randomly chosen subset $C \in \sigma_{c(k,\epsilon)}([n])$, the instance is locally unsatisfiable by a product state at $C$ with probability $p > 0.75$.
\end{theorem*}
\noindent
We first remark that one idea for a proof of this theorem is to add the usual polynomial ground state energy bound to the definition of quantum $k$-SAT, then use the standard approach with a $\delta$-net to reduce the problem of satisfiability by a product state to an instance of classical $k'$-SAT. It is straightforwardly seen that this instance of $k'$-SAT is also $\epsilon$-far from satisfiable, so the result from~\cite{Alon2003} can be directly applied to show local unsatisfiability. The problem is that the constant $\delta$ determining the precision of the net, and therefore also the number $k'$, depends polynomially on $n$; this implies that the size of the subsets to be checked also depends at least polynomially on $n$. Since the time taken to check for a product state solution is exponential in the size of the subset, as we saw in the proof of Corollary~\ref{cor:testing}, we end up with time exponential in $n$. 

Instead, we give a proof which is heavily inspired by Alon and Shapira's proof of the classical analogue (to the point where we were able to lift most of the constants from their work), and which is in any case more general than the above approach since it does not require any lower bound on the ground state energy when an instance of quantum $k$-SAT is unsatisfiable. 

Firstly, we consider the problem of extending a local assignment, which in our setting is a product subspace. We define the notion of a bad qubit; that is, a qubit to which either the local assignment cannot be extended, or such that extension to that qubit greatly reduces the dimension of possible extensions to the other qubits. If, in the course of constructing a local assignment by extension, one extends through $5(2)^{k-1}/\epsilon$ bad qubits, then the assignment can no longer be extended (Lemma~\ref{lem:tree}). We show that if an instance of quantum $k$-SAT is $\epsilon$-far from satisfiable by a product state, then there are always more than $\epsilon n/5$ bad qubits with respect to any local assignment (Lemma~\ref{lem:enoughbad}). 

At this point, Alon and Shapira~\cite[Claim 2.4]{Alon2003} built a binary tree of possible assignments to a set of randomly picked variables and showed that each branch of the tree will run into more than $5(2)^{k-1}/\epsilon$ bad variables with high probability; they then used the union bound to conclude that there is no local assignment to those variables with high probability. This approach cannot be applied directly in the quantum setting because there are in general continuously many assignments to each variable (the space of possible assignments is a Hilbert space and not a binary set) and it is therefore not possible to build a finite tree. Instead, we use a backtracking approach. In order to eliminate all possible solutions in a finite process, we still need to discretise somehow; for this we use Lemma~\ref{lem:prodcount}, which implies that we need only eliminate a finite number of possible solutions.

\subsection{Definitions}

\begin{notation}
Throughout this section we write $|H|:= \dim(H)$ for the dimension of a Hilbert space $H$ and $|X|$ for the cardinality of a set $X$. When the context does not adequately distinguish these two meanings we explicitly use $\dim(H)$ for the dimension and $\#(X)$ for the cardinality. 

Let $V,W$ be two sets. We write $\sigma_k(\underline{V} \sqcup W)$ for the set of $k$-element subsets of $V \sqcup W$ containing all the elements of $V$; that is, underlining one of the sets in the union means all its elements must be contained in the $k$-element subsets. 

For a subset $X \subseteq [n]$ we write $H_X := (\mathbb{C}^2)^{\otimes |X|}$ for the Hilbert space of the qubits in $X$. 
\end{notation}

\begin{definition}
Fix a subset $S \subset [n]$, and let $H_S := (\mathbb{C}^{2})^{\otimes |S|}$ be the Hilbert space of the qubits in $S$. We say that an \emph{assignment} to $S$ is a product subspace $P_S \subseteq H_S$ such that every state in the subspace is a local solution for $([n],\{\Pi_s\})$ at $S$; that is, a subspace $P_S = V_1 \otimes \dots \otimes V_{|S|} \subseteq H_S$ such that every state $\ket{\phi_S} \in P_S$ satisfies $\sum_{s \in \sigma_{k}(S)} \Pi_s \ket{\phi_S} = 0$. 

We will often simplify language by calling a pair of a subset $S \subset [n]$ and an assignment $P_S \subset H_S$ a \emph{local assignment} $(S,P_S)$.
\end{definition}
\noindent
Following Alon and Shapira, we are now going to define some auxiliary Hilbert spaces that will help us to understand the possible extensions of a local assignment $(S,P_S)$. 
We will begin by choosing any set of qubits $\{x_1,\dots, x_j\}$ not in $S$, and defining a Hilbert space $L_{S,P_S}(x_1,\dots,x_j)$ of possible states of the qubits $x_1,\dots,x_j$ that are compatible with the assignment $(S,P_S)$.
\begin{definition}\label{def:LSPS}
For every $\{x_1,\dots,x_j\} \in \sigma_j([n] - S)$, where $1 \leq j \leq k-1$, we define $L_{S,P_S}(x_1,\dots,x_j) \subseteq H_{x_1,\dots,x_j}$ to be the subspace of all states $\ket{v_{x_1,\dots,x_j}} \in H_{x_1,\dots,x_j}$ (not necessarily product states!) such that
\begin{equation}\label{eq:bigl}
\sum_{s \in  \sigma_k(\underline{\{x_1,\dots,x_j\}} \sqcup  S) } \Pi_s (\ket{v_{x_1,\dots,x_j}} \otimes \ket{\phi_{S}}) = 0 \qquad \forall \ket{\phi_S} \in P_S.
\end{equation}
Note the underline in the sum: we are saying that $\ket{v_{x_1,\dots,x_j}} \otimes P_S$ is in the kernel of every projector on the qubits $\{x_1,\dots,x_j\} \sqcup S$ which acts on \emph{all} of the qubits $x_1,\dots,x_j$.
\end{definition}
\noindent
Now that we have our individual spaces $L_{S,P_S}(x_1,\dots,x_j)$ describing possible extensions to individual sets of qubits $\{x_1,\dots,x_j\}$, what we want to do is understand the space of all possible extensions to all subsets of $j$ qubits, so that we can understand how fast it shrinks as we extend an assignment. This motivates the following definition.
\begin{definition}\label{def:ljdef}
We write the formal orthogonal direct sum of Hilbert spaces using the following notation:
$$L_{S,P_S,j} := \bigoplus_{\{x_1,\dots,x_j\} \in \sigma_j([n]-S)} L_{S,P_S}(x_1,\dots,x_j).$$ 
\end{definition}
\noindent
The Hilbert space $L_{S,P_S,j}$ is the direct sum of all the spaces of possible $j$-qubit extensions of the assignment $(S,P_S)$. Its dimension measures `how many' possible $j$-qubit extensions there are to the local assignment $(S,P_S)$.

Now that we have defined our Hilbert spaces that characterise possible extensions to an assignment, we can consider what happens to them when we do extensions. The notion of extension is straightforward: 
\begin{itemize}
    \item We begin with some local assignment $(S,P_S)$.
    \item We choose any qubit $x \notin S$ such that $L_{S,P_S}(x) \neq \emptyset$. 
    \item We choose any subspace $V_x \subset L_{S,P_S}(x)$.
    \item We define the extended assignment $(\{x\} \sqcup S, V_x \otimes P_S)$, which (as is easy to check) is indeed a valid local assignment to the qubits $\{x\} \sqcup S$.
\end{itemize}
When we extend an assignment, the space of possible extensions will shrink. We want some number that will allow us to measure how much it shrinks. For this, we rely on a combinatorial ansatz of Alon and Shapira~\cite{Alon2003}.
\begin{definition}
Suppose we begin with a local assignment $(S,P_S)$ and extend it to a local assignment $(\{x\} \sqcup S, V_x \otimes P_S)$. We define the following numbers:
\begin{align*}
\delta_{S,P_S,x,V_x,j} &:= |L_{S,P_S,j}| - |L_{\{x\} \sqcup S,V_x \otimes P_S,j}|\\
&=\sum_{\{x_1,\dots,x_j\} \in \sigma_j([n]-(\{x\} \sqcup S))} |L_{S,P_S}(x_1,\dots,x_j)| - |L_{\{x\} \sqcup S, V_x \otimes P_S}(x_1,\dots,x_j)|\\
\delta_{S,P_S,x,V_x} &:= \sum_{j=1}^{k-1} \delta_{S,P_S,x,V_x,j}~ n^{k-j-1}\end{align*}
In words, $\delta_{S,P_S,x,V_x,j}$ measures how much the dimension of the space of possible $j$-qubit extensions has been reduced by extending the assignment $(S,P_S)$ to $(\{x\} \sqcup S, V_x \otimes P_S)$. The second number $\delta_{S,P_S,x,V_x}$, which does not depend on $j$, is a weighted sum over the $\delta_{S,P_S,x,V_x,j}$ which encapsulates the total reduction of possible extensions in a single number.
\end{definition}
\noindent
For Theorem~\ref{thm:localunsat}, we want to show that a product state solution does not exist. The basic idea of our argument is to try to gradually build a solution by extension and backtracking. We will show that we run out of possible extensions before we ever get to an assignment on all the $n$ qubits. 

In order to show this, we will (again following Alon and Shapira) define a notion of a `bad' qubit. By `bad', we mean that it is bad for creating an assignment by extension (and therefore good for our proof). There are two ways in which a qubit can be bad. The first is that we simply cannot extend to it, because there are no extensions that do not conflict with our current local assignment. We call such a qubit `conflicting'. The second is that we can extend to it, but any such extension will drastically reduce the number of possible extensions to the other qubits, as measured by the number $\delta_{S,P_S,x,V_X}$; we call such a qubit `heavy'. Formally, we have the following definition. 
\begin{definition} 
Let $(S,P_S)$ be a local assignment, and let 
$x \in ([n]-S)$ be any qubit. 
\begin{itemize}
\item We say that the qubit $x$ \emph{conflicts} with $(S,P_S)$ if $$L_{S,P_S}(x) = \emptyset.$$
\item If a qubit $x \in ([n]-S)$ does not conflict with $(S,P_S)$, we say that it is \emph{heavy} with respect to (w.r.t.) $(S,P_S)$ if $$\min_{V_x \subseteq L_{S,P_S}(x)}\delta_{S,P_S,x,V_x} > \epsilon n^{k-1}/5.$$
Here the $\epsilon$ is from the statement of Theorem~\ref{thm:localunsat}, which can be found at the beginning of this section.
\end{itemize}
We say that a qubit $x \in ([n]-S)$ is \emph{bad} w.r.t. $(S,P_S)$ if it is either heavy or conflicting.
\end{definition}
\noindent
Before commencing the proof, we will make one final notational definition, which introduces no new ideas. It will help simplify notation in the proof of Lemma~\ref{lem:enoughbad}, where we define a set of subsets $\Sigma \subset \sigma_k([n])$ and only consider projectors on subsets not in $\Sigma$.
\begin{definition}
Let $\Sigma \subset \sigma_k([n])$. 
Let $(S,P_S)$ be a local assignment, and let $\{x_1,\dots,x_j\} \in \sigma_{j}([n]-S)$. Recall the definition of $L_{S,P_S}(x_1,\dots,x_j)$ (Definition~\ref{def:LSPS}). Analogously, we define a subspace
$$L_{S,P_S}^{\Sigma}(x_1,\dots,x_j) \subseteq H_{x_1,\dots,x_j}$$ 
as the space of all states $\ket{v_{x_1,\dots,x_j}} \in H_{x_1,\dots,x_j}$ such that 
$$
\label{eq:biglsigma}
\sum_{s \in  \sigma_k(\underline{\{x_1,\dots,x_j\}} \sqcup  S) - \Sigma } \Pi_s (\ket{v_{x_1,\dots,x_j}} \otimes \ket{\phi_{S}}) = 0 \qquad \forall \ket{\phi_S} \in P_S.
$$
That is, the definition is identical to that of $L_{S,P_S}(x_1,\dots,x_j)$, but the projectors in $\Sigma$ are not considered. We therefore have 
$$
L_{S,P_S}(x_1,\dots,x_j) \subseteq L_{S,P_S}^{\Sigma}(x_1,\dots,x_j) \subseteq H_{x_1,\dots,x_j}.
$$
\end{definition}
\subsection{Part 1: Extending assignments}

Part 1 of the proof very closely follows Alon and Shapira. As discussed in the introduction, our aim in Part 1 is to show the following:
\begin{enumerate}
    \item If, in the course of constructing a local assignment by extension, one extends through $5(2)^{k-1}/\epsilon$ heavy qubits, then the assignment can no longer be extended (Lemma~\ref{lem:tree}). 
    \item If an instance of quantum $k$-SAT is $\epsilon$-far from satisfiable by a product state, then there are always more than $\epsilon n/5$ bad qubits with respect to any local assignment (Lemma~\ref{lem:enoughbad}). 
\end{enumerate}
In order to show the first result, we will define a `chain', which is precisely a local assignment that has been constructed by repeatedly extending through heavy qubits. 
\begin{definition}\label{def:chain}
A \emph{chain} of length $m$ is a local assignment $(\{y_1,\dots,y_{m}\},P_1 \otimes \dots \otimes P_{m})$ such that:
\begin{itemize}
    \item For each $2 \leq l \leq m$, each qubit $y_{l}$ is heavy with respect to the assignment $(\{y_1,\dots,y_{l-1}\},P_1 \otimes \dots \otimes P_{l-1})$.
    \item The qubit $y_1$ is heavy with respect to the empty assignment $(\emptyset,\emptyset)$.
\end{itemize}
\end{definition}
\noindent 
We can now prove the first lemma. Although it only applies to chains, i.e. assignments where every qubit is heavy with respect to the preceding assignment, that will be all we need for the proof. 
\begin{lemma}\label{lem:tree}
Let $(\{y_1,\dots,y_{\gamma}\},P_1 \otimes \dots \otimes P_{\gamma})$ be a chain of length $\gamma(k,\epsilon):= 5 (2)^{k-1}/\epsilon$. Then every qubit $x \in ([n]-\{y_1,\dots,y_{\gamma}\})$ is conflicting w.r.t. the chain.
\end{lemma}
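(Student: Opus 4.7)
The plan is to use a potential-function argument on the local assignment built up along the chain. Writing $S_m := \{y_1,\ldots,y_m\}$ and $P^m := P_1 \otimes \cdots \otimes P_m$, define
\[
\Phi(S, P_S) := \sum_{j=1}^{k-1} n^{k-j-1} \sum_{J \in \sigma_j([n]-S)} |L_{S,P_S}(J)|,
\]
which is a sum of dimensions weighted by non-negative powers of $n$, so $\Phi \geq 0$. The two claims I aim to establish are (i) $\Phi(\emptyset,\emptyset) < 4^{k-1} n^{k-1}$, and (ii) for each $1 \leq m \leq \gamma$, the heavy hypothesis on $y_m$ implies $\Phi(S_{m-1}, P^{m-1}) - \Phi(S_m, P^m) > \epsilon n^{k-1}/5$. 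Telescoping then gives $\Phi(\emptyset,\emptyset) - \Phi(S_\gamma, P^\gamma) > \gamma \cdot \epsilon n^{k-1}/5 = 4^{k-1} n^{k-1}$, forcing $\Phi(S_\gamma, P^\gamma) < 0$ and contradicting $\Phi \geq 0$. It follows that no chain of length $\gamma$ admits any non-conflicting qubit outside it; equivalently, the hypothesised chain can only occur when every remaining qubit is already conflicting, which is the lemma.

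For (i), note that whenever $|J| < k$ the constraint set $\sigma_k(\underline{J})$ is empty, so $L_{\emptyset,\emptyset}(J) = H_J$ has dimension $2^{|J|}$. Combining $\binom{n}{j} \leq n^j/j!$ with the elementary bound $2^j/j! \leq 2$ for every $j \geq 1$ yields
\[
\Phi(\emptyset,\emptyset) \leq n^{k-1} \sum_{j=1}^{k-1} \frac{2^j}{j!} \leq 2(k-1)\, n^{k-1} < 4^{k-1} n^{k-1}
\]
for every $k \geq 2$; this step is routine.

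For (ii), partition $\sigma_j([n]-S_{m-1})$ according to whether $y_m \in J$ or not: the subsets with $y_m \notin J$ are exactly $\sigma_j([n]-S_m)$, while those with $y_m \in J$ are in bijection with $\sigma_{j-1}([n]-S_m)$ via $J \leftrightarrow J - \{y_m\}$. Reading the definition of $\Phi$ through this split yields the exact identity
\[
\Phi(S_{m-1}, P^{m-1}) - \Phi(S_m, P^m) = \delta_{S_{m-1}, P^{m-1}, y_m, P_m} + \Delta_m,
\]
where $\Delta_m := \sum_{j=1}^{k-1} n^{k-j-1} \sum_{J' \in \sigma_{j-1}([n]-S_m)} |L_{S_{m-1}, P^{m-1}}(J' \sqcup \{y_m\})|$ is manifestly non-negative. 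Specialising the heavy hypothesis on $y_m$ to the choice $V_{y_m} = P_m$ gives $\delta_{S_{m-1}, P^{m-1}, y_m, P_m} > \epsilon n^{k-1}/5$, which provides the required drop.

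The step I expect to be most delicate is (ii). The individual dimensions $|L_{S, P_S}(J)|$ need not nest monotonically when one passes from $(S, P_S)$ to $(S \cup \{x\}, V_x \otimes P_S)$, because the extension simultaneously introduces additional projectors (those acting on $x$) and enlarges the quantified set of assignments to $V_x \otimes P_S$, so a given $L(J)$ may in principle grow or shrink; consequently one cannot argue termwise. The resolution is that the argument is only sensitive to the aggregated quantity $\delta$, which is exactly what the heavy hypothesis bounds from below, and the correction term $\Delta_m$ is a sum of dimensions with non-negative coefficients and thus only helps. Once this partition is written out carefully, the remaining arithmetic is straightforward.
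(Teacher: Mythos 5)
Your proof is correct and follows essentially the same route as the paper's: the potential $\Phi$ is exactly the paper's quantity $W_i = \sum_j |L_{S_i,P^i,j}|\,n^{k-j-1}$, the initial bound and the per-step drop via the heaviness of $y_m$ are the same, and your explicit partition of $\sigma_j([n]-S_{m-1})$ into subsets containing $y_m$ or not just makes precise the inequality $W_i \leq W_{i-1} - \epsilon n^{k-1}/5$ that the paper asserts directly. The only cosmetic difference is that you phrase the endpoint as a contradiction ($\Phi < 0$) whereas the paper phrases it as $W$ reaching $0$, which forces $L(x)=0$ for every remaining $x$; both yield the lemma.
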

\begin{proof}
Let us construct the chain $(\{y_1,\dots,y_{\gamma}\},P_1 \otimes \dots \otimes P_{\gamma})$ starting from the empty assignment $(\emptyset,\emptyset)$. Recalling Definition~\ref{def:ljdef}, we define:
\begin{align*}
W_0 &:= \sum_{j=1}^{k-1} |L_{\emptyset,\emptyset,j}| n^{k-j-1}\\
W_i&:= \sum_{j=1}^{k-1} |L_{\{y_1,\dots,y_i\},P_1\otimes \dots \otimes P_i,j}| \, n^{k-j-1}, ~~~~ 1 < i < \gamma(k, \epsilon) 
\end{align*}
The initial size of each $|L_{\emptyset,\emptyset,j}|$ is at most $\sum_{\{v_1,\dots,v_j\}\in \sigma_j([n])} 2^{j} = {n \choose j} 2^{j} \leq \frac{2^j}{j!} n^j$. Therefore 
$$
W_0 \leq \sum_{j=1}^{k-1} \frac{2^j}{j!} n^j n^{k-j-1} \leq 2^{k-1} n^{k-1}.
$$
Now, when we extend to the qubit $y_{i}$ from the assignment $(\{y_1,\dots,y_{i-1}\},P_1 \otimes \dots \otimes P_{i-1})$, we have, by definition of heaviness,
\begin{align*}
W_{i-1} - W_{i} &= \sum_{j=1}^{k-1} \left(|L_{\{y_1,\dots,y_{i-1}\},P_1\otimes \dots \otimes P_{i-1},j}| - |L_{\{y_1,\dots,y_i\},P_1\otimes \dots \otimes P_i,j}|\right)\, n^{k-j-1} \\
&= \sum_{j=1}^{k-1} \delta_{\{y_1,\dots,y_{i-1}\},P_1 \otimes \dots \otimes P_{i-1},\{y_i\},P_{i},j} ~n^{k-j-1} 
\\
&= \delta_{\{y_1,\dots,y_{i-1}\},P_1 \otimes \dots \otimes P_{i-1},\{y_i\},P_{i}} > \epsilon n^{k-1}/5.
\end{align*}
Since $W_i \leq W_{i-1} - \epsilon n^{k-1}/5$, by the time we have made $5 (2)^{k-1}/\epsilon$ extensions we must have $W_{\gamma}=0$, and so all qubits are conflicting. 
\end{proof}
\noindent
We can now show the second lemma. Again, this closely follows the analogous result in~\cite{Alon2003}.
\begin{lemma}\label{lem:enoughbad}
Let $([n], \{\Pi_s\})$ be an instance of quantum $k$-SAT which is $\epsilon$-far from satisfiable by a product state. Let $S \subset [n]$ and let $P_S$ be some local assignment to $S$. Then there are at least $\epsilon n/5$ bad qubits w.r.t. $(S,P_S)$.
\end{lemma}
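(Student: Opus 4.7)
The plan is to prove the contrapositive: assume the set $B \subseteq [n] - S$ of qubits bad w.r.t.\ $(S,P_S)$ has $|B| < \epsilon n / 5$, and exhibit a set $\Sigma \subseteq \sigma_k([n])$ with $|\Sigma| < \epsilon n^k$ together with a product state extending $P_S$ that lies in $\ker(\Pi_s)$ for every $s \notin \Sigma$, contradicting the $\epsilon$-far hypothesis.

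First, I would initialize $\Sigma_0$ as the set of all $k$-subsets that intersect $B$; this already costs $|\Sigma_0| \leq |B|\binom{n-1}{k-1} < \epsilon n^k / (5(k-1)!)$, and it frees the qubits of $B$ to be assigned arbitrary product states at the end. Next, I would extend $(S,P_S)$ through $[n] - S - B$ greedily, one qubit at a time, maintaining running data $(S_i,P_i,\Sigma_i)$ starting from $(S,P_S,\Sigma_0)$. At the next qubit $x$, if $x$ is neither heavy nor conflicting w.r.t.\ $(S_i,P_i)$ as measured by $L^{\Sigma_i}$, I take a minimizing $V_x$ (which by hypothesis satisfies $\delta^{\Sigma_i}_{S_i,P_i,x,V_x} \leq \epsilon n^{k-1}/5$) and extend --- a \emph{good step}. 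Otherwise, I enlarge $\Sigma$ by every $k$-subset containing $x$ and assign $x$ an arbitrary product state --- a \emph{bad step}. Good steps maintain the invariant that $\Pi_s \ket{\phi} = 0$ for every $s \subseteq S_i$ with $s \notin \Sigma$.

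The crucial estimate is then a bound on the number $N$ of bad steps, in the spirit of Lemma~\ref{lem:tree}. I would use the potential
\[
W_i := \sum_{j=1}^{k-1} \bigl|L^{\Sigma_i}_{S_i,P_i,j}\bigr|\,n^{k-j-1},
\]
which starts at $W_0 \leq 2^{k-1}n^{k-1}$ and stays non-negative throughout. The aim is to show that every bad step strictly decreases $W$ by more than $\epsilon n^{k-1}/5$: in the heavy sub-case this is immediate from the definition of heaviness, and in the conflicting sub-case it should follow by analysing how $L^{\Sigma_i}_{S_i,P_i}(x) = \emptyset$ forces a collapse of the $x$-indexed $W$ terms. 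This yields $N \leq 5\cdot 2^{k-1}/\epsilon$, hence $|\Sigma| \leq (|B|+N)\binom{n-1}{k-1} < \epsilon n^k$ for $n$ sufficiently large, completing the contradiction.

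The main obstacle will be confirming that bad steps actually give a net decrease in $W$: enlarging $\Sigma$ in isolation can only enlarge each $L^{\Sigma}$ and thus increase $W$, while the simultaneous extension of $S$ by $x$ decreases $W$ by $\delta$ plus contributions from $L$-terms indexed by subsets containing $x$. Careful bookkeeping --- perhaps measuring the potential just before the $\Sigma$-update so the monotonicity is cleaner --- should reconcile these two effects; in the purely conflicting sub-case the vanishing $L^{\Sigma_i}_{S_i,P_i}(x) = \emptyset$ should directly force the required drop in $W$.
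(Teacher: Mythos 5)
There is a genuine gap, and it sits exactly where you flagged your ``main obstacle'': the potential argument does not bound the number of bad steps. In the conflicting sub-case, $L^{\Sigma_i}_{S_i,P_i}(x)=0$ means the $x$-indexed terms of $W_i$ are \emph{already} zero before you act, and your bad step (adding to $\Sigma$ every $k$-subset containing $x$, then assigning $x$ arbitrarily) leaves every $L^{\Sigma}(x_1,\dots,x_j)$ with $x\notin\{x_1,\dots,x_j\}$ unchanged, since all new constraints involve $x$ and are discarded. So a conflicting bad step causes essentially no drop in $W$; the drop that \emph{created} the conflict was spread over earlier good steps, each of which is only bounded above by $\epsilon n^{k-1}/5$, and summing those bounds over up to $n$ good steps gives no control on how many qubits eventually conflict. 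Indeed, each good step may decrement $|L(y)|$ for up to $\epsilon n/5$ qubits $y$, so over the whole process as many as $\Theta(n)$ qubits can become conflicting; each costs $\binom{n-1}{k-1}\approx n^{k-1}/(k-1)!$ additions to $\Sigma$, for a total of order $n^k/(k-1)!$, which is not below $\epsilon n^k$ when $\epsilon$ is small. The heavy sub-case is also not ``immediate'': heaviness says that \emph{extending through} $x$ with some $V_x\subseteq L(x)$ would force $\delta>\epsilon n^{k-1}/5$, but your bad step does not extend through $x$ --- it deletes $x$'s constraints --- so again no drop in $W$ is forced. (Even if you repaired the heavy case by actually extending through heavy qubits, the conflicting case remains fatal.)

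The paper's proof avoids this by making the good steps themselves pay into $\Sigma$: whenever extending to a non-bad $x$ causes a strict shrinkage $L_{\{x\}\sqcup S,V_x\otimes P_S}(x_1,\dots,x_j)\subsetneq L_{S,P_S}(x_1,\dots,x_j)$, one adds to $\Sigma$ all $k$-subsets containing $\{x_1,\dots,x_j\}\sqcup\{x\}$, so that $L^{\Sigma}(x_1,\dots,x_j)$ is literally preserved. This is exactly where the non-heaviness bound $\delta_{S,P_S,x,V_x}\le\epsilon n^{k-1}/5$ is cashed in --- it caps the number of subsets added per extension at $\epsilon n^{k-1}/5$, hence $<\epsilon n^k/5$ over all $\le n$ extensions --- and, crucially, because the $L^{\Sigma}$'s never shrink, no qubit outside the original bad set ever becomes conflicting or heavy (the latter requires the inclusion~\eqref{eq:heavyinclusion}, which the paper verifies separately). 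In your scheme the non-heaviness hypothesis is never actually used for anything that holds, which is a sign the argument cannot close. Your upfront removal of the originally bad qubits via $\Sigma_0$ matches the paper's second step and is fine; it is the treatment of the not-originally-bad qubits that must be changed.
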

\begin{proof}
Suppose that there are fewer than $\epsilon n/5$ bad qubits w.r.t $(S,P_S)$. We will define a subset $\Sigma \subset \sigma_k([n])$, where  $|\Sigma| < \epsilon n^{k}$, such that there exists a product state solution when we remove all the projectors in $\Sigma$, as in~\eqref{eq:efar}. This would contradict the assumption that the instance is $\epsilon$-far from satisfiable by a product state, and so the lemma will follow. 

Let $X_{nb} \subset ([n]-S)$ be the set of qubits which are not bad w.r.t $(S,P_S)$. 
The definition of $\Sigma$ is a two-step process. In the first step we will extend the assignment $(S,P_S)$ to the not bad qubits in $X_{nb}$ one at a time, while adding $k$-subsets to $\Sigma$ at each extension in a careful way. In the second step we will throw caution to the wind and simply add all the projectors containing the remaining bad qubits to $\Sigma$.

\paragraph{Step 1.} In this step we will extend the assignment $(S,P_S)$ to the qubits in $X_{nb}$ one at a time, while adding $k$-subsets to $\Sigma$ at each extension, according to the following prescription. 

Let $x \in X_{nb}$ be the first qubit to which we extend; we choose some $V_x \subseteq L_{S,P_S}(x)$ minimising $\delta_{S,P_S,x,V_x}$, and get the new assignment $(\{x\} \sqcup S, V_x \otimes P_S)$. Let us consider in turn each $\{x_1,\dots,x_j\} \in \sigma_j([n]-(\{x\} \sqcup S))$, where $1 \leq j \leq k-1$. Clearly \begin{align}\label{eq:linclusion}
L_{\{x\} \sqcup S,V_x \otimes P_S}(x_1,\dots,x_j) \subseteq L_{S, P_S}(x_1,\dots,x_j).
\end{align}
The prescription for adding subsets to $\Sigma$ is as follows.
\begin{itemize}
    \item If the inclusion~\eqref{eq:linclusion} is an equality we add no $k$-subsets to $\Sigma$. 
    \item If the inclusion~\eqref{eq:linclusion} is strict then we add all the subsets $s \in \sigma_{k}( \underline{\{x_1,\dots,x_j\}} \sqcup \underline{\{x\}} \sqcup S)$ to $\Sigma$. (Recall the underline notation means that these subsets must contain all the qubits $\{x_1,\dots x_j\}$, as well as the qubit $x$.)
\end{itemize}
We now have an equality $L_{\{x\} \sqcup S,V_x \otimes P_S}^{\Sigma}(x_1,\dots,x_j) = L_{S, P_S}(x_1,\dots,x_j)$, seen as follows. For any states $\ket{v_{x_1,\dots,x_j}} \in L_{S, P_S}(x_1,\dots,x_j)$, $\ket{v_x} \in V_x$ and $\ket{\phi_S} \in P_S$, we have:
\beas
&& \sum_{s \in \sigma_k(\underline{\{x_1,\dots,x_j\}} \sqcup \{x\} \sqcup S) - \Sigma} \Pi_s (\ket{v_{x_1,\dots,x_j}} \otimes \ket{v_x} \otimes \ket{\phi_{S}}) 
\\
&=& \sum_{s \in \sigma_k({\underline{\{x_1,\dots,x_j\}} \sqcup S}) } \Pi_s (\ket{v_{x_1,\dots,x_j}} \otimes \ket{v_x} \otimes \ket{\phi_{S}})
\\
&& \, + \sum_{s \in \sigma_k(\underline{\{x_1,\dots,x_j\}} \sqcup \underline{\{x\}} \sqcup S)} \Pi_s (\ket{v_{x_1,\dots,x_j}} \otimes \ket{v_x} \otimes \ket{\phi_{S}})
\\
&& \, -\sum_{s \in \sigma_k(\underline{\{x_1,\dots,x_j\}} \sqcup \underline{\{x\}} \sqcup S))} \Pi_s (\ket{v_{x_1,\dots,x_j}} \otimes \ket{v_x} \otimes \ket{\phi_{S}})
\\
&=& 0.
\eeas
For the second equality, the second and third terms cancel and the first term is zero by definition of $L_{S,P_S}(x_1,\dots,x_j)$. 

We do this for all $\{x_1,\dots,x_j\} \in \sigma_j([n]-(\{x\} \sqcup S))$, where $1 \leq j \leq k-1$, and in the end we have a new assignment $(\{x\} \sqcup S, V_x \otimes P_S)$ and an updated set $\Sigma$ of removed projectors such that 
\begin{align}\label{eq:projremovedeq}
L_{\{x\} \sqcup S,V_x \otimes P_S}^{\Sigma}(x_1,\dots,x_j) = L_{S, P_S}(x_1,\dots,x_j), ~~~~\forall\{x_1,\dots,x_j\} \in \sigma_j([n]-(\{x\} \sqcup S)).
\end{align}
We now claim that each qubit in $X_{nb} - \{x\}$ remains not bad for the assignment $(\{x\} \sqcup S, V_x \otimes P_s)$, when we remove the projectors in $\Sigma$. Indeed, from~\eqref{eq:projremovedeq} it is clear that none of these qubits will ever conflict. It remains to show that none of them will become heavy.

To see this, suppose that we extend the assignment to
$(\{y_1 ,\dots y_m\} \sqcup S, V_{y_1} \otimes \dots \otimes V_{y_m} \otimes P_S)$,
removing the relevant subsets as above, where $\{y_1, \dots y_m\} \subset X_{nb}$, and then extend to $x \in (X_{nb} - (\{y_1 ,\dots y_m\} \sqcup S))$ with subspace $V_x \subseteq L_{\{y_1 ,\dots y_m\} \sqcup S, V_{y_1} \otimes \dots \otimes V_{y_m} \otimes P_S}^{\Sigma}(x) = L_{S,P_S}(x)$ (the equality is due to the addition of subsets to $\Sigma$ detailed above). Alternatively, we could have extended to $x$ directly from $(S,P_S)$, without extending through $\{y_1,\dots,y_m\}$ first. To show that the qubit $x$ will not be heavy after extending through $\{y_1,\dots,y_m\}$ it is sufficient to show, for any $\{x_1,\dots,x_j\} \in [n]-(\{x,y_1,\dots,y_m\} \sqcup S)$, that there is an inclusion:
\begin{equation}
\label{eq:heavyinclusion}
L_{\{x\} \sqcup S, V_x \otimes P_S}(x_1,\dots,x_j) \subseteq L_{\{x,y_1 ,\dots y_m\} \sqcup S, V_x \otimes V_{y_1} \otimes \dots \otimes V_{y_m} \otimes P_S}^{\Sigma}(x_1,\dots,x_j).
\end{equation}
Here the subspaces are taken prior to the removal of subsets following the extension to qubit $x$; but on the RHS we have removed all the relevant subsets following the previous extension to $\{y_1,\dots,y_m\}$.

We will now prove the inclusion. Let $\ket{v_{x_1,\dots,x_j}} \in L_{\{x\} \sqcup S, V_x \otimes P_S}(x_1,\dots,x_j)$. We will consider two cases. The first is where $j=k-1$. In the following equations $\Sigma$ is defined after extension to $\{y_1,\dots,y_m\}$. For all $\ket{v_{x}} \in V_x$, $\ket{v_{y_i}} \in V_{y_i}$ and $\ket{\phi_S} \in P_S$, we have:
\beas
&& \sum_{s \in \underline{\{x_1,\dots,x_{k-1}\}} \sqcup \{x\} \sqcup \{y_1,\dots,y_m\} \sqcup S  - \Sigma} \Pi_s (\ket{v_{x_1,\dots,x_{k-1}}} \otimes \ket{v_x} \otimes \ket{v_{y_1}} \otimes \dots \otimes \ket{v_{y_m}}  \otimes \ket{\phi_S})
\\
&=& 
\sum_{s \in \underline{\{x_1,\dots,x_{k-1}\}}  \sqcup \{y_1,\dots,y_m\} \sqcup S - \Sigma} \Pi_s (\ket{v_{x_1,\dots,x_{k-1}}} \otimes \ket{v_x} \otimes \ket{v_{y_1}} \otimes \dots \otimes \ket{v_{y_m}}  \otimes \ket{\phi_S})
\\
&& \, +
\Pi_{\{x_1,\dots,x_{k-1},x\}} (\ket{v_{x_1,\dots,x_{k-1}}} \otimes \ket{v_x} \otimes \ket{v_{y_1}} \otimes \dots \otimes \ket{v_{y_m}}  \otimes \ket{\phi_S})
\\
&=& 0 .
\eeas
Here the first term on the second line is zero since $\Sigma$ is defined so that $$L_{\{y_1 ,\dots y_m\} \sqcup S, V_{y_1} \otimes \dots \otimes V_{y_m} \otimes P_S}^{\Sigma}(x_1,\dots,x_{k-1}) = L_{S,P_S}(x_1,\dots,x_{k-1});$$ the second term is zero by definition of $L_{\{x\} \sqcup S, V_x \otimes P_S}(x_1,\dots,x_{k-1})$. Therefore $\ket{v_{x_1,\dots,x_j}} \in L_{\{x,y_1 ,\dots y_m\} \sqcup S, V_x \otimes V_{y_1} \otimes \dots \otimes V_{y_m} \otimes P_S}(x_1,\dots,x_j)$ and we obtain the inclusion~\eqref{eq:heavyinclusion}.

On the other hand, suppose that $j < k-1$. 
Then for all $\ket{v_{x}} \in V_x$, $\ket{v_{y_i}} \in V_{y_i}$ and $\ket{\phi_S} \in P_S$ we have the following equation:
\beas
&& \sum_{s \in \underline{\{x_1,\dots,x_{j}\}} \sqcup \{x\} \sqcup \{y_1,\dots,y_m\} \sqcup S  - \Sigma} \Pi_s (\ket{v_{x_1,\dots,x_{j}}} \otimes \ket{v_x} \otimes \ket{v_{y_1}} \otimes \dots \otimes \ket{v_{y_m}}  \otimes \ket{\phi_S})
\\
&=&
\sum_{s \in \underline{\{x_1,\dots,x_{j}\}} \sqcup \{y_1,\dots,y_m\} \sqcup S  - \Sigma} \Pi_s (\ket{v_{x_1,\dots,x_{j}}} \otimes \ket{v_x} \otimes \ket{v_{y_1}} \otimes \dots \otimes \ket{v_{y_m}}  \otimes \ket{\phi_S})
\\
&& \, +
\sum_{s \in \underline{\{x_1,\dots,x_{j},x\}} \sqcup \{y_1,\dots,y_m\} \sqcup S  - \Sigma} \Pi_s (\ket{v_{x_1,\dots,x_{j}}} \otimes \ket{v_x} \otimes \ket{v_{y_1}} \otimes \dots \otimes \ket{v_{y_m}}  \otimes \ket{\phi_S})
\\
&=& 0.
\eeas
Here the terms in the second line are both zero since $\Sigma$ is defined so that:
\begin{align*}
L_{\{y_1 ,\dots y_m\} \sqcup S, V_{y_1} \otimes \dots \otimes V_{y_m} \otimes P_S}^{\Sigma}(x_1,\dots,x_{j}) &= L_{S,P_S}(x_1,\dots,x_{j}),
\\
L_{\{y_1 ,\dots y_m\} \sqcup S, V_{y_1} \otimes \dots \otimes V_{y_m} \otimes P_S}^{\Sigma}(x_1,\dots,x_{j},x) &= L_{S,P_S}(x_1,\dots,x_{j},x).
\end{align*}
Therefore $\ket{v_{x_1,\dots,x_j}} \in L_{\{x,y_1 ,\dots y_m\} \sqcup S, V_x \otimes V_{y_1} \otimes \dots \otimes V_{y_m} \otimes P_S}^{\Sigma}(x_1,\dots,x_j)$ and we obtain the inclusion~\eqref{eq:heavyinclusion}. We have shown that a qubit in $X_{nb}$ cannot become bad upon extension.

We therefore continue extending until we have an assignment on all the qubits in $X_{nb} \sqcup S$. At this point we have accumulated a corresponding set of projections $\Sigma$, according to the prescription given. 

\paragraph{Step 2.} In the second step we simply add to $\Sigma$ all projections containing any qubit in $[n] - (X_{nb} \sqcup S)$. We then extend the assignment to the whole of $[n]$, which is trivial since there are no longer any projections involving the remaining qubits. We thereby obtain a product state $\ket{\phi} \in (\mathbb{C}^2)^{\otimes n}$ satisfying~\eqref{eq:efar}.

\paragraph{Analysis.} We must show that in the course of this process we added less than $\epsilon n^k$ subsets to $\Sigma$. In the first step, after extending to a qubit $x$ we added all the subsets $s \in \sigma_k( \underline{\{x_1,\dots,x_j\}} \sqcup \underline{x} \sqcup S)$ each time that $|L_{\{x\} \sqcup S,V_x \otimes P_S}(x_1,\dots,x_j)|$ was smaller than $|L_{S, P_S}(x_1,\dots,x_j)|$. This is at most ${n \choose k-j-1} \leq n^{k-j-1}$ subsets each time, since $|S| \leq n$. Since the qubit $x$ was never heavy, we therefore added at most $\sum_{j=1}^{k-1} \delta_{S,P_S,x,V_x,j} n^{k-j-1}= \delta_{S,P_S,x,V_x} \leq \epsilon n^{k-1}/5$ projectors every time we extended. As there are at most $n$ qubits in $X_{nb}$, in the first step we added less than $\epsilon n^k/5$ subsets to $\Sigma$. 

In the second step, we added all the subsets which contained a qubit not in $S \sqcup V_{nb}$. Since by assumption there were $\leq \epsilon n/5$ bad qubits to begin with, we therefore added at most $(\epsilon n/5) {n \choose k-1} \leq \epsilon n^k/5$ subsets in the second step.  Altogether we added less than $\epsilon n^k$ subsets, as claimed. 
\end{proof}

\subsection{Part 2: Backtracking}

This is the point at which our proof diverges from that given in~\cite{Alon2003}, for the reasons given at the start of this section. We are going to use a different argument --- by gradually attempting to build a product state solution by extension and backtracking, we will eliminate all possible solutions. 

Since the space of possible extensions is continuous, we need a way to discretise. The following lemma solves this problem. 
For a state $\ket{\phi} \in \mathbb{C}^2$, we write $\left\langle\ket{\phi}\right\rangle \in \mathbb{P}^1$ for the corresponding point on the complex projective line (a.k.a. the Bloch sphere). 
\begin{lemma}\label{lem:prodcount}
Let $L \subseteq (\mathbb{C}^2)^{\otimes m}$ be a subspace. We define
$$
X := \{x \in \mathbb{P}^1 : ~\exists \{\ket{\phi_i} \in \mathbb{C}^2\}_{i=1}^m ~\textrm{s.t.}~ \ket{\phi_1} \otimes \ket{\phi_2} \otimes \dots \otimes \ket{\phi_m} \in L ~ \textrm{and} ~\left\langle{\ket{\phi_1}}\right\rangle = x\}.
$$
Then either (i) $X = \mathbb{P}^1$, or (ii) $\#(X) \leq m!$. The bound in case (ii) is tight.
\end{lemma}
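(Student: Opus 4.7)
The plan is to argue via the algebraic geometry of the Segre variety. Let $\Sigma_m \subseteq \mathbb{P}^{2^m-1}$ denote the image of the Segre embedding $\sigma: (\mathbb{P}^1)^m \hookrightarrow \mathbb{P}^{2^m-1}$, whose points are precisely the projectivised product states $\langle \ket{\phi_1} \otimes \cdots \otimes \ket{\phi_m} \rangle$. This is a closed projective subvariety of dimension $m$, and a standard Chow-ring calculation gives $\deg \Sigma_m = m!$: the hyperplane class of $\mathbb{P}^{2^m-1}$ pulls back along $\sigma$ to $h_1 + \cdots + h_m$, and $(h_1 + \cdots + h_m)^m = m!\, h_1 \cdots h_m$ in $\mathbb{Z}[h_1,\ldots,h_m]/(h_i^2)$. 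Under this identification, $X = \pi_1(V)$ where $V := \mathbb{P}(L) \cap \Sigma_m$ and $\pi_1 : \Sigma_m \to \mathbb{P}^1$ is the regular first-factor projection morphism. Since $V$ is closed in the projective variety $\Sigma_m$, its image $X$ under the regular morphism $\pi_1$ is closed in $\mathbb{P}^1$; as the only closed subsets of $\mathbb{P}^1$ are $\mathbb{P}^1$ itself (case (i)) and finite sets (case (ii)), this immediately yields the dichotomy.

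In case (ii) I would bound $|X|$ as follows. Any irreducible component $V_\alpha$ of $V$ maps under $\pi_1$ to an irreducible closed subset of $\mathbb{P}^1$; in case (ii) each such image is a single point, so $|X|$ is bounded by the number of irreducible components of $V$. Since $V$ is a linear section of $\Sigma_m$ and every irreducible subvariety of $\mathbb{P}^{2^m-1}$ has degree at least $1$, the number of irreducible components of $V$ is bounded by the total cycle-class degree $\deg\bigl([\Sigma_m] \cdot [\mathbb{P}(L)]\bigr) = \deg \Sigma_m = m!$, giving $|X| \leq m!$. For tightness I would take $L$ generic of codimension $m$, so that $V$ is a $0$-dimensional transversal intersection consisting of exactly $m!$ points by Bezout applied to $\deg \Sigma_m = m!$; for generic $L$ these $m!$ points have distinct first tensor factors, achieving $|X| = m!$.

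The hard part will be the degree-bounding step: when $V$ has higher than expected dimension (non-transversal intersection), justifying that the number of reduced irreducible components is still controlled by the cycle-class degree $m!$ requires either appealing to the classical bound on components of a linear section of a projective variety, or constructing an explicit flat deformation of $L$ to a generic codimension-$m$ subspace so that the transversal bound applies in the limit. A concrete variant would be: iteratively cut $V$ by generic multilinear $(1,\ldots,1)$-forms on $(\mathbb{P}^1)^m$ chosen to preserve at least one point above each $x \in X$, reducing to a $0$-dimensional subvariety of $(\mathbb{P}^1)^m$ whose cardinality is bounded by $m!$ via the multihomogeneous Bezout theorem.
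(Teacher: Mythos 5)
Your proposal follows essentially the same route as the paper: the dichotomy via closedness of the projection from the complete variety $\Sigma_m$, the reduction of case (ii) to counting irreducible components of $\mathbb{P}(L)\cap\Sigma_m$, the degree computation $\deg\Sigma_m=m!$, and tightness via a generic linear section. The one step you flag as "the hard part" --- bounding the number of irreducible components when the intersection is non-transversal, where a naive appeal to the cycle-class degree does not suffice --- is exactly where the paper invokes the classical refined Bezout inequality (Hartshorne, I.7.7): writing $\mathbb{P}(L)$ as an intersection of hyperplanes $P_1,\dots,P_{2^m-\dim L}$ and cutting by them one at a time, the sum of the degrees of the irreducible components never increases past $\deg\Sigma_m=m!$, so the final intersection has at most $m!$ components. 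Your first suggested fix is thus the correct one; you identified the right tool but did not carry out the iteration.
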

\begin{proof}
It is straightforward to show that $X$ is either finite or covers the whole space. Indeed, the subspace $L$ corresponds to a projective linear subspace $\mathbb{P}(L) \subseteq \mathbb{P}^{2^m-1}$. By the Segre embedding the product states form a projective subvariety $\Sigma_m := \mathbb{P}^1 \times \dots \times \mathbb{P}^1 \subset \mathbb{P}^{2m-1}$. But it is an elementary result in algebraic geometry that the projection map $\pi_1: \mathbb{P}^1 \times \dots \times \mathbb{P}^1 \to \mathbb{P}^1$ is closed in the Zariski topology, since every projective variety is complete \cite[\S{}2 Thm. 4.9]{Hartshorne2013}. The intersection $\mathbb{P}(L) \cap \Sigma_m$ is an algebraic set in $\Sigma_m$, and therefore by closure of the projection map, the image under the projection will be an algebraic set in $\mathbb{P}^1$. This set will either be zero-dimensional, in which case it is a finite set of points; or one-dimensional, in which case it must be the whole of $\mathbb{P}^1$. 

The only thing left is to bound the number of points in the case where the image is zero-dimensional. We first note that for any variety $S \subset \Sigma_m$, the image $\pi_1(S)$ is irreducible; indeed, were it not irreducible then by continuity of $\pi_1$ the fibres would give a nontrivial decomposition of $S$. If $\pi_1(S) \neq \mathbb{P}^1$, it must therefore be a single point. It follows that all we need to do is bound the number of irreducible components of $\mathbb{P}(L) \cap \Sigma_m$.  We know that $\mathbb{P}(L)$ is an intersection of hyperplanes $\{P_i\}_{i=1}^{2^{m}-|L|}$. We use~\cite[\S{}1 Thm. 7.7]{Hartshorne2013}, which yields the following statement in our special case. For any subvariety $V \subset \mathbb{P}^{2^m-1}$ of dimension $\geq 1$, and for any hyperplane $P \subset \mathbb{P}^{2^m-1}$ not containing $V$, let $Z_1,\dots,Z_s$ be the irreducible components of $V \cap P$; then:
\begin{align}\label{eq:intersectiondeg}
    \sum_{j=1}^s \textrm{deg}(Z_j) \leq \textrm{deg}(V).
\end{align}
Here $\textrm{deg}(Z_j), \textrm{deg}(V) \in \mathbb{N}_{>0}$ are the degrees of the varieties in $\mathbb{P}^{2^m-1}$. If $P$ contains $V$ then the intersection is just $V$ again. We observe that the irreducible components of $\mathbb{P}(L) \cap \Sigma_m$ can be obtained by the following process: take the intersection $\Sigma_m \cap P_1$; then for each of the irreducible components of that intersection take the intersection with $P_2$; then for each of the irreducible components of that intersection take the intersection with $P_3$; etc. Since $\deg(\Sigma_m) = m!$, by~\eqref{eq:intersectiondeg} we finish with at most $m!$ irreducible components (which would all have degree 1). The bound follows. 

To see that the bound is tight, observe that by definition of the degree, the variety $\Sigma_m$ intersects a generic linear subvariety $\mathbb{P}(V) \subset \mathbb{P}^{2^m-1}$ of dimension $2^m-(m+1)$ in precisely $m!$ points, and generically these points will be mapped to $m!$ different points in $\mathbb{P}^1$ by the projection $\pi_1$.
\end{proof}
\noindent
We can now, at last, prove the theorem. We first make a notational definition. 
\begin{definition}
Let $(S,P_S)$ be some local assignment. For any $X \subseteq ([n]-S)$ we define $\overline{L}_{S,P_S}(X) \subseteq H_{X}$ to be the space of all states $\ket{v_{X}} \in H_{X}$ such that
\begin{equation}
\sum_{s \in  \sigma_k(X \sqcup  S) } \Pi_s (\ket{v_{X}} \otimes \ket{\phi_{S}}) = 0 \qquad \forall \ket{\phi_S} \in P_S.
\end{equation}
(The difference between this and~\eqref{eq:bigl} is that i) the set $X$ can be of any cardinality, and ii) all the $k$-subsets of $X \sqcup S$ are included, not just those containing every element of $X$.)
\end{definition}

\begin{theorem*}[Restatement of Theorem~\ref{thm:localunsat}]
There exists a constant $c(k,\epsilon)$ (i.e. not depending on $n$) such that the following holds for large enough $n$. Let $([n],\{\Pi_s\})$ be any instance of quantum $k$-SAT which is $\epsilon$-far from satisfiable by a product state. Then, for a randomly chosen subset $C \in \sigma_{c(k,\epsilon)}([n])$, the instance is locally unsatisfiable by a product state at $C$ with probability $p > 0.75$.
\end{theorem*}
\begin{proof}[Proof of Theorem~\ref{thm:localunsat}]
The idea of the proof is as follows. We will pick qubits at random from $[n]$, one by one. For any local assignment on the qubits which have already been picked, we know by Lemma~\ref{lem:enoughbad} that the next qubit we pick has an $\epsilon/5$ chance of being bad w.r.t. that assignment. By varying the local assignment and picking bad qubits, we will show that, after enough qubits have been picked, with high probability there can be no local assignment to all those qubits. This is essentially a backtracking argument where we eliminate assignments using Lemma~\ref{lem:tree}, which shows that we cannot extend a chain through more than $\gamma(k,\epsilon)$ heavy variables. We have not endeavoured to optimise this procedure and it can certainly be improved, although whether one can bring the constant $c(k,\epsilon)$ down to the level in the classical setting~\cite[Thm. 1.1]{Alon2003} using such a backtracking approach is by no means clear. 

To formalise this, recall from Definition~\ref{def:chain} that a chain of length $m$ is a local assignment $(\{y_1,\dots,y_{m}\},P_1 \otimes \dots \otimes P_{m})$ such that each qubit is heavy with respect to the assignment on the qubits before it. We will now define an \emph{elimination procedure} for chains of length $m$. 

\begin{definition}
    An \emph{elimination procedure for chains of length $m$} will, for any chain $(\{y_1,\dots,y_{m}\},P_1 \otimes \dots \otimes P_{m})$, produce a set of qubits $\Gamma \in ([n] - \{y_1, \dots, y_m\})$ with a guarantee that there is no local assignment to the qubits $\{y_1, \dots, y_m\} \sqcup \Gamma$ which assigns $P_1 \otimes \cdots \otimes P_{m-1}$ to the qubits $\{y_1,\dots,y_{m-1}\}$.

    We say that an elimination procedure procedure for chains of length $m$ has  constant $\phi(m) \in \mathbb{N}$ if, for any chain of length $m$, $|\Gamma| \leq \phi(m)$.
\end{definition}
\noindent
The elimination procedures we will consider here are all defined by a simple rule. At the beginning of the procedure we set $\Gamma = \emptyset$. At each step of the procedure, we do the following:
\begin{itemize}
    \item Inspect the current $\Gamma$, and choose a certain local assignment $(S,P_S)$ on a subset $S \subset \{y_1,\dots,y_m\} \sqcup \Gamma$, according to the rule.
    \item Pick at random a qubit $x \in ([n] - (\{y_1,\dots,y_m\} \sqcup \Gamma))$ which is bad w.r.t. the chosen assignment $(S,P_S)$.
    \item Add the qubit $x$ to $\Gamma$.
\end{itemize}
When $\phi(m)$ is the constant associated to the elimination procedure, we can stop after $\phi(m)$ steps and conclude that there is no local assignment to $\{y_1,\dots,y_{m}\} \sqcup \Gamma$ which assigns $P_1 \otimes \dots \otimes P_{m-1}$ to the qubits $\{y_1,\dots,y_{m-1}\}$.
\begin{lemma}    
An elimination procedure exists for chains of length $\gamma:=\gamma(k,\epsilon)$, with constant $\phi(\gamma) = 1$.
\end{lemma}
\begin{proof}
For any such chain $(\{y_1,\dots,y_{\gamma}\},P_1 \otimes \dots \otimes P_{\gamma})$, as soon as we add any qubit at all to $\Gamma$ we can conclude, by Lemma~\ref{lem:tree}, that there is no local assignment to $\{y_1,\dots,y_{\gamma}\} \sqcup \Gamma$ which assigns $P_1 \otimes \dots \otimes P_{\gamma-1}$ to the qubits $\{y_1,\dots,y_{\gamma-1}\}$. 
\end{proof}
\begin{lemma}
    If an elimination procedure exists for chains of length $m$ with constant $\phi(m)$, then there exists an elimination procedure for chains of length $m-1$ with constant $\phi(m-1) := ((\phi(m)+1)!+1)(\phi(m)+1)$.
\end{lemma}
\begin{proof}
The procedure is defined as follows. Let $(\{y_1,\dots,y_{m-1}\},P_1 \otimes \dots \otimes P_{m-1})$ be any chain of length $m-1$, and let $\ket{p_{m-1}^0} \in P_{m-1}$ be any state.
\begin{enumerate}
    \item We pick at random a bad qubit $y_m^0 \in ([n] - \{y_1,\dots,y_{m-1}\})$ with respect to the local assignment $(\{y_1,\dots,y_{m-1}\},P_1 \otimes \dots \otimes P_{m-2} \otimes \ket{p^0_{m-1}})$. If it is conflicting, then set $\Gamma_0 := \{y_m^0\}$ and skip to 3. If it is heavy, continue to 2.
    \item If the qubit $y_m^0$ is heavy, then we perform the length $m$ elimination procedure on the chain $$(\{y_1,\dots,y_{m-1},y_m^0\},P_1 \otimes \dots \otimes  P_{m-2}  \otimes \ket{p^0_{m-1}} \otimes \overline{L}_{\{y_1,\dots,y_{m-1}\},P_1 \otimes \dots \otimes P_{m-2} \otimes \ket{p^0_{m-1}}}(\{y_m^0\})).$$ 
    This gives us a set $\Gamma_m^0 \subseteq ([n] - \{y_1,\dots,y_m^0\})$ of qubits, where $|\Gamma_m^0| \leq \phi(m)$, such that there is no local assignment to $\{y_1,\dots,y_m^0\}\sqcup \Gamma_m^0$ which assigns $P_1 \otimes \dots \otimes P_{m-2} \otimes \ket{p^0_{m-1}}$ to the qubits $\{y_1,\dots,y_{m-1}\}$. Set $\Gamma_0 := \{y_m^0\} \sqcup \Gamma_m^0$ and continue to 3.
    \item In the last two steps, we have produced a set of qubits $\Gamma_0$ and concluded that there is no local assignment to the qubits $\{y_1,\dots,y_{m-1}\} \sqcup \Gamma_0$ which assigns $P_1 \otimes \dots \otimes P_{m-2} \otimes \ket{p^0_{m-1}}$ to the qubits $\{y_1,\dots,y_{m-1}\}$.

    Now consider the subspace
$$
L:= \overline{L}_{\{y_1,\dots,y_{m-2}\},P_1 \otimes \dots \otimes P_{m-2}}(\{y_{m-1}\}\sqcup \Gamma_0) \subseteq H_{y_{m-1}} \otimes H_{\Gamma_0} \cong (\mathbb{C}^{2})^{\otimes (|\Gamma_0|+1)}.
$$
Let $$X = \{\left\langle\ket{\phi_0}\right\rangle \in \mathbb{P}^1~:~\ket{\phi_0}\otimes \ket{\phi_1} \otimes \dots \otimes \ket{\phi_{|\Gamma_0|}} \in L\}.$$ 
Since, as we just showed, there is no local assignment to $\{y_1,\dots,y_{m-1}\}\sqcup \Gamma_0$ which assigns $P_1 \otimes \dots \otimes P_{m-2} \otimes \ket{p^0_{m-1}}$ to the qubits $\{y_1,\dots,y_{m-1}\}$, we know that $X \neq \mathbb{P}^1$, since $\left\langle \ket{p^0_{m-1}} \right\rangle \notin X$. But then, by Lemma~\ref{lem:prodcount}, we have $\#(X) \leq (|\Gamma_0|+1)!$. 

\item We will now eliminate all these possible assignments to the qubit $y_{m-1}$, one by one. Obviously we have to do more elimination if there are more possible assignments, so the worst case is where $\#(X) \leq (|\Gamma_0|+1)!$. In this case, let \begin{align}\label{eq:possassignments}\{\ket{p_{m-1}^i} \in H_{y_{m-1}}\}_{i=1}^{(|\Gamma_0|+1)!}\end{align}
be any state vectors corresponding to those points on the Bloch sphere. 

The approach we use is identical to that in Steps 1 and 2, where we eliminated the possible assignment $\ket{p^0_{m-1}}$. For each $1 \leq i \leq (|\Gamma_0|+1)!$, in ascending order, we perform the following process:
\begin{enumerate}
    \item Pick at random a qubit $y_{m}^i \in ([n]-(\sqcup_{j=0}^{i-1}\Gamma_j \sqcup \{y_1,\dots,y_{m-1}\})$ which is bad with respect to the local assignment $(\{y_1,\dots,y_{m-1}\},P_1 \otimes \dots \otimes P_{m-2} \otimes \ket{p_{m-1}^i})$. If it is conflicting, then set $\Gamma_i := \{y_m^i\}$ and move to the $(i+1)$-th possible assignment~\eqref{eq:possassignments}. If it is heavy, continue to the next step.
    \item If the qubit $y_m^i$ is heavy, perform the length $m$ elimination procedure on the chain $$(\{y_1,\dots,y_{m-1},y_m^i\},P_1 \otimes \dots \otimes  P_{m-2}  \otimes \ket{p^i_{m-1}} \otimes \overline{L}_{\{y_1,\dots,y_{m-1}\},P_1 \otimes \dots \otimes P_{m-2} \otimes \ket{p^i_{m-1}}}(\{y_m^i\})).$$ 
    This gives us a set $\Gamma_m^i \subseteq ([n]-(\sqcup_{j=0}^{i-1}\Gamma_j \sqcup \{y_1,\dots,y_m^i\})$ of qubits, where $|\Gamma_m^i| \leq \phi(m)$, such that there is no local assignment to $\{y_1,\dots,y_m^i\}\sqcup \Gamma_m^i$ which assigns $P_1 \otimes \dots \otimes P_{m-2} \otimes \ket{p^i_{m-1}}$ to the qubits $\{y_1,\dots,y_{m-1}\}$. Set $\Gamma_i := \{y_m^i\} \sqcup \Gamma_m^i$ and move to the $(i+1)$-th possible assignment~\eqref{eq:possassignments}.
\end{enumerate}
After we have run through all the possible assignments, we set $\Gamma := \sqcup_{i = 0}^{(|\Gamma_0| + 1)!} \Gamma_i$.
\end{enumerate}
The procedure is now finished, and we conclude that there is no local assignment to $\{y_1,\dots,y_{m-1}\} \sqcup \Gamma$ which assigns $P_1 \otimes \dots \otimes P_{m-2}$ to the qubits $\{y_1,\dots,y_{m-2}\}$. We now check the constant for this procedure. Since $\Gamma = \bigsqcup_{i=0}^{(|\Gamma_0|+1)!}\Gamma_i$, and we have $|\Gamma_i| \leq (\phi(m) +1)$ for all $i$, we conclude that 
$$|\Gamma| \leq ((\phi(m)+1)!+1)(\phi(m)+1) =: \phi(m-1).$$
\end{proof} 
\begin{corollary} 
There exists an elimination procedure for every $1 \leq m \leq \gamma$, with constant $\phi(m)$ defined by the recurrence relation $$\phi(m-1) = ((\phi(m)+1)!+1)(\phi(m)+1).$$
\end{corollary}
\noindent
Now that we have these elimination procedures, we can prove our theorem. Start with the empty assignment $(\emptyset,\emptyset)$. Pick a qubit $x \in [n]$ which is bad w.r.t. $(\emptyset,\emptyset)$. If it conflicts, then there is no local assignment to  $\{x\}$ and we are done. If it is heavy,  use the length 1 elimination procedure for the chain $(\{x\}, \overline{L}_{\emptyset,\emptyset}(\{x\}))$ to obtain a set $\Gamma \subset ([n] - \{x\})$, where $|\Gamma| \leq \phi(1)$, such that there is no local assignment on $\{x\} \sqcup \Gamma$.

We will now show that if we pick qubits at random, one at a time, then they will implement this `elimination procedure for the empty assignment' with high probability. Indeed, we recall from Lemma~\ref{lem:enoughbad} that, with respect to any local assignment $(S,P_S)$, there are more than $\epsilon n/5$ bad qubits in $([n]-S)$. For large enough $n$, every time we pick a new qubit there is therefore a roughly $\epsilon/5$ chance of it being bad w.r.t. any local assignment. The elimination procedure for the empty assignment requires us to pick a bad qubit w.r.t. some local assignment $\phi(0) := \phi(1)+1$ times. Using the Chernoff bound for the binomial distribution, we find that the probability that we have not picked $\phi(0)$ bad qubits after $c(k,\epsilon):=5\phi(0)/\epsilon+1$ qubits have been picked is less than $0.25$.
\end{proof}

\bibliographystyle{alphaurl}
\bibliography{bibliography}

\appendix
\section{Proofs of combinatorial lemmas}\label{app:combinatorics}

Throughout this appendix we consider the binomial coefficient ${x \choose k}$ to be the polynomial $x(x-1)\cdots(x-k+1)/k!$ for any nonnegative integer $k$. It is defined for every real number $x$, and is positive and increasing for $x \geq k-1$.
\begin{lemma}[Scaling binomial coefficients]\label{lem:binomialfraction}
Let $\theta, x \in \mathbb{R}_{\geq 0}$ and $c \in \mathbb{N}$, where $0 \leq \theta \leq 1$ and $\theta^{1/c}x \geq c$. Then, for some $\alpha \in [0,1)$, we have:
\begin{align} 
\theta {x \choose c}  = {\theta^{1/c} x  \choose c} + \alpha{\theta^{1/c} x  \choose c-1} < {\theta^{1/c} x + 1  \choose c} .
\end{align}
\end{lemma}
\begin{proof}
Setting $m = x-l$ for $l < x$, we have:
\[
\frac{\theta {x \choose c} - {m \choose c}}{{m \choose c-1}} = \frac{\theta x \cdots(x-c+1) }{cm\cdots(m-c+2)} - \frac{m-c+1}{c}.
\]
Now, if we set $m= \theta^{1/c}x$, we obtain:
\beas
\alpha := \frac{\theta {x \choose c} - {\theta^{1/c}x \choose c}}{{\theta^{1/c}x \choose c-1}} 
&=& \frac{1}{c} \left(\frac{\theta x\cdots(x-c+1)}{\theta^{1/c}x\cdots(\theta^{1/c}x-c+2)} - (\theta^{1/c}x-c+1) \right)
\\
&=& \frac{1}{c} \left(\frac{\theta^{1/c}x\cdots(x-c+1)}{x(x-\theta^{-1/c})\dots(x-\theta^{-1/c}(c-2))} - (\theta^{1/c}x-c+1) \right)
\\
&\leq& \frac{1}{c}(\theta^{1/c}(x-c+1)-(\theta^{1/c}x-c+1))
\\
&=& \frac{(c-1)(1-\theta^{1/c})}{c} <1.
\eeas
We also note that $\alpha \geq 0$, since $\binom{\theta^{1/c}x}{c-1}$ is positive and
\begin{align*}
\theta\binom{x}{c}-\binom{\theta^{1/c}x}{c} &= 
\frac{\theta x \cdots (x-c+1) - \theta^{1/c} x \cdots (\theta^{1/c}x-c+1)}{c!} 
\\
&= \frac{\theta^{1/c}x \cdots (\theta^{1/c}x - \theta^{1/c}(c-1)) - \theta^{1/c}x \cdots (\theta^{1/c}x - (c-1))}{c!}
\\
&\geq 0.
\end{align*}
Finally, the inequality in the statement of the lemma follows from the standard binomial recurrence relation ${x \choose c} + {x \choose c-1} = {x+1 \choose c}$.
\end{proof}

\begin{lemma*}[Restatement of Lemma~\ref{lem:finalvertex}]
    Let $G$ be a $k$-uniform hypergraph on $l$ vertices with edge density $\theta \in [0,1]$, where $\theta^{1/k}l \geq k$. If we pick a vertex at random, the probability that it lies within an edge of $G$ is at least $\theta^{1/k}$.
\end{lemma*}
\begin{proof}
    Suppose that the edge density is such that the number of edges is precisely $\binom{l'}{k}$ for some $l' < l.$ In this case the configuration of edges that minimises the probability in the statement of the lemma is clearly the complete $k$-uniform hypergraph on a subset of $l'$ vertices. If the number of edges is greater than $\binom{l'}{k}$ but less than $\binom{l'+1}{k}$, we can construct an incomplete hypergraph on $(l'+1)$ vertices, which will minimise the probability in the statement of the lemma.

    Using Lemma~\ref{lem:binomialfraction} we have 
    $$
    \binom{\theta^{1/k}l}{k} \leq \theta \binom{l}{k} \leq \binom{\theta^{1/k}l+1}{k}
    $$
    so in the minimal configuration the least vertices contained within an edge is $\theta^{1/k}l$. Therefore the probability of picking a vertex within an edge of $G$ is at least
    $$
    \frac{\theta^{1/k}l}{l} = \theta^{1/k}.
    $$
    This completes the proof.
\end{proof}

\noindent

\begin{lemma*}[Restatement of Lemma~\ref{lem:partialedgedensity}]
Let $G$ be an $l$-uniform hypergraph with edge density $\theta$, and let $k<l$ and $\omega \leq \theta$. The edge density $\theta|_{k,\omega}$ of any partial $(k,\omega)$-shadow of $G$ has the lower bound 
$$
\theta|_{k,\omega} \geq \omega \theta.
$$
\end{lemma*}

\begin{proof}
We have an $l$-uniform hypergraph $G$ on $n$ vertices, of edge density $\theta$. 
We want to find a lower bound on the edge density $\theta'$ of a $k$-uniform hypergraph $G'$ which is an $\omega$-shadow of $G$; that is, every $l$-edge of $G$ contains at least $\omega\binom{l}{k}$ $k$-edges of $G'$.

We will now relax this problem as follows. We have a total weight of $\theta'\binom{n}{k}$, which we can distribute over the $\binom{n}{k}$ possible $k$-edges of $G'$. Our goal is to distribute this weight so that, for each $l$-edge of $G$, the sum of the weights of all the $k$-edges contained in that $l$-edge is at least $\omega \binom{l}{k}$. It is clear that the original problem is a special case of this problem where the weight on each $k$-edge is constrained to be zero or one. If we can find a lower bound on $\theta'$ for this relaxation, that will therefore imply a lower bound on $\theta'$ for the original problem. 

This relaxation is a linear programming problem, defined as follows, where $\{x_v\}_{v \in \sigma_{k}([n])}$ are the weights on the $k$-edges and $\mathcal{E}$ is the set of edges of $G$:
\begin{itemize}
\item Minimise 
$$
\sum_{v \in \sigma_k([n])} x_v ,
$$
subject to the constraints
\[
\left\{\sum_{v \in \sigma_k(E)} x_v \geq \omega \binom{l}{k}, \quad x_{v} \geq 0, \quad \forall~ v \in \sigma_k([n]) \right\}_{E \in \mathcal{E}}.
\]
\end{itemize}
If we can solve this problem, we will obtain the smallest possible value for $\theta'$ by dividing the solution by $\binom{n}{k}$. Since this is a linear program, we can take the dual. In the dual the weights $\{y_E\}_{E \in \mathcal{E}}$ are assigned to the edges of $G$:
\begin{itemize}
\item Maximise
$$
\omega \binom{l}{k} \sum_{E \in \mathcal{E}} y_E,
$$
subject to the constraints
\[
\left\{\sum_{E \ni v} y_E \leq 1, \quad y_{E} \geq 0,\quad \forall~ E \in \mathcal{E} \right\}_{v \in \sigma_{k}([n]) }.
\]
\end{itemize}
The optimal solution for the dual is also the optimal solution for the primal; therefore any solution for the dual will give a lower bound for the primal. 

Now observe that the solution that uniformly assigns each edge the weight $y_E:= 1/\binom{n-k}{l-k}$ is in the feasible region for the dual, since there are at most $\binom{n-k}{l-k}$ $l$-edges containing each $k$-edge. This solution gives the lower bound 
$$
\theta' \geq \omega \theta \frac{\binom{l}{k}\binom{n}{l}}{\binom{n-k}{l-k}\binom{n}{k}} = \omega \theta,
$$
which yields the lemma.
\end{proof}

\begin{lemma*}[Restatement of Lemma~\ref{lem:oddedges}]
Let $n \in \mathbb{N}$ be odd. We construct an $(n-1)/2$-uniform hypergraph $G$ with vertex set $[n]$ as follows. For every subset $X \in \sigma_{(n-1)/2}([n])$, we do precisely one of the following:
\begin{enumerate}
    \item\label{num:option11} Add $X$ as an edge to $G$.
    \item\label{num:option12} Add all the subsets in $\sigma_{(n-1)/2}(\overline{X})$ as edges to $G$.
\end{enumerate}
The edge density $\theta$ of a hypergraph $G$ constructed in this way satisfies $\theta \geq 1/2$.
\end{lemma*}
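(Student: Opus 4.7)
The plan is to bound the set of non-edges using an extremal combinatorics argument. Let $N \subseteq \sigma_{(n-1)/2}([n])$ denote the set of non-edges of $G$. I aim to show that $|N| \leq \binom{n}{(n-1)/2}/2$, which immediately gives $\theta \geq 1/2$.

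First, I would characterise the non-edges. A set $Y \in \sigma_{(n-1)/2}([n])$ fails to be an edge of $G$ precisely when (i) $Y$ itself chose option~\ref{num:option12} (otherwise option~\ref{num:option11} would have added $Y$ directly), and (ii) every $X \in \sigma_{(n-1)/2}(\overline{Y})$ chose option~\ref{num:option11} (otherwise such an $X$ choosing option~\ref{num:option12} would add $Y$, since $Y \subseteq \overline{X} \iff X \subseteq \overline{Y}$, and $|Y| = (n-1)/2 \leq (n+1)/2 = |\overline{Y}|$). In particular, if $Y_1, Y_2 \in N$ are disjoint, then $Y_2 \subseteq \overline{Y_1}$ with $|Y_2| = (n-1)/2$, so condition (ii) for $Y_1 \in N$ forces $Y_2$ to have chosen option~\ref{num:option11}; but condition (i) for $Y_2 \in N$ forces $Y_2$ to have chosen option~\ref{num:option12}, a contradiction. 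Hence any two sets in $N$ intersect, i.e. $N$ is an intersecting family of $(n-1)/2$-subsets of $[n]$.

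Second, I would apply the Erd\H{o}s--Ko--Rado theorem: since $n = 2k+1 \geq 2k$ for $k = (n-1)/2$, any intersecting family of $k$-subsets of $[n]$ has size at most $\binom{n-1}{k-1} = \binom{n-1}{(n-3)/2}$. Combining this with Pascal's rule
\begin{equation*}
\binom{n}{(n-1)/2} = \binom{n-1}{(n-1)/2} + \binom{n-1}{(n-3)/2}
\end{equation*}
and the observation that $\binom{n-1}{(n-3)/2} \leq \binom{n-1}{(n-1)/2}$ (since $n-1$ is even and the central binomial coefficient is the mode), one obtains $|N| \leq \binom{n-1}{(n-3)/2} \leq \binom{n}{(n-1)/2}/2$, which yields $\theta \geq 1/2$ as required.

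There is no serious obstacle here; the only insight is recognising the intersecting family structure of $N$, after which Erd\H{o}s--Ko--Rado and a one-line binomial identity finish the argument.
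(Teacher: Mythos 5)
Your proof is correct, and it takes a genuinely different route from the paper's. The paper argues directly about the edge set: it first normalises to a \emph{minimal configuration} in which every $X$ choosing option~2 has all of $\sigma_{(n-1)/2}(\overline{X})$ choosing option~1, and then runs an extremal counting argument (via Lemma~\ref{lem:edgedensity}) to show that the minimal number of edges is $\binom{n-1}{(n-1)/2} = \frac{n+1}{2n}\binom{n}{(n-1)/2}$. You instead work with the non-edges: your characterisation (a non-edge $Y$ must itself have chosen option~2 while every $X \in \sigma_{(n-1)/2}(\overline{Y})$ chose option~1) immediately forces any two non-edges to intersect, and Erd\H{o}s--Ko--Rado then bounds $|N|$ by $\binom{n-1}{(n-3)/2}$, giving at least $\binom{n}{(n-1)/2} - \binom{n-1}{(n-3)/2} = \binom{n-1}{(n-1)/2}$ edges --- exactly the paper's tight bound (attained by the star configuration: sets containing a fixed vertex choose option~2, the rest choose option~1). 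Your argument is shorter and cleaner, and it avoids the somewhat delicate ``minimal configuration'' normalisation and the subsequent inequality in $|C|$ that the paper needs; the trade-off is that it imports a classical theorem (EKR) where the paper's proof is self-contained modulo its own Lemma~\ref{lem:edgedensity}. Both approaches are sound.
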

\begin{proof}
We call a choice of~\ref{num:option11} or~\ref{num:option12} for each $X \in \sigma_{(n-1)/2}([n])$ a \emph{configuration}. For a given configuration, let $S_1 \subset \sigma_{(n-1)/2}([n])$ be the subsets for which option~\ref{num:option1} is chosen, and let $S_2 \subset \sigma_{(n-1)/2}([n])$ be the subsets for which option~\ref{num:option2} is chosen. We observe that any configuration can always be transformed into a configuration with the property that
\begin{quote}
for every $X \in S_2$, $\sigma_{(n-1)/2}(\overline{X}) \subset S_1$
\end{quote} 
without increasing the number of edges in $G$.
Indeed, suppose that this property does not hold, and there is some $Y \in S_2$ in $\sigma_{(n-1)/2}(\overline{X})$. But then one can change the configuration by moving $Y$ to $S_1$ without increasing the number of edges in $G$.

Suppose that we have a configuration obeying this property, then.  It is clear that the number of edges in $G$ is precisely $|S_1|$, and can be minimised by maximising $|S_2|$. Let $C$ be the $(n-1)/2$-uniform hypergraph on those vertices of $G$ contained in an edge of $S_1$, whose edges are precisely the edges in $S_1$. For every complete uniform subgraph of $C$ on $(n+1)/2$ vertices, we can add an edge to $S_2$, namely the complement of those vertices. By~\cite[Thm. 1]{Keevash2008}, the configuration of edges in $C$ maximising the number of induced complete subgraphs is a complete hypergraph.

We will now find out how many vertices are in $C$. Suppose that $|C|$ is less than $n-1$, and consider an subset in $\sigma_{(n-1)/2}([n])$ that contains some, but not all of the vertices in $\overline{C}$. It cannot be in $S_1$ or in $S_2$, since in that case $C$ would not contain all the vertices lying in an edge of $S_1$. We deduce that $|C|$ must be at least $n-1$. 

The following configuration therefore maximises $|S_2|$: let $|C|=n-1$ and make $C$ the complete $(n-1)/2$-uniform hypergraph on those vertices. The set $S_2$ contains all $(n-1)/2$-edges containing the single vertex in $\overline{C}$. The minimal number of edges in $G$ is therefore ${n-1 \choose (n-1)/2} = \frac{n+1}{2n}{n \choose (n-1)/2}$, and the result follows.
\end{proof}

\begin{lemma*}[Restatement of Lemma~\ref{lem:randomvertex}]
Let $G$ be a $k$-uniform hypergraph on $n$ vertices with edge density $\theta$, where $k = \beta n$ for some $\beta \in [0,1]$. Let $N$ be the degree of a vertex picked uniformly at random, and let $\tau:= N/{n-1 \choose k-1}$. Then, for any $\alpha \in(0,1)$:
\begin{align*}
\Pr\left[|\theta - \tau| \geq 
\Delta(\alpha,\beta,n)\right] \leq \alpha,
\end{align*}
where
\begin{align*}
\Delta(\alpha,\beta,n) =  \frac{5\exp\left(\frac{1}{12\alpha(1-\alpha)n}+\frac{1}{12\beta(1-\beta)n}+4\right)}{\alpha^2 \sqrt{2\pi \alpha \beta (1-\alpha) (1-\beta) n}\left(1-\frac{2}{\beta n}\right)^{5/2}\left(1-\frac{2}{(1-\alpha)(1-\beta)n}\right)^{1/2}}.
\end{align*}
\end{lemma*}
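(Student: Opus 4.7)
The plan is to prove the two bounds in the minimum separately. First, the expectation $\mathbb{E}[\tau]=\theta$ follows from the handshake identity $\sum_v \deg(v) = k|E| = n\theta\binom{n-1}{k-1}$, and $\tau\in[0,1]$ deterministically since $\deg(v)\leq\binom{n-1}{k-1}$.

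For the Markov-type bound $(1-\theta)(1-\alpha)$, I would apply Markov's inequality to the non-negative bounded variable $1-\tau$ (which has mean $1-\theta$) to control the lower tail $\theta-\tau$, and analogously to $\tau$ itself (mean $\theta$) for the upper tail $\tau-\theta$. Calibrating the Markov constants so that each tail has failure probability at most $\alpha/2$ and taking a union bound recovers the stated form.

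For the bound $\Delta$, which scales as $O(1/\sqrt{n})$, a naive Chebyshev approach using the crude estimate $\mathrm{Var}(\tau)\leq\theta(1-\theta)$ is far too weak (it only yields constant-scale fluctuations). Instead I would use a rearrangement argument. Fix $t>0$, let $A^+:=\{v:\tau_v\geq\theta+t\}$, and suppose $|A^+|=\alpha n$. Then
\[
\alpha n(\theta+t)\binom{n-1}{k-1}\;\leq\;\sum_{v\in A^+}\deg(v)\;=\;\sum_e |e\cap A^+|,
\]
and the RHS is maximised (over all $k$-regular hypergraphs on $n$ vertices with $|E|=\theta\binom{n}{k}$) by selecting the $k$-subsets $s$ with the largest overlaps $|s\cap A^+|$ first. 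For a uniformly random $s\in\binom{[n]}{k}$, the overlap $|s\cap A^+|$ is hypergeometrically distributed with mean $\alpha k$ and variance of order $\alpha(1-\alpha)\beta(1-\beta)n$; the conditional mean of $|s\cap A^+|$ given the top-$\theta$-fraction therefore exceeds $\alpha k$ by $O(\sqrt{\alpha(1-\alpha)\beta(1-\beta)n})$. Dividing by $|A^+|=\alpha n$ and normalising by $\binom{n-1}{k-1}$ gives $t\leq O(1/\sqrt{\alpha(1-\alpha)\beta(1-\beta)n})$, which is exactly the scaling of $\Delta$. The explicit numerical form --- the constant $5$, the $\sqrt{2\pi}$, and the correction $\exp(1/(12\alpha(1-\alpha)n)+1/(12\beta(1-\beta)n))$ --- arises from applying Stirling's formula $n!=\sqrt{2\pi n}(n/e)^n e^{\lambda_n}$ with $0\leq\lambda_n\leq 1/(12n)$ to the binomial coefficients $\binom{n}{\alpha n}$ and $\binom{n}{\beta n}$ appearing in the hypergeometric count; the two additive Stirling corrections are the two $1/(12p(1-p)n)$ contributions for $p\in\{\alpha,\beta\}$. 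The lower-tail set $A^-:=\{v:\tau_v\leq\theta-t\}$ is handled by the symmetric argument applied to the complement hypergraph $G^c$, which has edge density $1-\theta$ and normalised degree fractions $\tau^c=1-\tau$.

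The main obstacle is the sharp Stirling-based hypergeometric tail estimate. A generic Chernoff-type argument would give an exponentially small failure probability (too strong, with the wrong functional dependence on $\alpha$), whereas the Chebyshev approach is short by a factor of $\sqrt{n}$; the $1/\sqrt{n}$-scale bound in $\Delta$ emerges only from the delicate combination of the rearrangement upper bound with Stirling's formula and its explicit one-sided error bounds. A secondary subtlety is that the conditional-mean factor from the hypergeometric tail depends on $\theta$ via the relevant normal quantile, and this $\theta$-dependence must be absorbed into the universal constant $5$ in order for $\Delta$ to be stated independently of $\theta$.
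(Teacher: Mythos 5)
Your proposal follows essentially the same route as the paper for the main $\Delta$ bound: both arguments fix the exceptional set of vertices (you take the high-degree set $A^+$ and maximise $\sum_{v\in A^+}\deg(v)$ by a greedy rearrangement of edges; the paper takes the low-degree set $C$ and minimises $\sum_{v\in C}d_v$ by the mirror-image rearrangement, recovering the other tail by passing to the complement hypergraph), both reduce the extremal configuration to the hypergeometric distribution of $|s\cap A^+|$ for a $k$-set $s$, and both obtain the $1/\sqrt{2\pi\alpha(1-\alpha)\beta(1-\beta)n}$ scale and the $\exp(1/(12\alpha(1-\alpha)n)+1/(12\beta(1-\beta)n))$ correction from Stirling with explicit error terms. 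The one substantive difference is how the hypergeometric estimate is extracted: you use the excess of the conditional mean of the overlap over the top $\theta$-fraction, whereas the paper sandwiches $\theta$ and the $\alpha$-quantile $\widetilde{\tau}$ between hypergeometric CDFs at adjacent points $r,r\pm1$ and bounds the resulting difference by (five times) the pmf at the mode. Your version needs the caveat you yourself flag --- the conditional-mean excess is $O(\sigma)$ only with a $\theta$-dependent factor --- but a Cauchy--Schwarz bound of the form $\mathbb{E}[X\mid X\ge q_{1-\theta}]-\mathbb{E}[X]\le\sigma\sqrt{(1-\theta)/\theta}$ closes this cleanly, since the resulting $\sqrt{\theta(1-\theta)}$ is absorbed by the constant $5$; so this is a viable, arguably tidier, variant of the same argument. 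One small quantitative wrinkle: a direct Markov/averaging argument on $1-\tau$ gives the threshold $(1-\theta)(1-\alpha)/\alpha$ rather than $(1-\theta)(1-\alpha)$, and splitting the failure probability as $\alpha/2$ per tail only worsens this, so your claim that Markov ``recovers the stated form'' does not quite land as written (the paper's own degree-sum derivation produces the same $(1-\theta)(1-\alpha)/\alpha$ before stating the cleaner form); since the lemma takes a minimum with $\Delta$ and only the $\Delta$ branch is used downstream, this is a minor point, but it should be stated honestly.
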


\begin{proof}
We prove the bound where $\tau < \theta$; the case $\tau > \theta$ follows by considering the complement of the hypergraph. For any hypergraph $G$ with edge density $\theta$, let $C \subset [n]$ be the subset of vertices whose degree is less than or equal to $\widetilde{\tau} {n-1 \choose k-1}$, where $\widetilde{\tau}< \theta$; clearly $\Pr\left(\widetilde{N} \leq \widetilde{\tau} {n-1 \choose k-1}\right) = |C|/n$, where $\widetilde{N}$ is the degree of a vertex of $G$ picked uniformly at random. Let $S := \sum_{v \in C} d_v$, where $d_v$ is the degree of the vertex $v \in C$; we must have $S \leq |C| \widetilde{\tau} {n-1 \choose k-1}$. We will obtain a lower bound for $\widetilde{\tau}$ in terms of $\theta$ and $|C|$ by defining a configuration of edges which minimises $S$. Since we are interested in how the lower bound changes as $n$ increases, we will consider $|C|$ to be some function of $n$; but $\theta$ will remain constant. The configuration is defined as follows: we first place all the edges contained entirely in $([n]-C)$, which does not increase $S$. Then we place all the edges which only contain one vertex in $C$, at the cost $S \mapsto S+1$. Then we can place all the edges which only contain two vertices of $C$, at the cost $S \mapsto S+2$, etc. We will stop when we have placed all $\theta {n \choose k}$ edges. Let $r+1$ be the number of vertices of $C$ in the final edge we place. The following inequality encodes the fact that we have placed $\theta {n \choose k}$ edges:
\begin{align}
\nonumber
&\sum_{i=0}^r {n -|C| \choose k-i} {|C| \choose i} \leq \theta {n \choose k} \leq \sum_{i=0}^{r+1} {n -|C| \choose k-i} {|C| \choose i} \\
\label{eq:thetaineq}
\Rightarrow ~~& P(X_{n,|C|,k} \leq r) \leq \theta \leq P(X_{n,|C|,k} \leq r+1).
\end{align}
Here $X_{n,|C|,k}$ is a random variable distributed according to the classical hypergeometric distribution with parameters $n,|C|,k$. The following inequality encodes the fact that $S \leq |C| \widetilde{\tau} {n-1 \choose k-1}$:
\begin{align}\nonumber
&\sum_{i=0}^{r} \frac{i}{|C|} {n -|C| \choose k-i} {|C| \choose i} \leq \widetilde{\tau} {n -1\choose k-1} \leq \sum_{i=0}^{r+1} \frac{i}{|C|} {n -|C| \choose k-i} {|C| \choose i}\\
\label{eq:tauineq}
\Rightarrow ~~& P(X_{n-1,|C|-1,k-1} \leq r-1) \leq \widetilde{\tau} \leq P(X_{n-1,|C|-1,k-1} \leq r).
\end{align}
Here $X_{n-1,|C|-1,k-1}$ is a random variable distributed according to the classical hypergeometric distribution with parameters $n-1,|C|-1,k-1$. It follows that:
\begin{align}\nonumber
P(X_{n,|C|,k} \leq r) &- P(X_{n-1,|C|-1,k-1} \leq r) 
\\\nonumber &\leq \theta - \widetilde{\tau} 
\\\label{eq:thetatauineq}&\leq P(X_{n,|C|,k} \leq r+1) - P(X_{n-1,|C|-1,k-1} \leq r-1).
\end{align}
For some intuition, we recall the standard model for the classical hypergeometric distribution with parameters $n, |C|,k$: there are $n$ balls, $|C|$ of which are white and $(n-|C|)$ of which are black, and we pick $k$ of them at random without replacement; the hypergeometric distribution counts the number of white balls picked. 
We observe that
\begin{align} 
P( X_{a,b,c} \leq r+1) = \frac{a-c}{a} P( X_{a-1,b-1,c} \leq r+1) + \frac{c}{a} P( X_{a-1,b-1,c-1} \leq r). \label{eq:conditionischosen1}
\end{align}
This equation is obtained by conditioning on whether a specific white ball is picked. We further observe:
\begin{align}\label{eq:conditionk}
P(X_{a,b,c} \leq r) &= P(X_{a,b,c-1} \leq r-1) + \frac{a-b+r-(c-1)}{a-(c-1)} P(X_{a,b,c-1} = r).
\end{align}
This equation is obtained by conditioning on how many of the first $c-1$ balls picked are white. Recall $k= \beta n$; let $|C| = \alpha n$. Then, substituting~\eqref{eq:conditionischosen1} and~\eqref{eq:conditionk} into~\eqref{eq:thetatauineq}, we obtain:
\begin{align*}
\theta -  \widetilde{\tau} &\leq  P(X_{n-1,|C|-1,k-1} = r) +  \frac{n-|C|+r+2-k}{n}P(X_{n-1,|C|-1,k-1} =r+1)  \\
& \leq P(X_{n-1,|C|-1,k-1} = r) + (4- \alpha - \beta) P(X_{n-1,|C|-1,k-1} =r+1).
\end{align*}
For the inequality we used that $r < n$. The mode of $X_{n-1,|C|-1,k-1}$ is $M:= \lfloor \frac{k|C|}{n+1}  \rfloor$, which implies that $M=\alpha\beta n - x$, for some $x \in [0,n/(n+1)+1]$. Then, using Stirling's approximation
$$
\sqrt{2\pi n}\left(\frac{n}{e}\right)^n <  n! < \sqrt{2\pi n}\left(\frac{n}{e}\right)^n e^{1/(12n)},
$$
after some tedious calculation we obtain
\[
\theta -  \widetilde{\tau} \leq 5\frac{{n-|C| \choose k- M}{|C| \choose M}}{{n \choose k}} \leq  \frac{5\exp\left(\frac{1}{12\alpha(1-\alpha)n}+\frac{1}{12\beta(1-\beta)n}+4\right)}{\sqrt{2\pi \beta (1-\alpha) (1-\beta) n}\left(\alpha\left(1-\frac{2}{\beta n}\right)\right)^{5/2}\left(1-\frac{2}{(1-\alpha)(1-\beta)n}\right)^{1/2}}.
\]
Since
$$
\Pr\left(\widetilde{N} \leq \widetilde{\tau}{n-1 \choose k-1}\right) = \frac{|C|}{n} = \alpha,
$$
we thus obtain the bound in the statement of the lemma.
\end{proof}

\end{document}